\documentclass[11pt,oldfontcommands,a4paper,twoside,openright]{memoir}

\settrimmedsize{297mm}{210mm}{*}
 \setlength{\trimtop}{0pt}
 \setlength{\trimedge}{\stockwidth}
 \addtolength{\trimedge}{-\paperwidth}
 \settypeblocksize{634pt}{448.13pt}{*}
 \setulmargins{4cm}{*}{*}
 \setlrmargins{*}{*}{1.5}
 \setmarginnotes{17pt}{51pt}{\onelineskip}
 \setheadfoot{\onelineskip}{2\onelineskip}
 \setheaderspaces{*}{2\onelineskip}{*}
 \checkandfixthelayout
 
 \usepackage[colorlinks=true,linkcolor=blue,citecolor=red,pdfpagelabels]{hyperref}


\usepackage{multibib}

\usepackage[stable]{footmisc}
\usepackage{verbatim}

\usepackage{longtable}
\usepackage{multirow}
\usepackage{colortbl}

\usepackage{booktabs}
\usepackage{braket}
\usepackage{amsmath}
\usepackage{bbm}
\usepackage{amsthm}

\usepackage[utf8]{inputenc}

\definecolor{chaptercolor}{gray}{0.8}

\widowpenalty=1000
\clubpenalty=1000

\usepackage[adobe-utopia]{mathdesign}
\usepackage[T1]{fontenc}


\setsecnumdepth{subsubsection}

\chapterstyle{pedersen}

\pagestyle{ruled}

\definecolor{tud1b}{RGB}{0,90,169}
\definecolor{tud9b}{RGB}{230,0,26}
\definecolor{tud6a}{RGB}{255,224,92}

\newcites{me}{Included Papers}

\usepackage{tikz}
\usepackage{wasysym}
\usetikzlibrary{backgrounds,fit,decorations.pathreplacing,positioning,calc}

\newtheorem{theorem}{Theorem}
\newtheorem{lemma}{Lemma}
\newtheorem{corollary}{Corollary}

\newcommand{\bx}{\ensuremath{\mathbf{x}}}
\newcommand{\bz}{\ensuremath{\mathbf{z}}}
\newcommand{\td}[2]{\ensuremath{\tfrac12\Big\|#1-#2\Big\|_1}}
\newcommand{\id}{\ensuremath{\mathbbm{1}}}
\newcommand{\idb}{\makebox[8pt][c]{\id}}
\newcommand{\idx}{\makebox[8pt][c]{\id}}

\newcommand{\etal}{~\emph{et al.}\ }
\newcommand{\etalsp}{~\emph{et al.}~}

\hypersetup{%
  pdftitle={The Physics of Quantum Information},
  pdfauthor={Joseph M. Renes},
  pdfsubject={quantum information},
  pdfview=FitH,
  pdfstartview=FitV
}

\newlength{\longtablewidth}
\setlength{\longtablewidth}{0.7\linewidth}
\addtolength{\longtablewidth}{-\marginparsep}
\addtolength{\longtablewidth}{-\marginparwidth}


\title{\Huge The Physics of Quantum Information\\
{\LARGE Complementarity, Uncertainty, and Entanglement}}
\author{Joseph M.\ Renes}
\date{12 December 2012}

\begin{document}

\frontmatter

\setcounter{page}{1}
\pagenumbering{roman}

\begin{titlingpage}
 \aliaspagestyle{titlingpage}{empty}
\setlength{\droptitle}{70pt}
\maketitle

\vspace{8cm}
\centering
Habilitation\hspace{1cm} TU Darmstadt, Fachbereich Physik\hspace{1cm} 2011
\end{titlingpage}

\cleardoublepage

\tableofcontents*

\cleardoublepage

\setcounter{page}{1}
\pagenumbering{arabic}

\mainmatter

\chapter*{Preface}
\addcontentsline{toc}{chapter}{Preface}
\markboth{Preface}{Preface}

Complementarity is one of the central mysteries of quantum mechanics. First put forth by Bohr~\cite{bohr_quantenpostulat_1928,bohr_quantum_1928,bohr_discussion_1949}, complementarity holds that the attributes of a physical system familiar from classical mechanics do not all simultaneously exist and are not entirely independent of how they are measured. Famously, if the momentum of a particle is known then its position must be unknown, and \emph{vice versa}, a fact encapsulated in Heisenberg's uncertainty relation $\Delta x\Delta p\geq\hbar/2$~\cite{heisenberg_ueber_1927}. Even more dramatic is the wave-particle duality encountered in Young's double-slit experiment, which illustrates the important role of observation. Light passing through the double slit setup produces an interference pattern on a screen beyond the slits, as would be characteristic of a wave. But a closer examination reveals that light arrives in particle-like ``packets'' at the screen, and the interference pattern only arises as a statistical average of these particle arrival events. This particle picture tempts us to observe which slit the light went through, which we find destroys the interference pattern! Feynman regarded this bizarre phenomena as characteristic of all the seemingly-paradoxical quantum behavior, claiming that the double-slit experiment is ``impossible, \emph{absolutely impossible} to describe classically, [and which] has in it the heart of quantum mechanics'', and that ``in reality, it contains the \emph{only} mystery'' (emphasis original)~\cite{feynman_feynman_1970}.

The overarching goal of this thesis is to demonstrate that complementarity is also at the heart of quantum information theory, that it allows us to make (some) sense of just what information "quantum information" refers to, and that it is useful in understanding and constructing quantum information processing protocols. The detailed research results which form the basis of these claims are to be found in the included papers, and the aim here is to present an overview comprehensible to a more general audience.\footnote{The included papers are referenced in alphabetical style, while references to other works are numeric.} 

As we shall see in Chapter~\ref{chap:intro}, quantum information can heuristically be thought of as a kind of combination of two types of normal ``classical'' information, specifically, classical information about the result of measuring one of two complementary observables. Due to the uncertainty principle, we can expect both pieces of information are not simultaneously realizable, and indeed the uncertainty principle will play a central quantitative role throughout this work. Particularly relevant will be the entropic uncertainty relation of \citeme{renes_conjectured_2009} and its generalization in \citeme{berta_uncertainty_2010}, which state that the more that can be known by one party about one observable, the less can be known by another party about a complementary observable.
That complementary observables play an important role in quantum information theory is not new to this thesis, and Chapter~\ref{chap:ill} discusses several fundamental quantum information processing tasks based on their use, such as teleportation and quantum error-correction. This chapter also provides some relevant formal background for the remainder of this work and establishes the notation used herein. 

Chapter~\ref{chap:char} begins the overview of the new results obtained in the included papers. Here we show that information about complementary observables not only plays an important role, but indeed a central one, and that possession of both complementary pieces of classical information is strictly equivalent to the existence of entanglement between the physical system the information pertains to and the system in which the information is stored. Moreover, the uncertainty principle provides a dual characterization, saying that entanglement between these two systems exists when the ``environment'', i.e.\ any and all other degrees of freedom, has no information about either complementary observable. Both characterizations can be modified to describe secret keys useful in cryptography instead of entangled states. Because Chapter~\ref{chap:char} gathers and mixes results from several of the included papers, it is entirely self-contained, whereas subsequent chapters do not go into as much detail. 

In Chapter~\ref{chap:proc} we show that this complementary approach is also useful in constructing quantum information processing protocols and understanding why they work. Especially relevant is the process of entanglement distillation, that is, extracting maximal entanglement from a imperfectly-entangled bipartite resource system. The entanglement distillation process can be built up from two instances, one for each of two complementary observables, of a simpler distillation process for classical information called information reconciliation or data compression with side information. Here partial classical correlation between two systems is refined into maximal correlation, and reconciling classical information about two complementary observables. Protocols for entanglement distillation can then be adapted to a large variety of quantum information processing tasks, such as quantum communication over noisy channels or distillation of secret keys.

Chapter~\ref{chap:duality} extends the duality in characterizing entanglement afforded by the uncertainty principle to two fundamental information processing tasks, the information reconciliation task of establishing  correlations with the first party on the one hand, and the task of \emph{removing} all correlations with the second party on the other. 
The latter is known as privacy amplification, and it turns out that the ability to perform one protocol implies the ability to perform the other in certain circumstances. This duality also implies alternative methods of entanglement distillation, in particular one which proceeds by destroying all classical correlations with the environment that pertain to two complementary observables. We shall also see that information reconciliation and privacy amplification can be combined to enable classical communication over noisy quantum channels. 

Finally, Chapter~\ref{chap:qkd} describes the usefulness of this approach to establishing the security of quantum key distribution (QKD). QKD is perhaps the most natural setting in which the uncertainty principle and corresponding issues of complementarity are immediately relevant, as the goal of this protocol is to establish a secret key between two spatially-separated parties, a shared piece of classical information which no one else should know. Since the uncertainty principle can be understood as a limitation on who can know how much about what sorts of information, we shall see that complementarity-based arguments form the basis for the security of QKD  protocols. These allow us to increase the security threshold, the maximum amount of tolerable noise, of several protocols beyond the previously-known values.   

The following table summarizes which included papers form the basis for the various sections.

\begin{table}[h]
\begin{center}
\begin{tabular}{lcccc}
Chapter & \multicolumn{4}{c}{Section}\\[5mm]
\rowcolor{lightgray!40} 1 Introduction & 1.1 & 1.2 & 1.3 &\\
& & \multicolumn{2}{l}{\citeme{renes_conjectured_2009}, \citeme{berta_uncertainty_2010}}\\[4mm]
\rowcolor{lightgray!40} 2 Illustrations \& Motivations & & --- &&\\
&&&&\\
\rowcolor{lightgray!40} 3 Characterizing Quantum Information & 3.1 & 3.2 & 3.3 & 3.4\\
& \citeme{renes_physical_2008} & \citeme{renes_duality_2010} & \citeme{renes_conjectured_2009} & \citeme{renes_physical_2008}\\[4mm]
\rowcolor{lightgray!40} 4 Processing Quantum Information & 4.1 & 4.2 & 4.3 &\\
& \citeme{renes_physical_2008} & \citeme{boileau_optimal_2009} & \citeme{renes_physical_2008}\\[4mm]
\rowcolor{lightgray!40} 5 Duality of Protocols & 5.1 & 5.2 & 5.3 &\\
& \citeme{renes_duality_2010} & new & \citeme{renes_noisy_2010}\\[4mm]
\rowcolor{lightgray!40} 6 Security of QKD & 6.1 & 6.2 & 6.3 &\\
& --- & \citeme{renes_generalized_2006} & \citeme{renes_noisy_2007}\\
&&&\citeme{smith_structured_2008}\\&&&\citeme{kern_improved_2008}
\end{tabular}
\end{center}
\end{table}

\chapter{Introduction: What is Quantum Information?}
\label{chap:intro}

At a stroke, Shannon's landmark 1948 publication \emph{A Mathematical Theory of Communication}~\cite{shannon_mathematical_1948} established the field of information theory, laying out the fundamental lines of inquiry and answering some of the important basic questions. The fundamental problem, according to Shannon, ``is that of reproducing at one point either exactly or approximately a message selected at another point.'' The different points may be different places, in which case we are interested in transmitting messages from one party to another, such as in a telephone conversation, or they could be different times, and the message should be reliably stored, such as on a sheet of paper. The physical systems used to convey the message carry \emph{information}, which is measured by the entropy in units of bits, short for binary digits.\footnote{Interestingly, Vannevar Bush had already used the phrase of  ``bits of information'' in 1936 to describe information encoded into punchcards~\cite{bush_instrumental_1936}, though his meaning is different from Shannon's.}

The fact that abstract information must always be instantiated in some physical system and that this results in a connection between physics and information theory was stressed by Landauer. He observed this implies that logically irreversible operations, like erasure of information, are therefore physically irreversible and must be driven by a source of energy~\cite{landauer_irreversibility_1961, landauer_information_1991}. This was later used to resolve the paradox of Maxwell's Demon in which an intelligent being can apparently violate the second law by sorting the molecules of a gas into hot (fast) and cold (slow)~\cite{leff_maxwells_2002}. Building on Szil\'ard's simplification of the paradox to a one-atom gas occupying either the left or right side of a divided container~\cite{szilard_ueber_1929}, Bennett showed that the work gained by the demon is precisely balanced by the work needed to reset the demon's memory in a cyclic process~\cite{bennett_thermodynamics_1982}. It should be noted that Szil\'ard's simplification of the problem to a gas occupying one of two nearly anticipates the information-theoretic idea of a bit, also demonstrating the connections between these two fields.

The field of quantum information grew out of this connection by asking the question: What happens to information processing and information theory in general when the information carriers are described by quantum mechanics? 
One immediate implication is the possibility of quantum superpositions of information states of a bit. Instead of just the usual $0$s and $1$s, which might be encoded quantum mechanically as $\ket{0}$ and $\ket{1}$, we can also have states of the form $\alpha\ket{0}+\beta\ket{1}$ for $\alpha,\beta\in\mathbbm{C}$ and $|\alpha|^2+|\beta|^2=1$. This change in structure requires us to reexamine the entirety of Shannon's information theory, rather than being able to only slightly modify the results to account for quantum effects, as pointed out by Ingarden~\cite{ingarden_quantum_1976}: ``The old theory [Shannon's theory] cannot be improved only by inserting into it some quantum formulae.''\footnote{Ingarden also gives a very lucid description of the historical development of quantum information theory for the interested reader.}

By now, a new, explicitly quantum information theory has been constructed by asking many of the same questions as before, but answering them with the tools and methods of quantum mechanics; see for instance the textbook of Nielsen and Chuang~\cite{nielsen_quantum_2000}. It has also been possible to adapt many of the techniques of usual, \emph{classical} information theory to the quantum setting. For instance, Schumacher's result that quantum information emitted from a source can be compressed at a rate equal to the von Neumann entropy of the source follows Shannon's original result quite closely~\cite{schumacher_quantum_1995}. Nevertheless, in contrast to the classical case, we are still left with the question of what quantum information is information about. 


The core theme of this thesis is that quantum information is in a certain sense a combination of two pieces of classical information, information about two physical observables which are \emph{complementary} in the sense first put forth by Bohr~\cite{bohr_quantenpostulat_1928,bohr_quantum_1928,bohr_discussion_1949} and exemplified by the wave-particle duality in the double-slit experiment~\cite{feynman_feynman_1970}. Moreover, this point of view is useful in understanding and constructing protocols in quantum information theory. To appreciate this view of quantum information more clearly, the focus of this chapter, it is useful to first make the notions of classical information concrete in the following exceedingly simple game, the information game.

\section{Understanding Classical Information via the Information Game}

The information game has two players, Alice and Bob, and begins with Bob placing a coin, either heads or tails, in a box, and giving the box to Alice. At some point later she asks Bob whether she will see heads or tails when she opens the box. Bob's goal is to win the game by correctly matching Alice's observation. 

Is there a strategy with which Bob can always win the game? Of course. For instance, Bob could always place the coin heads up in the box and answer ``heads'' whenever Alice comes asking. He could also just randomly place the coin heads up or down in the box, as long as he remembers which it was when Alice asks; this task of remembering is precisely Shannon's fundamental problem. To solve it, Bob could just write down ``heads'' or ``tails'' on a piece of paper and save it for later. In this sense, the paper carries information about the coin, in particular about what Alice will observe when she opens the box. Because there are two equally-likely possibilities, Bob could just as well use one binary digit, a zero or one, to remember the state of the coin. Therefore the paper carries one bit of information. 

Formally, Bob's choice of the state of the coin can be represented as a binary-valued random variable $X$, taking on the values ``heads'' and ``tails'' with whatever probabilities $p_{\rm heads}$ and $p_{\rm tails}=1-p_{\rm heads}$ he decides. The state of the memory system he uses to remember the state of the coin can likewise be represented by a random variable, $M$, and a winning strategy simply has $M=X$ for any choice of $X$.

The amount of information stored the memory can be quantified by the Shannon entropy, defined for an arbitrary random variable $Y$ as 
\begin{align}
H(Y)=-\sum_{y} p_y \log p_y,
\end{align}
using $\log=\log_2$ to measure in bits, a choice we shall make henceforth. The entropy of a random variable $Y$ quantifies its uncertainty and is equal to the expected number of binary (yes/no) questions one would need to ask about $Y$ in order to determine its actual value $y$~\cite{mackay_information_2002}. A more concentrated distribution is less uncertain and makes guessing easier, and therefore has lower entropy, whereas the uniform distribution has maximum entropy and requires the most questions. 

To win the game, the contents of the memory must determine the state of the coin, and thus contain information equal to the entropy of the coin $H(X)$. Thus, for the original winning strategy no information is stored in the memory at all---the memory is not even needed---as the coin \emph{always} shows heads. Correspondingly, the entropy of this distribution is zero. In the second strategy, the memory stores one bit of information, since the coin is placed randomly in the box and $H(X)=1$. For distributions in between these two limiting cases, we can imagine many playing many rounds of the game and the entropy gives the ratio of number of questions needed to number of rounds. For the distribution $p_{\rm heads}=\frac 78$, $p_{\rm tails}=\frac 18$, which has entropy $H(X)=3-\frac{7}{8}\log_2 7\approx 0.54$, only 54 questions would be needed to determine the state of the coin in 100 rounds of play. In this case each memory register stores roughly one-half a bit of information.

On the other hand, given the value stored in the memory, the entropy of the coin random variable $X$ is zero for every winning strategy. Formally, we can describe this using the conditional entropy, defined using the probability of $X=x$ conditional on $M=m$, $p_{x|m}=p_{xm}/p_m$,\footnote{We follow physicists' conventions of naming arguments of functions and expressions, so that e.g.\ $p_{m|x}$ is the probability of $M=m$ given $X=x$, not the probability of $X=m$ given $M=x$.}
\begin{align}
H(X|M)=\sum_m p_m H(X|M=m),\qquad \text{for}\qquad H(X|M=m)=-\sum_x p_{x|m} \log p_{x|m}.
\end{align}
The conditional entropy can also be shown to satisfy $H(X|M)=H(XM)-H(M)$, and we can interpret it as the uncertainty of $X$ given knowledge of $M$.
Since a winning strategy only requires $M=X$, it is easy to work out that $H(X|M)=0$ regardless of Bob's choice of $X$, the probability distribution he uses to decide whether to place the coin heads up or down. 
If the memory is faulty, then the stored value will not precisely match the state of the coin. For instance, if there is one chance in eight of a memory error and the coin was placed randomly in the box, then $p_{\rm heads|heads}=\frac 78$, $p_{\rm tails|heads}=\frac 18$, and similarly for the probability conditioned on tails. Working out the conditional entropy, we find $H(X|M)\approx 0.54$, meaning roughly half the information about the coin has been corrupted!

\section{Complementarity in the Information Game}
What changes if Alice and Bob play the game with the quantum version of coins, \emph{qubits}, instead of classical bits? Qubits are any quantum system with two levels, which we denote $\ket{0}$ and $\ket{1}$, for instance the polarization degree of freedom of a single photon (horizontal versus vertical polarization) or the angular momentum of a spin-$\tfrac 12$ particle (angular momentum aligned or antialigned with a fixed spatial axis). Quantum-mechanical complementarity now comes into play and we can alter the game to illustrate the various effects concretely. Before doing so, let us discuss more precisely what is meant by complementarity, adopting the language of the wave-particle duality simplified to a single photon in a Mach-Zehnder interferometer. 

Thinking of light as a particle, we expect to find the photon in one or the other of the two modes. By placing a photodetector in each arm of the interferometer, we can determine where the photon is by looking to see which of the photodetectors is triggered. Let us call this the amplitude measurement. Associating the states $\ket{0}$ and $\ket{1}$ to the two modes, the amplitude measurement corresponds to a projective measurement in this basis. We may also define the amplitude observable by assigning values to the two possible outcomes. The usual choice comes from thinking of a qubit as a spin-$\tfrac 12$ particle and using the angular momentum, and we define the amplitude observable as $Z=\ket{0}\bra{0}-\ket{1}\bra{1}$. That is, a photon in the first mode takes the value $+1$ and in the second $-1$.

If we instead think of light as a wave, we expect there to be a certain phase relationship between the two arms, and in this case the light can interfere either constructively (in phase, $+$) or destructively (out of phase, $-$). To determine which, we allow the two modes to interfere at a beamsplitter and then check in which mode the photon emerges with a photodetector. Let us call this the phase measurement. Like the amplitude measurement, the phase measurement is a projective measurement, but in the basis $\ket{\pm}=\frac{1}{\sqrt{2}}(\ket{0}\pm\ket{1})$. Again we can define a corresponding observable, the phase observable, which for later convenience is defined exactly as the amplitude observable, but in the new basis: $X=\ket{+}\bra{+}-\ket{-}\bra{-}$. In the original basis this works out to be $X=\ket{1}\bra{0}+\ket{0}\bra{1}$.  

Amplitude and phase are complementary properties precisely as in the double slit setup, in the sense that if the photon is in a definite mode, then the phase relationship is completely undefined, and \emph{vice versa}.
This can be immediately seen from the two sets of basis states, as measurement of either eigenstate of amplitude produces a completely random outcome. 
At the level of observables, we can quantify this by an uncertainty relation. 
The most famous of these is the Heisenberg-Robertson relation relating the variances of the observables to the expectation of their commutator~\cite{heisenberg_ueber_1927,robertson_uncertainty_1929},
\begin{align}
\Delta X\Delta Z\geq \tfrac12\left|\langle[X,Z]\rangle_\psi\right|,
\end{align}
where $\langle\cdots\rangle_\psi$ denotes the expectation value evaluated for the quantum state $\ket{\psi}$ of the system. In this case, however, the bound is trivial. Since the operators $X$ and $Z$ anticommute ($XZ+ZX=0$), the righthand side reduces to $|\langle XZ\rangle_\psi|$. Choosing $\ket{\psi}=\ket{0}$ immediately yields zero, and a simple calculation shows this conclusion holds for \emph{any} possible choice of amplitude and phase observables. 

Fortunately, there exist uncertainty relations for which the bound is state-independent. In particular, a version due to Maassen and Uffink is formulated in terms of entropy~\cite{maassen_generalized_1988},\footnote{Entropic uncertainty relations for position and momentum were first conjectured by Everett~\cite{everett_theory_1957,dewitt_many-worlds_1973} and Hirschmann~\cite{hirschman_note_1957} and proven by Becker~\cite{beckner_inequalities_1975}. Generalizations to arbitrary observables were made by Bialynicki-Birula and Mycielski~\cite{biaynicki-birula_uncertainty_1975} and Deutsch~\cite{deutsch_uncertainty_1983}. Kraus~\cite{kraus_complementary_1987} first conjectured the stronger form (\ref{eq:maassen}).} 
\begin{align}
\label{eq:maassen}
H(X)_\psi+H(Z)_\psi\geq \log \frac 1c.
\end{align} 
The quantity $c$ is related to the commutativity of the observables, $c=\max_{j,k}\left|\braket{\psi_j|\phi_k}\right|^2$ for $\ket{\psi_j}$ the eigenvectors of $X$ and $\ket{\phi_k}$ those of $Z$, while the entropies are independently evaluated for the outcomes of the two observables, respectively, given that the system is originally in the quantum state $\psi$. In addition to the state-independent bound, the values of the observable can take play no role in the measure of uncertainty, only the probabilities of the various values. This makes the entropy a somewhat more natural measure than the variance. In the present case the two observables are complementary, meaning $c$ takes on its maximal value, 1 (for observables on a $d$-level quantum system $c_{\rm max}=\log d$). Thus, the amplitude and phase measurements cannot both be certain, and there must be at least one bit of total entropy. 

Alice and Bob can still play the classical information game with qubits, provided Alice only ever makes, say, the amplitude measurement. Bob is free to prepare amplitude eigenstates at random, just as before. In this sense the formalism of quantum information theory encapsulates classical information theory, as anything we wish to express in the latter can be done by working in a fixed basis in the former.\footnote{Here we consider only finite and not continuous alphabets.} 

Now suppose we alter the game so that Alice is free to make either an amplitude or a phase measurement, but she does not tell Bob which. Bob can prepare arbitrary qubit states, but to win the game he would need to be certain of the outcomes of both possible measurements. According to the Maassen-Uffink relation, Equation (\ref{eq:maassen}), this is impossible. There is no quantum state $\ket{\psi}$ Bob can send to Alice such that $H(X)_\psi$ and $H(Z)_\psi$ are both zero, and therefore he cannot win the game with certainty. A simple calculation shows that the best chance Bob has to win the game is to send Alice a state like $\ket{\psi}=\cos\frac\pi 8\ket{0}+\sin\frac\pi 8\ket{1}$, which is ``in between'' the amplitude and a phase eigenstates $\ket{0}$ and $\ket{+}$ in that $|\braket{0|\psi}|=|\braket{+|\psi}|$. Using $\ket{\psi}$, Bob has a roughly 85\% chance ($\frac 1 2+\frac{1}{2\sqrt{2}}$) of correctly predicting that outcomes of either measurement is $+1$.

\section{Entanglement in the Information Game}
What if, after receiving the qubit from Bob, Alice decides on a measurement at random and only asks for a prediction to this particular measurement? Since Bob does not know in advance which measurement Alice will perform, it would seem that this does not help. After all, he is still faced with the impossible task of preparing a state whose amplitude and phase are both predictable. Surprisingly, however, there does exist a winning strategy! The trick is for Bob to store \emph{quantum information} about the system he sends to Alice.
Note that in the game as played in the previous section, Bob really only makes use of classical memory. He may store information such as how he prepared the state for Alice, but this is effectively a recipe for making the state and there is nothing intrinsically quantum about such a recipe.  

To win this version of the game, Bob should create an \emph{entangled} state of two qubits $A$ and $B$,
\begin{align}
\label{eq:epr}
\ket{\Phi}^{AB}\equiv\tfrac{1}{\sqrt{2}}\left(\ket{0}^A\ket{0}^B+\ket{1}^A\ket{1}^B\right)=\tfrac{1}{\sqrt{2}}\left(\ket{+}^A\ket{+}^B+\ket{-}^A\ket{-}^B\right),
\end{align}
and send the $A$ system to Alice. Such entangled states were first by Einstein, Podolsky, and Rosen (EPR)~\cite{einstein_can_1935} and later translated into this 2-level system language by Bohm~\cite{bohm_quantum_1989}. EPR pointed out the paradoxical property that identical measurements on the two systems always produce identical results---the amplitude of $A$ always matches that of $B$ and likewise for phase---even though amplitude and phase for the individual systems cannot both be simultaneously well-defined.\footnote{The EPR-Bohm states were actually states of two spin-1/2 systems with total angular momentum zero, so that identical measurements are always anticorrelated, but the point is the same.}
In a sense, entangled states display correlations even though there is nothing there to correlate!

However paradoxical, with entanglement Bob can always win the modified game. When Alice asks him to predict a particular measurement, he can simply consult his quantum memory, system $B$, by performing the same measurement Alice will make. Since the results are correlated, $B$ in some sense contains one bit of classical information about both the amplitude and phase of system $A$. However, only one of these can ever be accessed because Bob cannot perform both measurements simultaneously; being able to do so would run afoul of the uncertainty principle. This peculiar combination of classical information about complementary physical properties is the essence of \emph{quantum} information. Demonstrating this more concretely will be the topic of Chapter~\ref{chap:char}. 

At first glance it would seem that this behavior violates the entropic uncertainty relation Equation~(\ref{eq:maassen}). Now, however, Bob makes use of system $B$, so we should consider the entropies of the measurements conditioned on this fact. Thus Equation~(\ref{eq:maassen}) does not apply. Just such a conditional version was conjectured and proven for the particular observables under consideration here in~\citeme{renes_conjectured_2009} and extended to general observables in~\citeme{berta_uncertainty_2010}.\footnote{Uncertainty principles involving conditional entropy were first investigated by Hall~\cite{hall_information_1995} and extended to the case of separate conditional systems by Cerf \emph{et al.}~\cite{cerf_security_2002}. Christandl and Winter~\cite{christandl_uncertainty_2005} gave a version for quantum channels which was the inspiration for the work in~\cite{renes_conjectured_2009}. A much simpler proof of Equation~(\ref{eq:berta}) using the relative entropy was discovered by Coles\etalsp\cite{coles_information_2010}.} It states 
\begin{align}
\label{eq:berta}
H(X^A|B)_\psi+H(Z^A|B)_\psi\geq \log \frac 1c+H(A|B)_\psi,
\end{align}
where now we make use of the quantum conditional entropy, defined using the von Neumann entropy (the Shannon entropy of the eigenvalues of the density matrix) as $H(A|B)_\psi=H(AB)_\psi-H(B)_\psi$. The entropies $H(X^A|B)_\psi$ and $H(Z^A|B)_\psi$ refer to quantum conditional entropies evaluated for the state after the respective observable of system $A$ has been measured. The interpretation of a classical entropy conditioned on a quantum system is not as clear as entropy conditioned on a classical system, but Holevo has shown that it provides a lower bound on the classical conditional entropy of the stated measurement on system $A$ given the result of the optimal measurement on system $B$~\cite{holevo_bounds_1973,holevo_statistical_1973-1}.\footnote{This result was first proven by Forney~\cite{forney_concepts_1963}, who did not make the connection to the conditional entropy.} 

Although the additional term on the righthand side might appear to make the bound tighter, the quantum conditional entropy of $A$ given $B$ can in fact be \emph{negative}.
For example, entangled states such as $\ket{\Phi}$ have $H(A|B)=-1$ since the $AB$ state is pure (whence $H(AB)=0$) but the state of $B$ alone is completely random (whence $H(B)=1$). This reflects another strange nature of the EPR state in that our uncertainty of the whole system $AB$ appears to be \emph{less} than that of one of its parts. In the present context, $H(A|B)=-1$ implies that the righthand side of (\ref{eq:berta}) is zero. Thus, the bound is trivial, and conditioned on the quantum information $B$, the entropy of $X$ and $Z$ \emph{can} both be zero. 

An alternate and fully equivalent form of Equation~(\ref{eq:berta}) ensures that, even if Bob makes use of quantum information in the original version of the game where he has to predict both outcomes, no winning strategy can exist. It now involves three systems: the system to be measured, $A$, and two memory systems $B$ and $C$,
\begin{align}
\label{eq:jcbjmr}
H(X^A|B)_\psi+H(Z^A|C)_\psi\geq \log\frac{1}{c}.
\end{align}
In order to make a prediction of both amplitude and phase, Bob would need two physical systems in which to store this information. Even if he uses systems $B$ and $C$ as quantum memory, Equation~(\ref{eq:jcbjmr}) ensures that amplitude and phase are still not simultaneously predictable. Put differently, although Bob can store classical information about both properties in the EPR state, there is no way to separate the amplitude and phase information without losing some of each in the process.

Note that we were able to define entropy conditioned on a quantum system via the alternate form of the conditional entropy expression, $H(A|B)=H(AB)-H(B)$. In retrospect, it is extremely fortunate that this form exists, because even the very notion of conditioning on quantum information is itself suspect. After all, the very nature of quantum systems is that their physical properties are not well-defined, so it is unclear what one should condition on. For instance, one might also like to define the variance of an observable on system $A$ conditioned on the state of a quantum memory, system $B$. But how can the presence of $B$ be incorporated into a variance calculation? We could stipulate that $B$ is to be measured, calculate the variance of $A$ for each outcome, and take the average, but this leads to unwieldy expressions. In the case of entropy, the formal structure rescues us and allows us to meaningfully speak of uncertainty conditioned on quantum information.

\chapter{Illustrations and Motivations}
\label{chap:ill}

That complementary observables play an important role in quantum information processing is not original to this thesis, though we shall see new, more concrete characterizations of quantum information in terms of classical information pertaining to complementary observables and uses for these characterizations in subsequent chapters. 
In this chapter we recount several protocols in quantum information theory that anticipated and motivated the work presented herein.  Among these are teleportation, where a qubit is sort of transmitted by two classical bits, and superdense coding, where conversely a qubit carries two bits of classical information. An even more concrete prior manifestation comes from quantum error-correction, which is crucial to the possibility of ever constructing a working quantum computer, and its use in protocols for entanglement distillation and quantum key distribution (QKD). This we discuss in more detail, as the structure of error-correcting codes will be useful in later chapters. But first we turn to teleportation and superdense coding. 

\section{Teleportation and Superdense Coding}
Teleportation and superdense coding are two simple quantum information processing protocols which rather dramatically demonstrate how different quantum information is from classical information. They also indicate a connection between quantum information and complementary classical information. Both involve two parties, a sender Alice and a receiver Bob, who share an EPR pair as given in Equation~(\ref{eq:epr}). In superdense coding, Alice would like to transmit classical information to Bob, but using the quantum channel. One  method is for both parties to fix a basis, Alice only sending amplitude eigenstates $\ket{0}$ or $\ket{1}$ and Bob only measuring what he receives in the same basis. This allows them to send one bit of classical information per qubit. 

However, they can do better by making use of their shared entanglement, and Alice can send Bob two classical bits per qubit~\cite{bennett_communication_1992}. The trick is to use the Bell basis,\footnote{So-named as they figure prominently in the study of whether quantum mechanics permits description as a local hidden variable theory by John S.\ Bell~\cite{nielsen_quantum_2000}.} a basis of two maximally-entangled qubit states, defined as follows, 
\begin{align}
\label{eq:bell}
\ket{\beta_{jk}}^{AB}\equiv (X^jZ^k\otimes\id)\ket{\Phi}^{AB},
\end{align}
using the amplitude and phase operators as defined in the previous chapter. For completeness, we again write them here, in the basis $\{\ket{0},\ket{1}\}$,
\begin{align}
X=\begin{pmatrix} 0 & 1 \\ 1 & 0\end{pmatrix}\qquad\text{and}\qquad Z=\begin{pmatrix}1 & 0\\ 0& -1\end{pmatrix}.
\end{align}
To transmit the two bits $j$ and $k$, first Alice applies $X^jZ^k$ to her half of the entangled state, $A$, and then sends it to Bob over the quantum channel. Since the Bell states form a basis, Bob can measure the joint system $AB$ in this basis to determine $j$ and $k$. In this way, one qubit of quantum information can be made to carry two bits of classical information.  

The classical information can heuristically be regarded as one bit of amplitude information and one bit of phase information in the following manner. In the original scheme to transmit one bit per qubit using only the amplitude basis, Alice's actions can be described as \emph{modulating} an initial state $\ket{0}$ by the operator $X^j$, producing $\ket{1}$ if $j=1$ and leaving the state as $\ket{0}$ otherwise. The same modulation scheme works in the phase basis using the operator $Z^k$, starting with $\ket{+}$. In superdense coding, Alice apparently performs \emph{both} and amplitude and a phase modulation, encoding two bits at once. Due to the entanglement with Bob's system, these two actions can coexist without interfering with each other, allowing two bits to be transmitted. 

Teleportation is sort of the inverse of superdense coding; now, preshared entanglement enables Alice to send one qubit to Bob by transmitting two classical bits~\cite{bennett_teleporting_1993}. Again the trick is to use the Bell basis. If Alice measures her half of the entangled state and the qubit to be sent in the Bell basis, she need only forward Bob the measurement results and he will be able to reconstruct the input state. 

Formally, we let $C$ be the qubit input, in an arbitrary state $\ket{\psi}^C$. 
It is not difficult to verify that ${^{AC}}{\bra{\beta_{jk}}}\left(\ket{\Phi}^{AB}\ket{\psi}^C\right)=\frac12 (Z^kX^j)^B\ket{\psi}^B$. This means that after Alice measures her two systems in the Bell basis, each outcome occurring with probability $\frac 14$, Bob ends up with the state $Z^jX^k\ket{\psi}$. Thus, Alice merely has to send Bob the two bits of information $j$ and $k$, and he can apply $X^jZ^k$ to recover the original state $\ket{\psi}$ in system $B$. In this way, the qubit is transmitted by two classical bits, with the help of preshared entanglement. 

We can heuristically think of the two classical bits as being the amplitude and phase of the input state $\ket{\psi}$ for the following reason. One way to perform a Bell state measurement is to first perform the controlled-\textsc{not} (\textsc{cnot}) operation and then measure each qubit separately in the appropriate basis. The \textsc{cnot} gate acts on two qubits, applying $X$ to the second qubit (the target) if the first (the control) is $\ket{1}$ and doing nothing to the target otherwise. It can be thought of as coherently copying the amplitude basis of the control qubit to the target, in that a superposition state $\alpha\ket{0}+\beta\ket{1}$ of the control qubit and a ``blank'' target state $\ket{0}$ become $\alpha\ket{00}+\beta\ket{11}$. To complete the Bell state measurement after applying \textsc{cnot}, one measures the amplitude of the target qubit and the phase of the control. Therefore, in the teleportation protocol, we can choose the input qubit to be the control and Alice's half of the entangled state as the target, and it then appears as if the amplitude information is first copied to the second qubit and read out, while the phase is read out from the first qubit, the system itself. Of course, this is not precisely what happens, or else Alice would obtain both amplitude and phase information of $\ket{\psi}$, in violation of Equation~(\ref{eq:jcbjmr}). Nonetheless, teleportation indicates the important role played by amplitude and phase information.

\section{Quantum Error-Correction}
\label{sec:qec}


In the uncertainty game of the previous chapter, we assumed that the quantum memory used by Bob was noise-free. Clearly this is an unrealistic assumption, and although not particularly relevant for a \emph{gedankenexperiment}, it nevertheless raises the question of what can be done to combat noise is real quantum information processing protocols. The answer, in the quantum case as in the classical case, is to use error-correcting codes. 
The fact that quantum error correction exists at all is of tremendous importance both practically and conceptually. On the one hand it shows that construction of quantum computers is not in principle a hopeless task, and on the other that quantum information itself is essentially digital (discrete-valued) in nature, despite its outward analog (continuous-valued) appearance. Even more, the way in which the first quantum error-correcting codes were constructed is related to the complementarity of quantum information: Arbitrary quantum errors are digitized into amplitude and phase errors, each of which is then corrected by essentially classical means. Before delving into the details of how quantum error-correction works, which illustrates the point more clearly and will be of use in later chapters, we give a brief overview of the issue of analog versus digital computation and the important role played by error-correction for both classical and quantum computers. 

Whether classical or quantum, both analog and digital devices require error-correction to control the effects of noise inescapably present in an actual device. A simple classical error-correction scheme is simply to repeat the calculation three times and take the majority of the results. However, the error-correction procedure itself is presumably not perfect and can only be performed to some finite accuracy in practice. Nonetheless, following an analysis by von Neumann~\cite{von_neumann_probabilistic_1956}, it is possible that the rate at which errors are decreased by the procedure is greater than that at which they are caused.

For digital computers the finite accuracy of the procedure presents no additional difficulties in principle because the device anyway only requires finite accuracy; in a discrete encoding we choose certain continuous parameter ranges of the underlying physical degree of freedom to correspond to discrete logical values. Thus the nature of the encoding accords very well with the finite accuracy of available operations, and errors in the latter transform into errors in the former.  
Analog computers, however, require error-correction to arbitrary precision, so the buildup of errors due to finite-accuracy of operations is ultimately unavoidable. That reliable digital computers can be constructed from imperfect components was shown rigorously by G\'acs~\cite{gacs_reliable_1986}, though in practice current devices usually require error-correction only in the storage of information, not its manipulation, due to the intrinsically low error-rates of semiconductor-based integrated circuits. 

Since the quantum state of the quantum computer is determined by the continuous probability amplitudes appearing in the wavefunction, many of the same difficulties were thought to apply to quantum computers, an issue pointed out by Peres~\cite{peres_reversible_1985} and stressed by Landauer~\cite{landauer_computation_1986,landauer_dissipation_1988,landauer_physical_1996}. Noise-induced modifications to these amplitudes leads to errors in the computation, just as in the analog computer, so it would seem that any advantage promised by quantum computation in principle cannot be achieved in practice. 
Worse still, even the ability to perform error-correction seems suspect in the quantum setting, because the information cannot simply be read out to check for errors, as in the von Neumann repetition scheme, without introducing disturbance~\cite{landauer_physical_1996}. Nonetheless, there was reason for optimism: Zurek observed that owing to the different phase-space structures involved, the kind of exponential blow-up of errors that might be expected for a classical continuous computer would not plague a quantum computer with a discrete spectrum~\cite{zurek_reversibility_1984}.

Happily, the construction of quantum error-correcting codes by Shor~\cite{shor_scheme_1995} and Steane~\cite{steane_error_1996} demonstrates convincingly that quantum information is not analog, but digital.\footnote{Some would still dispute this. See, e.g.\ Laughlin~\cite{laughlin_different_2006}.} Soon thereafter it was established that, just as with classical digital computers, reliable quantum computers could in principle be constructed using imperfect components, a fact known as the threshold theorem~\cite{aharonov_fault-tolerant_1997,aharonov_fault-tolerant_2008,kitaev_quantum_1997,knill_resilient_1998-1,knill_resilient_1998}. Unlike the situation for classical electronic computers, no medium has yet been discovered or engineered which offers intrinsically low quantum noise rates, though much effort is devoted to this question and many major experimental achievements have been made.  
The crux of quantum error-correction is that although continuous errors in the state of the computer are indeed possible, they can be digitized without damaging the encoded quantum information. Instead of accessing the quantum information directly, as one would try in a direct analogue of the repetition scheme, the measurements needed in error-correction are designed only to provide information about the error, not the encoded information. In this way the construction very subtly evades the two objections described above.

\subsection{The Complementarity of Quantum Error-Correcting Codes}
\label{subsec:compqecc}

Somewhat amazingly, quantum errors of any type can be corrected if discrete errors of two complementary types, amplitude and phase, can be corrected~\cite{knill_theory_1997,bennett_mixed-state_1996}. These two errors result from the action of the already-defined $X$ and $Z$ operators, respectively, acting exactly as an unwanted modulation of the quantum state. Often these errors are referred to as bit flips and phase flips, for the following reason. One commonly fixes a basis and calls it the amplitude basis, and then for an arbitrary qubit state $\ket{\psi}=\alpha\ket{0}+\beta\ket{1}$ an amplitude error resulting from an unwanted $X$ operator just flips the states $\ket{0}$ and $\ket{1}$, hence the name bit flip. Similarly, phase flips interchange the states $\ket{+}$ and $\ket{-}$, or equivalently, flips the phase of $\ket{1}$, taking $(\alpha,\beta)$ to $(\alpha,-\beta)$. 

Either type of error by itself could be corrected in exactly the way a classical error would be corrected, through repetition. To correct a single bit flip error classically, we can \emph{encode} it into three bits as follows,
\begin{align}
0\rightarrow \overline{0}=000\qquad 1\rightarrow\overline{1}=111.
\end{align}
These two bitstrings are called \emph{codewords}, and the overline denotes a logical value of the encoded bit, as opposed to the values of the individual physical bits. 
Then, if one error occurs, we can correct it by examining each string and flipping the one bit which is different from the other two. Equivalently, the error may be diagnosed by computing the two parities, generally called \emph{syndromes}, $s_1=b_1\oplus b_3$ and $s_2=b_2\oplus b_3$, where $b_1$, $b_2$, and $b_3$ are the three bit values. The syndromes associated to each error position are shown in Table~\ref{table:3bitrep}. Note that the bit is encoded in the value of $\overline{b}=b_1\oplus b_2\oplus b_3$.  

\begin{table}[h]
\begin{center}
\begin{tabular}{cccc}
Bitstring pair $(\overline{0},\overline{1})$ & Error Position & Syndrome $(s_1,s_2)$\\
$(000,111)$ & $\emptyset$ & $(0,0)$\\
$(100,011)$ & 1 & $(1,0)$\\
$(010,101)$ & 2 & $(0,1)$\\
$(001,110)$ & 3 & $(1,1)$
\end{tabular}
\caption{\label{table:3bitrep} The three-bit repetition code. The first column gives the bitstrings corresponding to the encoded logical zero $\overline{0}$ and logical one $\overline{1}$ after a bitflip error whose position is given in the second column. The third column lists the syndrome information which allows the error position to be diagnosed.}
\end{center}
\end{table}

Seen from a different perspective, the reason this works is that the eight possible three-bit strings are grouped into four pairs, as in Table~\ref{table:3bitrep}. One pair is given by the codewords themselves, and the other pairs are the images of the codewords under the three single-bit errors. In each pair one string corresponds to $\overline{0}$ and the other to $\overline{1}$ as defined by this mapping. The syndromes reveal precisely which pair is present, but importantly they do not reveal anything about the logical bit value. Error-correction corresponds to mapping the noisy pair of strings back to the original pair.

To correct qubit bit flip errors we may simply use the same repetition code in the computational basis. Since the syndrome and correction procedure for a given error are independent of the encoded information, superpositions are also maintained by the error-correcting code. Thus, the state $\ket{\psi}=\alpha\ket{0}+\beta\ket{1}$ is encoded as $\ket{\overline{\psi}}=\alpha\ket{000}+\beta\ket{111}$, a process which can be implemented as a unitary transformation on the input and two auxiliary systems, each in some given state we can take to initially be prepared in the state $\ket{0}$. The necessary syndrome information can be generated by measuring the two \emph{stabilizer} operators $Z\id Z=Z\otimes \id\otimes Z$ and $\id ZZ$, which we can write as $Z_1Z_3$ and $Z_2Z_3$. Each of these has the same action on the two logical states in each subspace, returning the values $(-1)^{s_1}$ and $(-1)^{s_2}$, respectively. 

The name stabilizer reflects the fact that the code subspace is stabilized by the two operators, as it is the simultaneous $+1$ eigensubspace of both operators. The encoded subspace supports a single qubit, and so it must be possible to represent its amplitude and phase operators. One possibility is given by $\overline{Z}=Z_1Z_2Z_3$ and $\overline{X}=X_1X_2X_3$. These each commute with the stabilizers, but anticommute with each other as intended. Note that $\overline{Z}$ gives the encoded bit, just as in the classical case. 

We can also think of the stabilizers and encoded amplitude operator as defining a new complete set of commuting observables for the set of physical qubits. Such a set fixes a basis in the state space of the three qubits, and each of the operators is the amplitude operator for a corresponding ``virtual'' qubit. Labeling the virtual qubit operators with primes, we can write $Z_{1}'=Z_1Z_2$, $Z_2'=Z_2Z_3$, and $Z_3'=\overline{Z}=Z_1Z_2Z_3$. Conjugate to the new amplitude observables are phase observables $X_1'=X_2X_3$, $X_2'=X_1X_3$, and $X_3'=\overline{X}=X_1X_2X_3$, which are found by ensuring that they anticommute with the amplitude operators of the same qubit but commute with all other operators. The entire collection is shown in Table~\ref{tab:repcodestab}.
The code subspace is then defined by the first two virtual qubits being in the $+1$ amplitude state. Bit flip errors change the amplitude of the encoded qubit and at least one of the virtual qubits, and the stabilizer measurement determining the location of the error translates into an amplitude measurement of the first two virtual qubits.  

\begin{table}[h]
\begin{center}
\begin{tabular}{ccc}
Virtual qubit & Amplitude & Phase\\
1 & $ZZ\id$& $\id XX$\\
2 & $\id ZZ$& $XX\id$ \\
3 & $ZZZ$ & $XXX$
\end{tabular}
\caption{\label{tab:repcodestab} Virtual qubits associated with the three-qubit amplitude repetition code. Note that amplitude and phase anticommute for each qubit, but commute for different qubits. 
}
\end{center}
\end{table}

Discretization is automatically provided by the measurement of the stabilizer operators, which is anyway necessary for error-correction. Consider an error operator of the form $E=e_0 I+e_1 X_1$, with $e_0,e_1\in\mathbbm{C}$, which is a sort of combination bit flip error and no error on the first qubit. It produces a superposition between two code subspaces, 
\begin{align}
\ket{\overline{\psi}{'}}=E\ket{\overline{\psi}}=e_0\ket{\overline{\psi}}+e_1X_1\ket{\overline{\psi}}=e_0\left(\alpha\ket{000}+\beta\ket{111}\right)+e_1\left(\alpha\ket{100}+\beta\ket{011}\right).
\end{align}
Measurement of the stabilizer operators destroys this superposition, forcing the system to the state of either one error or no error, but leaves the logical qubit superposition intact. Here the measurement has two possible syndrome outcomes, either $(0,0)$ or $(1,0)$, with probabilities $|e_0|^2/(|e_0|^2+|e_1|^2)$ and $|e_1|^2/(|e_0|^2+|e_1|^2)$, respectively. Conditioned on these outcomes, the state becomes $\ket{\overline{\psi}}$ or $X_1\ket{\overline{\psi}}$, respectively, and can therefore be corrected using the syndrome information.

\subsection{Correcting Both Kinds of Errors}
\label{subsec:shor9}

Since phase flips are just bit flips in the basis $\ket{\pm}$, the above analysis immediately applies to this case upon changing $X\leftrightarrow Z$ and working in the new basis. The insight of Shor and Steane was to realize that a single error of either type can be corrected by appropriately combining these procedures.
Shor's scheme is conceptually somewhat simpler, and is based on concatenating the two error-correcting codes. That is, we take the codewords of the phase flip repetition code and replace each of the three qubits with  qubits appropriately encoded in the bit flip repetition code. This produces codewords of nine qubits, as follows (here ignoring normalization),
\begin{align}
\ket{+}\longrightarrow&\ket{\overline{+}}=\ket{+++}=(\ket{0}+\ket{1})(\ket{0}+\ket{1})(\ket{0}+\ket{1})\\
&\ket{\overline{+}}\longrightarrow \ket{\widetilde{+}}=(\ket{\overline{0}}+\ket{\overline{1}})(\ket{\overline{0}}+\ket{\overline{1}})(\ket{\overline{0}}+\ket{\overline{1}}),\\
\ket{-}\longrightarrow&\ket{\overline{-}}=\ket{---}=(\ket{0}-\ket{1})(\ket{0}-\ket{1})(\ket{0}-\ket{1})\\
&\ket{\overline{-}}\longrightarrow \ket{\widetilde{-}}=(\ket{\overline{0}}-\ket{\overline{1}})(\ket{\overline{0}}-\ket{\overline{1}})(\ket{\overline{0}}-\ket{\overline{1}}).
\end{align}

The repetition in the amplitude basis in the second step protects the encoded qubit from bit flip errors, 
since a single bit flip can always be detected and corrected by applying the 3-qubit repetition procedure to each block of three qubits. This corresponds to measuring the $Z$-parity observables $Z_1Z_3$, $Z_2Z_3$, $Z_4Z_6$, $Z_5Z_6$, $Z_7Z_9$, and $Z_8Z_9$. 
Phase flips are slightly more involved, but consider what happens when a single phase flip error plagues, say, the fourth qubit. This is the first qubit of the second block, so we can zoom in on this block to determine the effect on the encoded states. Applying the error operator $Z_1$ to the encoded states we find $Z_1\ket{\overline{0}}=Z_1\ket{000}=\ket{000}=\ket{\overline{0}}$, while $Z_1\ket{\overline{1}}=Z_1\ket{111}=-\ket{111}=-\ket{\overline{1}}$. Thus, the error causes the action
\begin{align}
\ket{\widetilde{+}}&\rightarrow (\ket{\overline{0}}+\ket{\overline{1}})(\ket{\overline{0}}-\ket{\overline{1}})(\ket{\overline{0}}+\ket{\overline{1}})\\
 \ket{\widetilde{-}}&\rightarrow(\ket{\overline{0}}-\ket{\overline{1}})(\ket{\overline{0}}+\ket{\overline{1}})(\ket{\overline{0}}-\ket{\overline{1}}),
\end{align}
which is precisely a phase flip at the ``inner'' level. We could detect and correct this at the inner level by measuring the $X$-parities $X_1X_3$ and $X_2X_3$. Translating to the outer level of actual qubits, we replace each of the constituent $X$ operators on the inner level by its encoded $\overline{X}$ operator on the outer level and instead measure $X_1X_2X_3X_7X_8X_9$ and $X_4X_5X_6X_7X_8X_9$. The outcomes for the damaged states are $+1$ and $-1$ respectively, for both encoded states, implying that to correct the error we merely need apply $Z_4$.\footnote{$Z_5$ or $Z_6$ would also work just as well. This flexibility is actually a subtle and important feature of quantum error-correcting codes we shall return to in Section~\ref{sec:psqkd}.}

The six amplitude parities and two phase parities commute pairwise and stabilize the code subspace. As with the repetition code, the error analysis is made simpler by thinking in terms of virtual qubits, in this case nine, as shown in Table~\ref{tab:shorcodestab}. Observe that the concatenated structure is reflected in the operators: three copies of the repetition code in virtual qubits one through six, followed by the same repetition code on the three blocks. The code subspace is fixed by requiring virtual qubits one through six to be in the $+1$ amplitude eigenstate and virtual qubits seven and eight in the $+1$ phase eigenstate, but this structure makes it clear that we could have defined the code the other way around.

Using this framework it is easy to see that the Shor code also enables detection and correction of joint bit and phase errors. A joint bit and phase flip of the fourth qubit, for instance, would reveal itself by the fourth virtual qubit having the wrong amplitude and the seventh having the wrong phase, corresponding to $-1$ eigenvalues of the stabilizers $Z_4Z_6$ and $X_4X_5X_6X_7X_8X_9$. From the structure of the virtual amplitude and phase operators it is clear that the code can actually detect and correct one bit and one phase error, irrespective of their locations.

\begin{table}[h]
\begin{center}
\begin{tabular}{ccc}
Virtual qubit \# & Amplitude & Phase\\
1 & $ZZ\idb \idb\idb\idb \idb\idb\idb$ & $\idx XX \idx\idx\idx \idx\idx\idx$\\
2 & $\idb ZZ\idb\idb\idb\idb\idb\idb$& $XX\idx \idx\idx\idx \idx\idx\idx$\\
3 & $\idb\idb\idb ZZ\idb \idb\idb\idb$& $\idx\idx\idx \idx XX \idx\idx\idx$\\
4 & $\idb\idb\idb \idb ZZ \idb\idb\idb$& $\idx\idx\idx XX \idx \idx\idx\idx$\\
5 & $\idb\idb\idb \idb\idb\idb  ZZ \idb$& $\idx\idx\idx\idx\idx\idx \idx XX$\\
6 & $\idb\idb\idb \idb\idb\idb \idb ZZ $& $\idx\idx\idx\idx\idx\idx XX\idx$\\
7 & $ZZZZZZ\idb\idb\idb$& $\idx\idx\idx XXXXXX$\\
8 & $\idb\idb\idb ZZZZZZ$& $XXXXXX\idx\idx\idx$\\
9 &$ZZZZZZZZZ$ & $XXXXXXXXX$\\
\end{tabular}
\caption{\label{tab:shorcodestab} Virtual qubits associated with the nine-qubit Shor code. Note that amplitude and phase anticommute for each qubit, but commute for different qubits. 
}
\end{center}
\end{table}

Again error discretization is provided by the stabilizer measurement, and fortunately, being able to correct just these two types of error is sufficient to correct any conceivable single-site error. Just as with the repetition code, we can consider the effect of arbitrary errors which are linear combinations of all the correctable errors. Since the Shor code can correct any single flip of bit and/or phase, errors of the form $E=e_{00} I+e_{10}X_1+e_{01}Z_1+e_{11}X_1Z_1$ with $e_{jk}\in\mathbbm{C}$ can also be corrected. But, as can be readily verified, any operator can be expressed in this way as a complex combination of these four operators, meaning arbitrary single-site errors can be digitized to amplitude and/or phase errors and corrected. 
Despite initial appearances to the contrary, quantum information is therefore in a critical sense digital.


\section{Entanglement Distillation}
\label{sec:ed1}
Quantum error-correction quickly found use in constructing protocols for \emph{distillation} of entanglement, as well as in proving the cryptographic security of quantum key distribution protocols. We give a brief treatment of these uses here, as they will be generalized in later chapters.

Distilling entanglement refers to transforming imperfect EPR states into approximately perfect ones. For instance, if Alice sends halves of maximally-entangled states through a noisy quantum channel to Bob, then the states which emerge will no longer be maximally-entangled. But it may be possible to repair some fraction of the states by actions undertaken on Alice's and Bob's systems alone, plus classical communication between them to coordinate their actions. To see how this is done, suppose that Alice and Bob share many copies of the state
\begin{align}
\label{eq:belldiagonal}
\psi^{AB}=\sum_{jk}p_{jk}\ket{\beta_{jk}}\bra{\beta_{jk}}^{AB},
\end{align}
with $p_{jk}\geq 0$ and $\sum_{jk}p_{jk}=1$, which is just a probabilistic mixture of the four Bell states. This state is produced, for instance, by sending the $B$ half of the state $\ket{\Phi}=\ket{\beta_{00}}$ through a channel which applies the operator $X^jZ^k$ with probability $p_{jk}$. In principle, Bob can repair the actual state $\psi^{AB}$ to the desired state $\Phi^{AB}$ by determining which of these operators was applied and subsequently undoing it. Thus, the task is reduced to determining the actual sequence of errors, at least for states of this form.

This sounds like a job for a quantum error-correcting code, even though here Alice is not first encoding the qubits she sends to Bob. Nevertheless, Alice and Bob can determine the error pattern by each measuring the stabilizer operators of an error-correcting code. It is simple to show that, just as in the information game, if Alice and Bob make the same stabilizer measurements on collections of EPR states, then they should always obtain the same outcomes. To the extent that they obtain different outcomes, this indicates an error. For instance, suppose that Alice and Bob divide their systems into groups of three and use the simple bit-flip repetition code described above. On each group of three, both Alice and Bob measure the stabilizers $Z_1Z_3$ and $Z_2Z_3$. Perhaps the simplest way to work out what outcomes will occur is to note the following relationship,
\begin{align}
\id\otimes E\ket{\Phi}=E^T\otimes\id\ket{\Phi},
\end{align}
valid for any operator $E$, where $E^T$ is the transpose of the operator when expressed in the amplitude basis $\{\ket{0},\ket{1}\}$. Now we can calculate the effect of the product of Alice's and Bob's stabilizers on the ideal state. Since $Z^T=Z$ and $Z^2=\mathbbm{1}$,
\begin{align}
Z^{A_1}Z^{A_3}Z^{B_1}Z^{B_3}\left(\ket{\Phi}^{A_1B_1}\ket{\Phi}^{A_2B_2}\ket{\Phi}^{A_3B_3}\right)=\ket{\Phi}^{A_1B_1}\ket{\Phi}^{A_2B_2}\ket{\Phi}^{A_3B_3},
\end{align}
which implies that Alice and Bob must indeed obtain identical outcomes for their stabilizer measurements since their product must be $+1$. The same clearly holds for $Z_2Z_3$. If there is one $X$ error in the state, say in the first position, then we find using the same method
\begin{align}
Z^{A_1}Z^{A_3}Z^{B_1}Z^{B_2}\left(X^{B_1}\ket{\Phi}^{A_1B_1}\ket{\Phi}^{A_2B_2}\ket{\Phi}^{A_3B_3}\right)&=Z^{A_1}X^{A_1}Z^{A_1}\left(\ket{\Phi}^{A_1B_1}\ket{\Phi}^{A_2B_2}\ket{\Phi}^{A_3B_3}\right)\\
&=-X^{B_1}\ket{\Phi}^{A_1B_1}\ket{\Phi}^{A_2B_2}\ket{\Phi}^{A_3B_3}.
\end{align}
Now the state is a $-1$ eigenstate of the product of stabilizers, meaning the product of syndromes is $-1$, and hence that Alice and Bob obtain different outcomes for these stabilizer measurements. A single $X$ error on the first qubit will of course not affect the $Z_2Z_3$ measurements. But together the two stabilizer measurements suffice to locate a single $X$ error in the three pairs, exactly as in the error-correction scenario.

The story is essentially the same for any quantum error-correcting code, so we may create a protocol for entanglement distillation as follows, following Bennett\etalsp\cite{bennett_mixed-state_1996}. First, Alice and Bob use a small fraction of their pairs in order to determine the number of each type of error $X$, $Z$, and $XZ$, simply by both measuring in the appropriate basis and recording how often they obtained the same outcome. The bases are just the amplitude basis, the phase basis, and the basis consisting of the eigenstates $\frac{1}{\sqrt{2}}\left(\ket{0}\pm i\ket{1}\right)$ of $XZ$, respectively. 
Next, given the expected number of errors, they choose an appropriate error-correcting code, but if no suitable codes exist they must abort the procedure. If a suitable code does exist, they proceed by measuring the stabilizers to determine, with high probability, the actual pattern of errors, which can then be corrected by local operations on Bob's systems. 

This does not quite leave them with the desired states $\ket{\Phi}$, however, since they have made the stabilizer measurements. Instead, the $\ket{\Phi}$ reside in the encoded subspaces specified by the error-correcting code, their number corresponding to the number of encoded qubits. To recover these states, they each apply the decoding operation (the inverse of the encoding operation) to their systems.
The above protocol is designed to work for states of the form given in Equation~(\ref{eq:belldiagonal}), but actually applies to any input state since the stabilizers used in the protocol will automatically digitize arbitrary errors to amplitude and phase errors. 


\section{Quantum Key Distribution}
\label{sec:qkd}
Quantum key distribution (QKD) provides a means for the two separated parties Alice and Bob to communicate in private using only public communication channels. The security of the scheme is based only on the laws of physics and not the perceived computational difficulty of some task, like factoring large integers, as commonly used in classical schemes today. Needless to say, the problem of private communication is ancient, but it was first put on a firm mathematical footing by Shannon~\cite{shannon_communication_1949}. There the task is broken into two parts, establishing a \emph{secret key} between the two parties, a random string of bits shared by both parties, and then using it to \emph{encrypt} and \emph{decrypt} the actual messages. One can imagine Alice and Bob creating a secret key together at some point in the past when they could do so secretly, but if they are already separated and can only communicate publicly, the situation seems hopeless. They could communicate privately if they had a key, but they need to communicate privately to create the key. 

Quantum information offers a way out of this dilemma 
in the form of entanglement. Returning to the uncertainty game, recall that Bob can, on demand, predict either the amplitude or phase measurement on Alice's system when they share an EPR pair. Moreover, the uncertainty principle Equation~(\ref{eq:jcbjmr}) implies that any would-be eavesdropper Eve could not predict either measurement using her system $C$ any better than by just blindly guessing, a property of entanglement known as \emph{monogamy}. By measuring each of their systems in identical bases, Alice and Bob can therefore generate one bit of a secret key from each entangled pair. 

They can attempt to create such pairs by using a public quantum channel in the manner described in the previous subsection: Alice prepares EPR pairs and sends one system of each to Bob. If the channel is noisy, perhaps due to Eve's interference, Alice and Bob can simply first run an entanglement distillation protocol to extract the required high-quality EPR pairs. Even though this requires them to exchange classical syndrome information over a public channel, it does not help any would-be eavesdropper as the measurements on the EPR pairs are completely independent of this information, a fact again insured by the uncertainty principle. The usefulness of entanglement distillation in this context was first treated by Deutsch et al.~\cite{deutsch_quantum_1996} and the security of this scheme was first rigorously proven by Lo and Chau~\cite{lo_unconditional_1999}. 

The protocol will require a large quantum memory in which to store the various systems, as well as the ability to perform all the necessary stabilizer measurements. We did not worry about the practicalities of doing so in the previous section, but luckily for QKD all of the required operations can be reduced to just measuring in either the amplitude or phase basis, and subsequent processing of the resulting classical data, as shown by Shor and Preskill~\cite{shor_simple_2000}. The reason this works is that ultimately we want to distill EPR states but then immediately measure them in some basis to generate the key, and this gives us some flexibility in how we describe the entire process. By picking the right kind of error-correction code this flexibility allows us to get rid of essentially all (difficult) operations on quantum systems apart from measuring them individually and replace them with (easy) operations on classical data. 

The necessary codes are called Calderbank-Shor-Steane (CSS) codes and include the original codes found by Shor and Steane as mentioned in Section~\ref{sec:qec}. Their defining property, as first described by Calderbank and Shor~\cite{calderbank_good_1996} and Steane~\cite{steane_multiple-particle_1996}, is that the stabilizers of the code can be broken into two groups, those composed of products of $X$ operators and those composed of products of $Z$ operators. Similarly, the logical amplitude operators only consist of $Z$-type  operators, while the logical phase operators only consist of $X$-type operators. The more general formalism of stabilizer codes constructed by Gottesman~\cite{gottesman_stabilizer_1997} also includes codes whose stabilizers and logical operators are of mixed type, but importantly, these cannot be used for the present purposes. 

Consider the QKD scheme above using a CSS-based entanglement distillation scheme to correct for noise in the quantum channel. The entanglement distillation part proceeds in two steps, the first involving measurement of the $Z$-type stabilizers, which give Alice and Bob information about the bit errors, and the second involving the $X$-type stabilizers, which give information about the phase errors. Now assume that the key is generated by measuring, in the amplitude basis, each half of the pairs output by the decoding step of the distillation protocol. This is equivalent to skipping the decoding step and instead measuring the logical $\overline{Z}$ operators directly. But in a CSS code these operators are composed entirely of products of $Z$ operators on the individual qubits, and one can reconstruct the value of any desired product from the collection of all the individual outcomes. Knowing $Z_1$, $Z_2$, and $Z_3$ enables us to calculate $Z_1Z_2Z_3$, for instance. 

Thus, Alice and Bob could generate the outcomes of measuring the logical amplitude operators as well as all the $Z$-type stabilizers by first measuring each of their respective qubits in the amplitude basis and then forming the appropriate products of the outcomes. However, the $X$-type stabilizers cannot be generated in this way; in fact, all phase information will be destroyed by making amplitude measurements. \emph{The crucial fact is that Alice and Bob do not need the $X$-type stabilizers at all.} Intuitively this makes sense, as these stabilizers give information about phase errors, but Alice and Bob only care about amplitude information. 

The protocol now proceeds as follows. Alice transmits halves of entangled pairs to Bob, and a random subset are used to estimate the rate of bit and phase errors in order to choose an appropriate CSS code, while the rest are immediately measured in the amplitude basis. Just as in the entanglement distillation protocol, if no suitable code exists because the noise rates are too high, they must abort the procedure. If one does exist, Alice proceeds by constructing the $Z$-type stabilizers according to the chosen code and transmitting them to Bob, who corrects the amplitude errors. They then forget about the phase stabilizers and each constructs the outcomes of measuring the  logical amplitude operators for use as the secret key. 

From the outside there is no way to tell if Alice and Bob have performed the above procedure or actually measured the $X$- and $Z$-type stabilizers directly. Although the phase information has not been exchanged, correction of the phase errors is nevertheless possible in principle. Therefore, the procedure inherits the security of the Lo and Chau protocol in which Alice and Bob actually do create EPR pairs. 

In contrast, from Alice and Bob's point of view, the key is created by two classical information processing protocols. First, Alice sends Bob the stabilizer information which enables him to correct his observed amplitude measurements to match hers. This step is referred to as \emph{information reconciliation} since the goal is to reconcile Bob's amplitude information with Alice's. 
In the next step they use the logical operators to construct a function of the amplitude data, which serves as the key. Due to the entanglement-based picture of the protocol, this has the effect of extracting that part of the amplitude data which is completely uncorrelated with any eavesdropper, and this part of the protocol is termed \emph{privacy amplification}. We can think of the amplitude measurements as a sort of raw key which is then distilled to a truly secret key by running these two protocols in succession.  

Remarkably, we can also remove the need for entanglement entirely. Suppose that in the above protocol Alice immediately measures her halves of the EPR states as she sends the other halves to Bob. These measurements essentially prepare amplitude and phase basis states in the systems underway to Bob. For instance, if her amplitude measurement is $\ket{0}$, Bob's system is now in the state $\ket{0}$, and so on. Originally Alice and Bob agree in advance which observable to measure for each qubit, but suppose instead that they each make a random choice.  
Half the time they choose the same basis, and these outputs are ``sifted'' out by public announcement of the bases and kept for use as the key and for error estimation. 

From the outside there is no way to tell if Alice measures her system after the transmission, so that she is distributing half of an entangled pair, or before, so that she is randomly preparing amplitude or phase eigenstates for Bob to measure. 
Thus, just as the classical key distillation scheme inherits security from entanglement distillation, the \emph{prepare and measure} protocol inherits security from the EPR based version. In fact, this prepare and measure scheme is the original QKD protocol proposed by Bennett and Brassard~\cite{bennett_quantum_1984} and known as BB84; the connection to the version using entanglement was noted by Bennett, Brassard, and Mermin~\cite{bennett_quantum_1992}. Shor and Preskill prove that the BB84 protocol is secure using the reduction of entanglement distillation to information reconciliation and privacy amplification using CSS codes and the reduction of an entanglement-based protocol to a prepare and measure protocol.

\chapter{Characterizing Quantum Information}
\label{chap:char}

In the information game of Chapter~\ref{chap:intro} we made use of the fact that EPR pairs have the property that measurements of either amplitude or phase on one subsystem are entirely predictable using the other subsystem, and we argued that this property is central to the notion of quantum information itself. Here we make good on this claim by showing the converse is true, amplitude and phase predictability implies entanglement, as well as providing two other characterizations of entanglement based on complementarity and showing how these can be extended to characterizations of secret keys. 

The present chapter is divided into four sections. The first presents the converse as stated above. Specifically, following~\citeme{renes_physical_2008}, we show that if there exist measurements on Bob's system which predict Alice's amplitude and phase measurements with low error probability, then Bob can adapt these measurements to create a new system forming an approximate EPR pair with Alice's system. Essentially this is done by coherently performing both measurements in succession, as depicted in  Figure~\ref{fig:entdec}.   

Our approach was inspired by Koashi's complementary control scenario~\cite{koashi_complementarity_2007} in which Bob either tries to guess Alice's amplitude information or somehow help her to prepare a phase eigenstate, and we remark on the connections below. Furthermore, the entanglement recovery procedure is useful in several other scenarios, such as approximate quantum error correction and the quantum information processing protocol known as state merging. 

In the second section we give two other sufficient conditions for entanglement recovery using the uncertainty principle, recounting the results of~\cite{renes_duality_2010}. Again amplitude and phase information play the decisive role, but now the conditions involve a third system. In the first of these, entanglement is implicitly present in the systems shared by Alice and Bob if the amplitude measurement is predictable with low error probability using Bob's system, but high error probability using \emph{any} other system. In the second, entanglement is present if both amplitude and phase are unpredictable in this sense using any other system. These conditions are not as constructive as the first, and instead rely on a powerful method often used in quantum information theory called \emph{decoupling}. 

The third section modifies a result of~\citeme{renes_conjectured_2009} and details how the three characterizations above can be formulated in terms of conditional entropy. Finally, by appealing to the uncertainty principle, we can make a slight modification to the entanglement recovery procedure to instead create \emph{private states}, which are the most general quantum-mechanical description of secret keys. Indeed this was actually the original motivation of~\citeme{renes_physical_2008}.

\definecolor{tud1b}{RGB}{0,90,169}
\definecolor{tud9b}{RGB}{230,0,26}
\definecolor{tud6a}{RGB}{255,224,92}

\newcommand{\ctrol}{\fill(0,0.1) circle(2pt);}
\newcommand{\target}{\draw(0,0.1) circle(4pt);}
\def\vgap{.75}
\def\hgap{1.75}
\def\hgapfudge{.25}
\def\re{7.5}
\begin{figure}[t!]
\begin{center}
\begin{tikzpicture}[thick]
\tikzstyle{empty} = [inner sep=1pt,outer sep=1pt]
\tikzstyle{gate} = [fill=white, draw]
\tikzstyle{ctrl} = [fill,shape=circle,minimum size=4pt,inner sep=0pt,outer sep=0pt]
\tikzstyle{targ} = [draw,shape=circle,minimum size=8pt,inner sep=0pt,outer sep=0pt]

\draw[dotted] (-3.75,0) -- (\re+2,0);
\node at (-2.75,.5) {Alice};
\node at (-2.75,-.5) {Bob};

\node[anchor=east] at (-.5,1) (al) {$A$};
\node[anchor=east] at (0.5,-1) (dl) {$\ket{0}^{C_Z}$};
\node[anchor=east] at (0.5,-1*\vgap-1) (cl) {$\ket{0}^{C_X}$};
\node[anchor=east] at (-.5,-2*\vgap-1) (bl) {$B$};

\node[empty,anchor=west] at (\re,1) (ar) {};
\node[empty,anchor=west] at (\re,-1) (dr) {};
\node[empty,anchor=west] at (\re,-\vgap-1) (cr) {};
\node[empty,anchor=west] at (\re,-2*\vgap-1) (br) {};

\draw (al) -- (ar);
\draw (bl) -- (br);
\draw (cl) -- (cr);
\draw (dl) -- (dr);

\node[gate] (mzc) at (2,-2*\vgap-1) {$\mathcal{M}_Z$};
\node[targ] (mzt) at (2,-1) {};
\draw (mzc.north) -- (mzt.north);

\node[targ] (mxt) at (2+\hgap,-\vgap-1) {};
\node[gate] (mxc) at (2+\hgap,-2*\vgap-1) {$\mathcal{M}_X$};
\draw (mxc.north) -- (mxt.north);

\node[targ] (vt) at (2+2*\hgap+.05,-\vgap-1) {};
\node[ctrl] (vc) at (2+2*\hgap+.05,-1) {};
\draw (vc) -- (vt.south);
\node[empty] (vn) at (2+2*\hgap+.05,-2*\vgap-1) {};

\draw[decorate,decoration={brace,aspect=0.5,amplitude=4},thick] (bl.south west) to node[midway,left,inner sep=5pt] {$\psi^{AB}$} (al.north west);
\draw[decorate,decoration={brace,amplitude=3},thick] (cr.north east) to node[midway,right,inner sep=5pt] {$\psi^{C_XB}$} (br.south east);
\draw[decorate,decoration={brace,aspect=0.75,amplitude=3},thick] (ar.north east) to node[near end,right,inner sep=5pt] {$\Phi^{AC_Z}$} (dr.south east);
\node [below of=mzc] (lb1) {$U_{\!\mathcal{M}_Z}^{BC_Z}$};
\node [below of=vn] (lb2) {$U_{\textsc{cnot}}^{C_ZC_X}$};
\node [below of=mxc] (lb3) {$U_{\!\mathcal{M}_X}^{BC_X}$};

\begin{pgfonlayer}{background}
\node [fill=tud1b!50,rounded corners, dashed,fit=(mzt) (lb1)] {};
\node [fill=tud9b!50,rounded corners,fit=(mxt) (lb3)] {};
\node [fill=tud6a!50,rounded corners,fit=(vc) (vt) (lb2)] {};
\end{pgfonlayer}

\end{tikzpicture}
\caption{\label{fig:entdec} The quantum circuit enabling entanglement recovery from a bipartite state $\psi^{AB}$ by Bob, when he can approximately predict measurement of either conjugate observable $X$ or $Z$ by Alice. It proceeds in three steps. First, Bob coherently performs the measurement $\mathcal{M}_Z$ allowing him to predict $Z$, storing the result in auxiliary system $C_Z$ (unitary $U_{\mathcal{M}_Z}^{BC_Z}$). Next, he coherently performs the measurement $\mathcal{M}_X$ allowing him to predict $X$, storing the result in auxiliary system $C_X$ (unitary $U_{\mathcal{M}_X}^{BC_X}$). Finally, to recover a maximally entangled state in system $C_Z$, he applies a controlled-\textsc{not} gate, with control $C_Z$ and target $C_X$ (unitary $U_{\textsc{cnot}}^{C_ZC_X}$). This procedure also leaves Bob holding the original input state $\psi^{AB}$ in systems $C_X$ and $B$.}
\end{center}
\vspace{-14pt}
\end{figure}
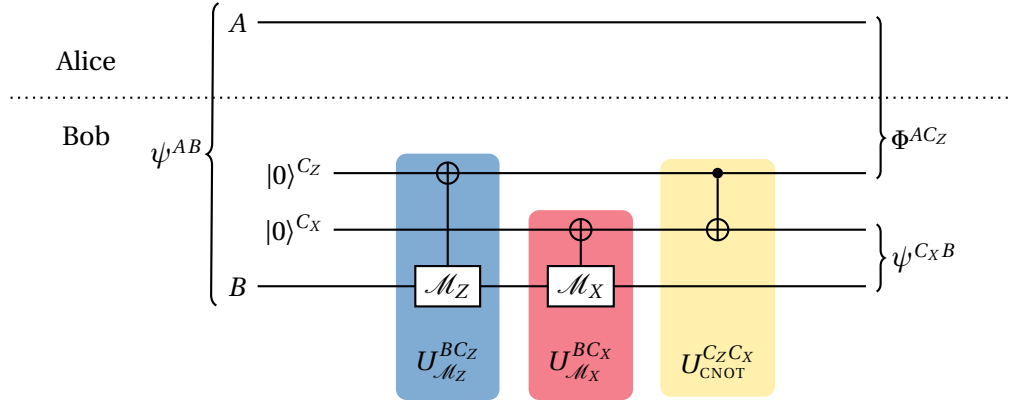

\section{Amplitude and Phase Predictability \& Entanglement}
\label{sec:main}
Let us now specify the setup under consideration more formally. Our two parties Alice and Bob are located some distance apart and each have a technologically-advanced laboratory in which they can manipulate quantum systems. Suppose now that Alice and Bob share a generic bipartite quantum state $\psi^{AB}$. Without loss of generality this state is the $AB$ subsystem of a pure state $\ket{\psi}^{ABE}$ for $E$ the ``environment''.\footnote{In the context of the pure state $\ket{\psi}^{ABE}$, $\psi^{AB}$ denotes the marginal state of the $AB$ system.} We can express this pure state in two ways by expanding Alice's system in the amplitude or phase basis,
\begin{align}
\ket{\psi}^{ABE}=\sum_{z=0}^1 \sqrt{p_z}\ket{z}^A\ket{\varphi_z}^{BE}\qquad \text{and}\qquad \ket{\psi}^{ABE}=\sum_{x=0}^1 \sqrt{q_x}\ket{\widetilde{x}}^A\ket{\vartheta_x}^{BE}.
\end{align}
Here $\ket{z}$ denote amplitude eigenstates according to $Z\ket{z}=(-1)^z\ket{z}$ and similarly $\ket{\widetilde{x}}$ denote phase eigenstates according to $X\ket{\widetilde{x}}=(-1)^x\ket{\widetilde{x}}$. The states $\ket{\varphi_z}^{BE}$ and $\ket{\vartheta_x}^{BE}$ are normalized pure states, but otherwise arbitrary; $p_z$ and $q_x$ are the probabilities that Alice obtains the outcome $z$ and $x$ for amplitude and phase measurements, respectively. 

If Alice makes the amplitude measurement corresponding to the observable $Z$ on her system, Bob can attempt to match her outcome by performing some generalized measurement on his system. This measurement is described most generally by a positive operator valued-measure (POVM) $\mathcal{M}_Z$, which consists of elements positive semidefinite operators $\Lambda_z$ such that $\sum_z\Lambda_z=\mathbbm{1}$.  The probability that he can correctly guess her outcome is given by 
\begin{align}
p_{\rm guess}(Z^A|\mathcal{M}_Z^B)_\psi\equiv\sum_{z=0}^1 {\rm Tr}\left[\left(P_z^A\otimes \Lambda^B_z\right)\psi^{AB}\right]=\sum_{z=0}^1p_z {\rm Tr}\left[\Lambda^B_z\varphi_z^B\right],
\end{align}
where $P_z$ is the projector onto the amplitude state $\ket{z}$, i.e.\ $P_z=\ket{z}\bra{z}$. The subscript on the guessing probability denotes which state we should use to evaluate it.  
To predict Alice's phase measurement, Bob would use a different POVM $\mathcal{M}_X$ with POVM elements $\Gamma_x$. His guessing probability is 
\begin{align}
p_{\rm guess}(X^A|\mathcal{M}_X^B)_\psi\equiv\sum_{x=0}^1 {\rm Tr}\left[\left(\widetilde{P}_x^A\otimes \Gamma^B_x\right)\psi^{AB}\right]=\sum_{x=0}^1 q_x{\rm Tr}\left[\Gamma^B_x\vartheta_x^B\right],
\end{align}
where now $\widetilde{P}_x$ is the projector onto the phase state $\ket{\widetilde{x}}$.

\subsection{Approximate Entanglement Implies Approximate Predictability}
Given the EPR state $\ket{\Phi}^{AB}$, we saw in Chapter~\ref{chap:intro} that Bob can perfectly predict Alice's amplitude and phase measurements. Before moving on to the converse, we can strengthen this to an approximate condition, that if Alice and Bob share a good approximation to $\ket{\Phi}^{AB}$, then the amplitude and phase measurements are approximately predictable in that Bob's error probabilities are small. The trick is to choose the correct notion of ``good approximation''. Such approximate conditions are important in considering practical scenarios, since Bob's prediction of Alice's measurement outcome will never be perfect. For the same reason, they ensure that the idea that predictability of complementary information is what counts for quantum information is truly a physical statement, not merely a mathematical curiosity of the theory. 

If we want the resulting error probabilities to be small, a natural choice is to demand the \emph{trace distance} between the state Alice and Bob actually share, $\psi^{AB}$, and the ideal state $\Phi^{AB}$ be less than some prescribed approximation parameter $\epsilon$. The trace distance between any two states $\rho$ and $\sigma$ is defined as $\tfrac12\left\|\rho-\sigma\right\|_1$, where $\|M\|_1\equiv {\rm Tr}\sqrt{M^\dagger M}$ for any operator $M$. The reason this is an appropriate choice stems from the fact that the trace distance cannot increase under quantum operations such as measurement, and that the guessing probability is directly related to the trace distance of the measured state. More generally, the trace distance between two states is also related to the maximum probability that the two states give different outcomes under any possible measurement.\footnote{See, e.g.~\cite{nielsen_quantum_2000} for an excellent introduction to and explication of the basic results in quantum information theory.} Thus, for small trace distance, the two states behave essentially identically under any possible measurement. 

Using the trace distance we can show that for approximate EPR pairs amplitude and phase are approximately predictable. Suppose that $\tfrac12\left\|\psi^{AB}-\Phi^{AB}\right\|_1\leq \epsilon$ and imagine both Alice and Bob perform the amplitude measurement on their respective systems. If $\Phi^{AB}$ were the actual state, the result would be ${\Phi}^{Z^AZ^B}=\tfrac12\sum_z P_z^A\otimes P_z^B$, where we use the observable $Z^A$, respectively $Z^B$, to denote that the state has been measured and to specify which measurement has been made. For $\psi^{AB}$ they obtain ${\psi}^{Z^AZ^B}=\sum_{z,z'} p_z{\rm Tr}[P_{z'}^B\varphi_z^B]P_z^A\otimes P_{z'}^B$. Computing the trace distance, we find
\begin{align}
\tfrac12\left\|{\Phi}^{Z^AZ^B}-{\psi}^{Z^AZ^B}\right\|_1&=\tfrac12\sum_{z,z'}\left|\tfrac 12\delta_{z,z'}-p_z{\rm Tr}[P_{z'}^B\varphi_z^B]\right|\left\|P_z^A\otimes P_{z'}^{B}\right\|_1\\&=\tfrac12\sum_{z,z'}\left|\tfrac 12\delta_{z,z'}-p_z{\rm Tr}[P_{z'}^B\varphi_z^B]\right|,
\end{align}
which is just the \emph{variational distance} between the ideal distribution $\tfrac 12\delta_{z,z'}$ and the actual distribution $p_z{\rm Tr}[P_{z'}^B\varphi_z^B]$. But the variational distance can also be expressed as follows
\begin{align}
\tfrac12\sum_{z,z'}\left|\tfrac 12\delta_{z,z'}-p_z{\rm Tr}[P_{z'}^B\varphi_z^B]\right|=\max_S\sum_{(z,z')\in S}\left|\tfrac 12\delta_{z,z'}-p_z{\rm Tr}[P_{z'}^B\varphi_z^B]\right|,
\end{align}
where $S$ is any subset of the pairs $(z,z')$. Choosing $S=(z,z')$ for $z\neq z'$ gives
\begin{align}
\tfrac12\left\|{\Phi}^{Z^AZ^B}-{\psi}^{Z^AZ^B}\right\|_1&\geq \sum_{z\neq z'}p_z{\rm Tr}[P_{z'}^B\varphi_z^B]\\
&=1-p_{\rm guess}(Z^A|Z^B)_\psi.
\end{align}
The latter expression is a slight abuse of notation, using the observable $Z^B$ to denote Bob's measurement.  
Because the trace distance cannot increase under the measurement, $p_{\rm guess}(Z^A|Z^B)_\psi\geq 1-\epsilon$. The same conclusion holds for the phase measurement, $p_{\rm guess}(X^A|X^B)_\psi\geq 1-\epsilon$. 

\subsection{Approximate Amplitude and Phase Predictability Implies Entanglement}
\label{sub:impent}

Now we can state and prove the converse, that approximate amplitude and phase predictability implies entanglement. We first establish a lemma which will also be used in the subsequent entanglement characterizations.  


\begin{lemma}
\label{lem:entdec}
Given a state $\ket{\psi}^{ABE}=\sum_z \sqrt{p_z}\ket{z}^A\ket{\varphi_z}^{BE}$, let $\ket{\psi}^{C_XBE}$ be the identical state with system $C_X$ replacing $A$, and define $\ket{\psi_Z}^{AC_ZBE}=\sum_{z}\sqrt{p_z}\ket{z}^A\ket{z}^{C_Z}\ket{\varphi_z}^{BE}$. If there exist partial isometries $U_1^{B\rightarrow C_ZB}$ and $U_2^{C_ZB\rightarrow C_ZC_XB}$ such that 
\begin{align}
&^{AC_ZBE\!\!}\braket{\psi_Z|U_1^{B\rightarrow C_ZB}|\psi}^{ABE}\geq 1-\epsilon_1,\qquad \text{and}\\
&\left(^{AC_Z\!\!}\bra{\Phi}\,^{C_XBE\!\!}\bra{\psi}\right)U_2^{C_ZB\rightarrow C_ZC_XB}\ket{\psi_Z}^{AC_ZBE}\geq 1-\epsilon_2,
\end{align}
then for $U^{B\rightarrow C_ZC_XB}=U_2^{C_ZB\rightarrow C_ZC_XB}U_1^{B\rightarrow C_ZB}\ket{\psi}^{ABE}$,
\begin{align}
\tfrac 12\left\|\ket{\Phi}^{AC_Z}\ket{\psi}^{C_XBE}-U^{B\rightarrow C_ZC_XB}\ket{\psi}^{ABE}\right\|_1\leq \sqrt{2\epsilon_1}+\sqrt{2\epsilon_2}.
\end{align}
\end{lemma}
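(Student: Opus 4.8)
The plan is a two-step \emph{chaining} argument: first bound the distance between $U_1\ket{\psi}^{ABE}$ and the intermediate coherent-copy state $\ket{\psi_Z}^{AC_ZBE}$, then bound the distance between $U_2\ket{\psi_Z}^{AC_ZBE}$ and the target $\ket{\Phi}^{AC_Z}\ket{\psi}^{C_XBE}$, and finally combine the two estimates by transporting the first one through $U_2$ and applying the triangle inequality for the trace norm. (I read the displayed definition as $U = U_2^{C_ZB\to C_ZC_XB}U_1^{B\to C_ZB}$, the trailing $\ket{\psi}^{ABE}$ in the statement being a typo.)

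First I would convert each hypothesis from a statement about an overlap into a statement about trace distance. For normalized pure vectors $\ket{\phi},\ket{\chi}$ one has the identity $\tfrac12\big\|\ket{\phi}\bra{\phi}-\ket{\chi}\bra{\chi}\big\|_1=\sqrt{1-|\braket{\phi|\chi}|^2}$, and since $1-|\braket{\phi|\chi}|^2\le 2\,(1-|\braket{\phi|\chi}|)$, an overlap bound $\braket{\phi|\chi}\ge 1-\epsilon$ yields $\tfrac12\big\|\ket{\phi}\bra{\phi}-\ket{\chi}\bra{\chi}\big\|_1\le\sqrt{2\epsilon}$. Applying this to the first hypothesis (with $\ket{\phi}=\ket{\psi_Z}^{AC_ZBE}$ and $\ket{\chi}=U_1\ket{\psi}^{ABE}$, both unit vectors) gives $\tfrac12\big\|\ket{\psi_Z}\bra{\psi_Z}-U_1\ket{\psi}\bra{\psi}U_1^\dagger\big\|_1\le\sqrt{2\epsilon_1}$, and to the second hypothesis gives $\tfrac12\big\|\ket{\Phi}\ket{\psi}\bra{\Phi}\bra{\psi}-U_2\ket{\psi_Z}\bra{\psi_Z}U_2^\dagger\big\|_1\le\sqrt{2\epsilon_2}$.

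Next I would use that conjugation by the partial isometry $U_2$ is norm-preserving on states supported in its initial subspace (which contains both $\ket{\psi_Z}$ and $U_1\ket{\psi}$), so the first estimate upgrades to $\tfrac12\big\|U_2\ket{\psi_Z}\bra{\psi_Z}U_2^\dagger - U\ket{\psi}\bra{\psi}U^\dagger\big\|_1\le\sqrt{2\epsilon_1}$ with $U=U_2U_1$. Then the triangle inequality along the chain $\ket{\Phi}\ket{\psi}\;\rightsquigarrow\;U_2\ket{\psi_Z}\;\rightsquigarrow\;U\ket{\psi}$ yields $\tfrac12\big\|\ket{\Phi}\ket{\psi}\bra{\Phi}\bra{\psi} - U\ket{\psi}\bra{\psi}U^\dagger\big\|_1\le\sqrt{2\epsilon_2}+\sqrt{2\epsilon_1}$, which is the assertion.

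I do not expect a genuine obstacle here; the argument is essentially bookkeeping with the trace norm. The one point that needs care is the translation between the overlap hypotheses and the trace-distance estimates: one must check that every vector entering the Fuchs--van de Graaf-type identity is a unit vector, which in turn requires that $U_1$ and $U_2$ act isometrically on the particular states to which they are applied (so that conjugation by them preserves the trace norm and the intermediate states remain normalized). Once that is pinned down, the two applications of the identity plus one triangle inequality finish the proof.
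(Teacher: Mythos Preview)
Your argument is correct and essentially identical to the paper's proof: convert each overlap hypothesis to a trace-distance bound via $\tfrac12\|\phi-\chi\|_1\le\sqrt{1-F^2}\le\sqrt{2\epsilon}$, use invariance of the trace distance under the partial isometry $U_2$, and finish with the triangle inequality. You have supplied more detail than the paper (which dispatches the lemma in two sentences), and your attention to normalization and to the domain of $U_2$ is exactly the care the terse original glosses over.
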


\begin{proof}
The fidelity $F(\psi,\phi)=\braket{\psi|\phi}$ between two pure states gives an upper bound on their trace distance, $\tfrac12\left\|\psi-\phi\right\|_1\leq \sqrt{1-F(\psi,\phi)^2}$, so that  fidelity greater than $1-\epsilon$ translates into trace distance less than $\sqrt{2\epsilon}$. Since the trace distance is invariant under unitaries and partial isometries, the lemma follows from the triangle inequality.
\end{proof}
\begin{theorem}
\label{thm:entdec}
If  $p_{\rm guess}(Z^A|\mathcal{M}_Z^B)_\psi\geq 1-\epsilon_1$ and $p_{\rm guess}(X^A|\mathcal{M}_Z^B)_{\psi}\geq 1-\epsilon_2$ for some measurements $\mathcal{M}_Z^B$ and $\mathcal{M}_X^{B}$ on a state $\psi^{AB}$, then there exists a partial isometry $U^{B\rightarrow C_ZC_XB}$ such that 
\begin{align}
\tfrac 12\left\|\ket{\Phi}^{AC_Z}\ket{\psi}^{C_XBE}-U^{B\rightarrow BC_ZC_X}\ket{\psi}^{ABE}\right\|_1\leq \sqrt{2\epsilon_1}+\sqrt{2\epsilon_2}.
\end{align}
\end{theorem}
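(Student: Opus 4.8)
The plan is to obtain the theorem from Lemma~\ref{lem:entdec} by exhibiting the two partial isometries it requires, realised as the three-step circuit of Figure~\ref{fig:entdec}: first coherently run Bob's amplitude measurement $\mathcal{M}_Z$ into a fresh register $C_Z$ (this is $U_1^{B\to C_ZB}$), then coherently run his phase measurement $\mathcal{M}_X$ into a fresh register $C_X$, and finally apply a \textsc{cnot} controlled on $C_Z$ to turn the resulting correlations into a maximally entangled pair while restoring $B$ (these last two steps together form $U_2^{C_ZB\to C_ZC_XB}$). Granting the two fidelity estimates the lemma asks for, the trace-distance bound $\sqrt{2\epsilon_1}+\sqrt{2\epsilon_2}$ then follows with no further work.

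For $U_1$ I would use the Naimark dilation of $\mathcal{M}_Z$ built from the \emph{positive} square-root Kraus operators $M_z=\sqrt{\Lambda_z}$, so that $U_1\ket{\varphi_z}^{BE}=\sum_{z'}\ket{z'}^{C_Z}M_{z'}\ket{\varphi_z}^{BE}$ and measuring $C_Z$ in the amplitude basis reproduces the statistics of $\mathcal{M}_Z$. A one-line computation then gives $\braket{\psi_Z|U_1|\psi}=\sum_z p_z\braket{\varphi_z|\sqrt{\Lambda_z}|\varphi_z}$, which is real and, by the operator inequality $\sqrt{\Lambda_z}\geq\Lambda_z$ (valid since $0\leq\Lambda_z\leq\id$), is at least $\sum_z p_z\braket{\varphi_z|\Lambda_z|\varphi_z}=p_{\rm guess}(Z^A|\mathcal{M}_Z^B)_\psi\geq 1-\epsilon_1$ --- exactly the first hypothesis of the lemma.

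The construction and analysis of $U_2$ is the crux. Its first sub-step is the analogous coherent dilation of $\mathcal{M}_X$ with Kraus operators $N_x=\sqrt{\Gamma_x}$, storing the outcome in $C_X$ in the phase basis; its second sub-step is a fixed $C_XC_Z$ two-qubit gate. To see why this produces the EPR pair, I would rewrite Bob's conditional state $\ket{\varphi_z}^{BE}$ in the phase Schmidt basis of $\ket{\psi}$: projecting $A$ onto $\ket{z}$ and using $\braket{z|\widetilde{x}}=\tfrac{1}{\sqrt2}(-1)^{zx}$ gives $\ket{\varphi_z}^{BE}=\tfrac{1}{\sqrt{2p_z}}\sum_x\sqrt{q_x}(-1)^{zx}\ket{\vartheta_x}^{BE}$. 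In the ideal (perfect-guessing) case $N_x\ket{\vartheta_x}=\ket{\vartheta_x}$ and $N_{x'}\ket{\vartheta_x}=0$ for $x'\neq x$, so coherent $\mathcal{M}_X$ sends $\ket{\psi_Z}$ to $\sum_x\sqrt{q_x}\ket{\widetilde{x}}^{C_X}\ket{\vartheta_x}^{BE}\otimes(Z^{C_Z})^{x}\ket{\Phi}^{AC_Z}$, and the $x$-conditioned Pauli on $C_Z$ is precisely what the fixed $C_XC_Z$ gate cancels, leaving $\ket{\Phi}^{AC_Z}\otimes\ket{\psi}^{C_XBE}$ exactly. For the approximate case the same $\sqrt{\Gamma_x}\geq\Gamma_x$ estimate, applied to the overlap $(\bra{\Phi}^{AC_Z}\bra{\psi}^{C_XBE})\,U_2\ket{\psi_Z}$ after commuting the fix-up gate onto the bra, reduces it to $\sum_x q_x\braket{\vartheta_x|\sqrt{\Gamma_x}|\vartheta_x}\geq p_{\rm guess}(X^A|\mathcal{M}_X^B)_\psi\geq 1-\epsilon_2$, the lemma's second hypothesis.

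Feeding both bounds into Lemma~\ref{lem:entdec} with $U=U_2U_1$ completes the argument. The one genuinely delicate point, I expect, is the $\mathcal{M}_X$ step: one must pick the coherent implementation and the fixed $C_XC_Z$ gate so that the amplitude register created in step one does not damage the phase correlation that $\mathcal{M}_X$ relies on, and so that the gate both cancels the conditional Pauli and decouples $C_X$ together with $B$ back into the original marginal $\ket{\psi}^{C_XBE}$; everything else is routine bookkeeping with the Fourier relation between the amplitude and phase Schmidt bases and with the inequality $\sqrt{M}\geq M$ for $0\leq M\leq\id$.
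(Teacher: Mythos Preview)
Your proposal is correct and follows essentially the same route as the paper's proof: coherently implement $\mathcal{M}_Z$ with Kraus operators $\sqrt{\Lambda_z}$ to get the first fidelity bound via $\sqrt{\Lambda_z}\geq\Lambda_z$, then coherently implement $\mathcal{M}_X$ with Kraus operators $\sqrt{\Gamma_x}$ (storing the outcome in the phase basis of $C_X$), and finally apply a two-qubit fix-up gate, after which the second fidelity bound reduces to $\sum_x q_x\braket{\vartheta_x|\sqrt{\Gamma_x}|\vartheta_x}\geq p_{\rm guess}(X^A|\mathcal{M}_X^B)$. The only cosmetic difference is that the paper writes the ideal post-$\mathcal{M}_X$ state as $\tfrac{1}{\sqrt2}\sum_z\ket{z}^A\ket{z}^{C_Z}(X^z)^{C_X}\ket{\psi}^{C_XBE}$ and identifies the fix-up gate as the \textsc{cnot} $W_{\textsc{cnot}}^{C_ZC_X}$ with $C_Z$ as control, whereas you parse the same state as $\sum_x\sqrt{q_x}\ket{\widetilde{x}}^{C_X}\ket{\vartheta_x}^{BE}(Z^x)^{C_Z}\ket{\Phi}^{AC_Z}$ and describe the fix-up as undoing the $x$-conditioned $Z$ on $C_Z$; these are two descriptions of the very same unitary, so the arguments coincide.
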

\begin{proof}
We use the measurements to define the two isometries required for Lemma~\ref{lem:entdec}. 
For the first isometry $U_1^{B\rightarrow C_ZB}$ we may use the coherent implementation of the measurement $\mathcal{M}_Z^B$, which stores the measurement result in system $C_Z$. Performing the measurement coherently produces the state $U_1^{B\rightarrow C_ZB}\ket{\psi}^{ABE}$, which without loss of generality takes the form 
\begin{align}
U_1^{B\rightarrow C_ZB}\ket{\psi}^{ABE}=\sum_{z,z'}\sqrt{p_z}\ket{z}^A\ket{z'}^{C_Z}\sqrt{\Lambda^B_{z'}}\ket{\varphi_z}^{BE}.
\end{align}
Now compute the overlap of this state with the state $\ket{\psi_Z}^{AC_ZBE}=\sum_z \sqrt{p_z}\ket{z}^A\ket{z}^{C_Z}\ket{\varphi_z}^{BE}$, which would be the ideal output of the coherent measurement process.
\begin{align}
\braket{\psi_Z|U_1^{B\rightarrow C_ZB}|\psi}^{ABE}&=\sum_z p_z\bra{\varphi_z}\sqrt{\Lambda^B_z}\ket{\varphi_z}^{BE}\\
&\geq \sum_z p_z\bra{\varphi_z}\Lambda^B_z\ket{\varphi_z}^{BE}\\
&=p_{\rm guess}(Z^A|\mathcal{M}_Z^B)_\psi,
\end{align}
using the fact that $\sqrt{\Lambda}\geq \Lambda$ for $0\leq \Lambda\leq \mathbbm{1}$. Hence, we have the first condition of Lemma~\ref{lem:entdec}. 

For the second, let $V^{B\rightarrow C_XB}$ be the partial isometry which coherently implements the measurement $\mathcal{M}_X^B$, storing the result in system $C_X$. Coherently measuring $\ket{\psi_Z}^{AC_ZBE}$ gives 
\begin{align}
V^{B\rightarrow C_XB}\ket{\psi_Z}^{AC_ZBE}&=\sum_z \sqrt{p_z}\ket{z}^A\ket{z}^{C_Z}V^{B\rightarrow C_XB}\ket{\varphi_z}^{BE}\\
&=\tfrac{1}{\sqrt{2}}\sum_z \ket{z}^A\ket{z}^{C_Z}V^{B\rightarrow C_XB}\sum_x(-1)^{xz}\sqrt{q_x}\ket{\vartheta_x}^{BE}\\
&=\tfrac{1}{\sqrt{2}}\sum_z \ket{z}^A\ket{z}^{C_Z}\sum_{x,x'}(-1)^{xz}\sqrt{q_x}\ket{\widetilde{x}'}\sqrt{\Gamma_{x'}^B}\ket{\vartheta_x}^{BE}.
\end{align}
Here we have made use of the algebraic relationship between the two bases. 
Ideally the output would be the state
\begin{align}
\ket{\psi'_Z}^{AC_ZC_XBE}&=\tfrac{1}{\sqrt{2}}\sum_{z} \ket{z}^A \ket{z}^{C_Z}\sum_x\sqrt{q_x}\,(-1)^{xz}\,\ket{\widetilde{x}}^{C_X}\ket{\vartheta_x}^{BE},
\end{align}
and computing the fidelity between the ideal and actual outputs gives
\begin{align}
^{AC_ZC_XBE\!\!}\bra{\psi'_Z}V^{B\rightarrow C_XB}\ket{\psi_Z}^{AC_ZBE}
&=\tfrac12\sum_{z,x,x',x''}\sqrt{q_xq_{x''}}(-1)^{z(x-x'')}\braket{\widetilde{x}''|\widetilde{x}'}
\bra{\vartheta_{x''}}\sqrt{\Gamma_{x'}^B}\ket{\vartheta_x}^{BE}\\
&=\sum_x q_x\bra{\vartheta_{x}}\sqrt{\Gamma_{x}^B}\ket{\vartheta_x}^{BE}\\
&\geq \sum_x q_x\bra{\vartheta_{x}}{\Gamma_{x}^B}\ket{\vartheta_x}^{BE}\\
&=p_{\rm guess}(X^A|\mathcal{M}_X^B)_\psi.
\end{align}

We may also express $\ket{\psi'_Z}^{AC_ZC_XBE}$ as $\tfrac{1}{\sqrt{2}}\sum_{z} \ket{z}^A \ket{z}^{C_Z}(X^z)^{C_X}\ket{\psi}^{C_XBE}$, and therefore applying a control-\textsc{not} $W_{\textsc{cnot}}^{C_ZC_X}$ with $C_Z$ as the control and $C_X$ as the target to the ideal output gives $\ket{\Phi}^{AC_Z}\ket{\psi}^{C_XBE}$. Since the fidelity is invariant under partial isometries, the second condition of Lemma~\ref{lem:entdec} holds for $U_2^{C_ZB\rightarrow C_ZC_XB}=W_{\textsc{cnot}}^{C_ZC_X}V^{B\rightarrow C_XB}$, completing the proof. 
\end{proof}

\subsection{Further Uses of the Entanglement Recovery Operation}

We originally introduced the environment system $E$ as the purification of the joint state held by Alice and Bob, but of course we can look at it the other way around; generically Alice and Bob jointly hold the purification of system $E$. However, our entanglement recovery operation has done more than just recover entanglement, as it reveals that when Alice's amplitude and phase measurements are predictable by Bob, he implicitly holds the purification of $E$ by himself. This follows because the recovery operation also produces (a good approximation to) the state $\ket{\psi}^{C_XBE}$, which is identical to the initial state except system $A$ is replaced by $C_X$, held by Bob. In the following chapter we shall use this property to construct protocols for \emph{state merging}, in which Alice attempts to merge her state with Bob by using classical or quantum communication.   

Theorem~\ref{thm:entdec} may be regarded as giving necessary and sufficient conditions on the existence of an approximate quantum error-correction scheme: Approximate error correction is possible when amplitude and phase information can each approximately be recovered. The schemes discussed in Chapter 2 based on quantum error-correcting codes were \emph{perfect} in the sense that the input quantum state can be perfectly recovered if the error is of the correctable type. Approximate error-correction sets the more modest goal of only recovering the input approximately. 

Entanglement recovery is relevant to this goal because we can always mimic the initial single-system input to the error-correction problem as half of an EPR pair, the other half of which is then measured in an appropriate basis. The basis is given by complex conjugating the coefficients of the original basis, which follows because the EPR state can be written as $\ket{\Phi}^{AB}=\frac{1}{\sqrt{2}}\sum_j \ket{\xi_j^*}^A\ket{\xi_j}^B$ for any basis $\{\ket{\xi_j}=\xi_{j0}\ket{0}+\xi_{j1}\ket{1}\}_{j=0}^1$, where $\ket{\xi_j^*}=\xi_{j0}^*\ket{0}+\xi_{j1}^*\ket{1}$.  This was precisely the method used in reducing the QKD scheme based on EPR pairs to one involving only preparation and measurement of single systems. From this line of reasoning it follows from a result of Schumacher~\cite{schumacher_sending_1996} that if entanglement can be approximately recovered by the scheme, then the approximation parameter sets a lower bound on the average fidelity with which single systems can be recovered by the same procedure.

In the preceding analysis, we have assumed that Alice's system is a qubit, whereas Bob's system is arbitrary. But the result may be easily extended to the case that Alice holds a $d$-level system by using the more general amplitude and phase operators defined by 
\begin{align}
\label{eq:wh}
X=\sum_{k=0}^{d-1}\ket{k\oplus 1}\bra{k}\qquad\text{and}\qquad Z=\sum_{k=0}^{d-1}e^{2\pi i k/d}\ket{k}\bra{k}.
\end{align}
Often these are called the Weyl-Heisenberg operators, as they have similar properties to the position and momentum operators of continuous-variable systems. Here the crucial point is that the algebraic properties of the amplitude and phase operators used in Theorem~\ref{thm:entdec} hold for higher-dimensional systems as well. In the sequel, we shall continue to specialize to the qubit case. 

\section{Duality \& Decoupling}
\label{sec:dnd}
The uncertainty principle Equation~(\ref{eq:jcbjmr}) establishes a tradeoff in how well Alice's amplitude measurement can be predicted using system $B$ and how well her phase measurement can be predicted using system $E$. In the previous section the sufficient conditions for entanglement were of the former type, but the tradeoff suggests that we might to be able to find sufficient conditions of the latter type and focus instead on what information system $E$ does \emph{not} have, rather than what information system $B$ does have. Concentrating on lack of information and building protocols by destroying correlations is the essence of the decoupling approach to quantum information processing, which
goes back to work on approximate error-correction by Schumacher and Westmoreland~\cite{schumacher_approximate_2002}
and has found wide application to constructing information processing protocols such as state merging~\cite{horodecki_partial_2005,horodecki_quantum_2007} and noisy channel coding~\cite{horodecki_quantum_2008,hayden_random_2008,hayden_decoupling_2008,klesse_random_2008} that we shall encounter in Chapter~\ref{chap:proc}.\footnote{The decoupling approach has also been extended to quantum channels, instead of quantum states as described here, in~\cite{klesse_approximate_2007,kretschmann_information-disturbance_2008,kretschmann_complementarity_2008,beny_general_2010,ng_simple_2010}.} 

In the decoupling approach one tries to show that Alice's system is completely uncorrelated with system $E$ in order to infer that Bob's system is entangled with Alice's. Here, however, we shall be able to show that it suffices for this purpose to ensure that $E$ has no information about Alice's amplitude or phase. This reflects our main theme that what really counts in quantum information is classical information about complementary observables.  
Part of the appeal of decoupling is that it allows us to avoid the problem of constructing the isometries needed for Lemma~\ref{lem:entdec}. Instead, the isometries are automatically constructed by appealing to Uhlmann's theorem on the relationship between fidelity of mixed states and that of their possible purifications.

However, from the uncertainty principle we are only entitled to expect that if $Z^A$ is predictable from $B$, then $X^A$ is unpredictable from $E$, but not the converse. Were the converse true in general, we could immediately establish that lack of information in $E$ is sufficient to imply the presence of entanglement because it would imply the conditions we already have in Lemma~\ref{lem:entdec} and Theorem~\ref{thm:entdec}. Absent the converse, it is not immediately clear that this approach will work. 

Note that the converse does hold if it happens the state $\ket{\psi}^{ABE}$ saturates Equation~(\ref{eq:jcbjmr}), so that $H(Z^A|B)_\psi+H(X^A|E)_\psi=1$. Two separate sufficient conditions for equality in the uncertainty principle are derived in~\cite{renes_conjectured_2009}, and these take the simple form $p_{\rm guess}(X^A|B)_\psi=1$ and $p_{\rm guess}(Z^A|E)_\psi=1$. Equivalently, each of these conditions implies the complementary form $H(X^A|B)+H(Z^A|E)=1$ is trivially saturated, since either $H(X^A|B)=0$ and thus $H(Z^A|E)=1$ or $H(Z^A|E)=0$ and thus $H(X^A|E)=1$. Luckily, it turns out that due to the structure of Lemma~\ref{lem:entdec}, either of these two equality conditions can be satisfied without loss of generality to the entanglement criteria. We shall make use of both in the two results presented next. 

First, we need to formally characterize the unpredictability of measurements on Alice's system when making use of the purification system $E$. The most straightforward approach would be to say that the associated guessing probabilities are small, even for the optimal measurement. However, optimal measurements are quite often difficult to specify in quantum information theory. To sidestep this problem, we may instead use the following quantity,
\begin{align}
\label{eq:securedef}
p_{\rm secure}(Z^A|E)_\psi=1-\tfrac12\big\|{\psi}^{Z^AE}-\tfrac12\mathbbm{1}^A\otimes\psi^E\big\|_1,
\end{align} 
and say that $E$ has no information about $Z^A$ when $p_{\rm secure}(Z^A|E)_\psi$ is nearly one. Another possibility would be to phrase matters in terms of the conditional entropy, stating that $H(Z^A|E)$ is large. It turns out that such an entropic condition implies that of Equation~(\ref{eq:securedef}) and we shall return to this point in the next section. 

A bound on $p_{\rm secure}(Z^A|E)_\psi$ also implies a bound on the guessing probability, as follows. Suppose system $E$ is measured with some POVM $\mathcal{M}_Z^E=\{\Lambda_z^E\}$. The probability distributions of measurement outcomes on the real and ideal states are $p_{z,z'}=p_z{\rm Tr}[\Lambda_{z'}^E\varphi_z^E]$ and $p'_{z,z'}=\tfrac12{\rm Tr}[\Lambda_{z'}^E\varphi^E]$, respectively. For the variational distance we find
\begin{align}
\tfrac12\sum_{z,z'}|p_{z,z'}-p'_{z,z'}|&\geq \tfrac12\sum_{z}p_{z,z}-p'_{z,z}\\
&=\tfrac12\sum_{z}\left(p_z{\rm Tr}[\Lambda_{z}^E\varphi_z^E]-\tfrac12{\rm Tr}[\Lambda_z^E\varphi^E]\right)\\
&=\tfrac12 p_{\rm guess}(Z^A|\mathcal{M}_Z^E)_\psi-\tfrac14,
\end{align}
and therefore $p_{\rm secure}(Z^A|E)_\psi\geq 1-\epsilon$ implies $p_{\rm guess}(Z^A|\mathcal{M}_Z^E)_\psi\leq \tfrac 12+2\epsilon$ for any measurement $\mathcal{M}_Z^E$. 
Observe that the quantity $p_{\rm secure}$ also implies the outcome of the $Z^A$ measurement is nearly random. This accounts for the name `secure' since effectively this means Alice can generate a secure secret key bit by this measurement. 

Now we are ready to state the new entanglement conditions. The first says that Alice and Bob implicitly share entanglement if Alice's amplitude measurement can be predicted using $B$ but not $E$. This is almost the same as saying that Alice and Bob can generate a shared secret key from their state, a point we return to in Section~\ref{sec:secretkey}.  
\begin{theorem}
\label{thm:dw}
If $p_{\rm secure}(Z^A|E)_\psi\geq 1-\epsilon_2$ and $p_{\rm guess}(Z^A|\mathcal{M}_Z^B)_\psi\geq 1-\epsilon_1$ for some measurement $\mathcal{M}_Z^B$, then there exists a partial isometry $U^{B\rightarrow C_ZC_XB}$ such that 
\begin{align}
\tfrac 12\left\|\ket{\Phi}^{AC_Z}\ket{\psi}^{C_XBE}-U^{B\rightarrow BC_ZC_X}\ket{\psi}^{ABE}\right\|_1\leq \sqrt{2\epsilon_1}+\sqrt{2\epsilon_2}.
\end{align}
\end{theorem}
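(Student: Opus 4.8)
The plan is to route the argument through Lemma~\ref{lem:entdec}, exactly as in Theorem~\ref{thm:entdec}, but to manufacture the \emph{second} partial isometry by a decoupling argument instead of from an explicit phase measurement. The first hypothesis of the lemma is obtained verbatim from the proof of Theorem~\ref{thm:entdec}: take $U_1^{B\rightarrow C_ZB}$ to be the coherent implementation of $\mathcal{M}_Z^B$, which sends $\ket{\psi}^{ABE}$ to $\sum_{z,z'}\sqrt{p_z}\ket{z}^A\ket{z'}^{C_Z}\sqrt{\Lambda_{z'}^B}\ket{\varphi_z}^{BE}$; using $\sqrt{\Lambda}\geq\Lambda$ for $0\leq\Lambda\leq\mathbbm{1}$, its overlap with $\ket{\psi_Z}^{AC_ZBE}$ is at least $p_{\rm guess}(Z^A|\mathcal{M}_Z^B)_\psi\geq 1-\epsilon_1$, which is precisely the first condition of Lemma~\ref{lem:entdec} with parameter $\epsilon_1$.

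The substance of the proof is the second condition, and here I would play off two marginal computations. First, the coherently amplitude-copied state $\ket{\psi_Z}^{AC_ZBE}$ appearing in the lemma has $AE$-marginal exactly $\psi^{Z^A E}$ — tracing out $C_ZB$ collapses the $C_Z$ register against $A$ and reproduces the amplitude measurement — and $\psi^{Z^A E}$ is the state on which $p_{\rm secure}$ is defined in Equation~(\ref{eq:securedef}). Second, the target state $\ket{\Phi}^{AC_Z}\ket{\psi}^{C_XBE}$ is a purification of $\tfrac12\mathbbm{1}^A\otimes\psi^E$, since $\ket{\Phi}$ has maximally mixed marginal on $A$ and $C_X$ merely carries a copy of $A$. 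The hypothesis $p_{\rm secure}(Z^A|E)_\psi\geq 1-\epsilon_2$ is by definition the statement $\tfrac12\|\psi^{Z^A E}-\tfrac12\mathbbm{1}^A\otimes\psi^E\|_1\leq\epsilon_2$, so by the Fuchs--van de Graaf inequality $1-F\leq\tfrac12\|\cdot\|_1$ we get $F(\psi^{Z^A E},\tfrac12\mathbbm{1}^A\otimes\psi^E)\geq 1-\epsilon_2$. Now Uhlmann's theorem, applied to these two states together with the two purifications $\ket{\psi_Z}^{AC_ZBE}$ and $\ket{\Phi}^{AC_Z}\ket{\psi}^{C_XBE}$ (whose purifying registers $C_ZB$ and $C_ZC_XB$ are untouched on the $AE$ part), furnishes a partial isometry $U_2^{C_ZB\rightarrow C_ZC_XB}$ with $(\bra{\Phi}^{AC_Z}\bra{\psi}^{C_XBE})\,U_2\,\ket{\psi_Z}^{AC_ZBE}\geq 1-\epsilon_2$, which is exactly the second hypothesis of Lemma~\ref{lem:entdec}. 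The lemma then delivers the claimed bound $\sqrt{2\epsilon_1}+\sqrt{2\epsilon_2}$.

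The one genuinely new ingredient, and hence the step I expect to be the crux, is choosing the right target state so that the combined decoupling-and-uniformity content of $p_{\rm secure}(Z^A|E)_\psi$ becomes a statement purely about the $AE$-marginal, with $\ket{\psi_Z}$ and $\ket{\Phi}^{AC_Z}\ket{\psi}^{C_XBE}$ differing only on the registers $U_2$ is allowed to act on; once that is in place Uhlmann produces the isometry at no further cost. This is the characteristic trade-off of the decoupling method relative to Theorem~\ref{thm:entdec}: the proof is short and no coherent phase measurement need be built, but $U_2$ is obtained non-constructively. I would also note that, since Lemma~\ref{lem:entdec} only constrains the amplitude-copied state $\ket{\psi_Z}$, one may if desired first pass without loss of generality to a state saturating the uncertainty relation of Equation~(\ref{eq:jcbjmr}) as discussed in the preamble; but the Fuchs--van de Graaf/Uhlmann route above does not require this.
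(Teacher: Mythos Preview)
Your proposal is correct and follows essentially the same route as the paper: obtain the first hypothesis of Lemma~\ref{lem:entdec} verbatim from the proof of Theorem~\ref{thm:entdec}, then for the second hypothesis observe that $\psi_Z^{AE}=\psi^{Z^AE}$, convert the trace-distance bound from $p_{\rm secure}$ to a fidelity bound, and invoke Uhlmann's theorem on the purifications $\ket{\psi_Z}^{AC_ZBE}$ and $\ket{\Phi}^{AC_Z}\ket{\psi}^{C_XBE}$ to produce $U_2^{C_ZB\rightarrow C_ZC_XB}$. Your identification of the crux---matching the $AE$-marginals so that Uhlmann acts only on the allowed registers---is exactly right.
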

\begin{proof}
From the proof of Theorem~\ref{thm:entdec}, the first condition of Lemma~\ref{lem:entdec} is fulfilled for $U_1^{B\rightarrow C_ZB}$ the coherent implementation of the measurement $\mathcal{M}_Z^B$. For the second condition, consider the implications of $p_{\rm secure}(Z^A|E)_\psi\geq 1-\epsilon_2$ for the state $\ket{\psi_Z}^{AC_ZBE}$. Since tracing out $C_ZB$ from $\ket{\psi_Z}^{AC_ZBE}$ gives the same result as measuring the amplitude $Z^A$ of the state $\ket{\psi}^{ABE}$, we have
\begin{align}
\tfrac12\left\|\psi_Z^{AE}-\tfrac12\mathbbm{1}^A\otimes \psi^E\right\|_1\leq \epsilon_2.
\end{align}
The trace distance gives a lower bound to the fidelity, so that 
\begin{align}
F\left(\psi_Z^{AE},\tfrac12\mathbbm{1}^A\otimes \psi^E\right)\geq 1-\epsilon_2.
\end{align}
By Uhlmann's theorem, the fidelity of two mixed states is identical to the largest fidelity of their possible purifications. Two possible purifications of the two states in question are $\ket{\psi_Z}^{AC_ZBE}$ and $\ket{\Phi}^{AC_Z}\ket{\psi}^{C_XBE}$, and since all other purifications are related by isometries involving the purifying system, we have 
\begin{align}
\left(^{AC_Z\!\!}\bra{\Phi}\,^{C_XBE\!\!}\bra{\psi}\right)U_2^{C_ZB\rightarrow C_ZC_XB}\ket{\psi_Z}^{AC_ZBE}\geq 1-\epsilon_2
\end{align}
for some $U_2^{C_ZB\rightarrow C_ZC_XB}$. This is the sought-after second condition and completes the proof.
\end{proof}

Observe that although Theorem~\ref{thm:dw} calls for $p_{\rm secure}(Z^A|E)_\psi$ to be large, we actually apply this condition to the state $\ket{\psi_Z}^{AC_ZBE}$ for which $p_{\rm guess}(Z^A|C_ZB)=1$. Therefore the uncertainty relation $H(X^A|C_ZB)_\psi+H(Z^A|E)_\psi=1$ holds, and we are essentially able to trade large $p_{\rm secure}(Z^A|E)_\psi$ for large $p_{\rm guess}(X^A|C_ZB)_\psi$ as is needed for Lemma~\ref{lem:entdec}. Indeed, it follows from the discussion prior to Lemma~\ref{lem:entdec} that the following is an immediate corollary to Theorem~\ref{thm:dw}. Using the isometry $U^{B\rightarrow C_ZC_XB}$ to define the measurement $\mathcal{M}_X^B=\{U^{\dagger B\rightarrow C_ZC_XB}\widetilde{P}_x^{C_Z}U^{B\rightarrow C_ZC_XB}\}$, we have
\addtocounter{corollary}{1}
\begin{corollary}
\label{cor:secguessB}
If $p_{\rm secure}(Z^A|E)_\psi\geq 1-\epsilon$ and $p_{\rm guess}(Z^A|\mathcal{M}_Z^B)_\psi= 1$ for some measurement $\mathcal{M}_Z^B$, then there exists a measurement $\mathcal{M}_X^{B}$ such that $p_{\rm guess}(X^A|\mathcal{M}_X^{B})_{\psi}\geq 1-\sqrt{2\epsilon}$. 
\end{corollary}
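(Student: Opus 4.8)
The plan is to read Corollary~\ref{cor:secguessB} off Theorem~\ref{thm:dw}, specialized to the case $\epsilon_1=0$, together with the converse computation carried out in the discussion preceding Lemma~\ref{lem:entdec}. The two hypotheses of the corollary are exactly the hypotheses of Theorem~\ref{thm:dw} with $\epsilon_1=0$ (from $p_{\rm guess}(Z^A|\mathcal{M}_Z^B)_\psi=1$) and $\epsilon_2=\epsilon$, so the theorem immediately supplies a partial isometry $U^{B\rightarrow C_ZC_XB}$ with
\begin{align}
\tfrac12\big\|\ket{\Phi}^{AC_Z}\ket{\psi}^{C_XBE}-U^{B\rightarrow C_ZC_XB}\ket{\psi}^{ABE}\big\|_1\le\sqrt{2\epsilon}.
\end{align}

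First I would observe that $C_Z$ is a system Bob now holds, manufactured out of $B$ by $U$, so the recipe ``apply $U^{B\rightarrow C_ZC_XB}$, then measure $C_Z$ in the phase basis'' is a legitimate measurement on Bob's original system --- precisely the POVM $\mathcal{M}_X^B=\{U^{\dagger B\rightarrow C_ZC_XB}\widetilde{P}_x^{C_Z}U^{B\rightarrow C_ZC_XB}\}$ named just before the corollary. By monotonicity of the trace distance under the partial trace over $C_XBE$, the displayed bound descends to the $AC_Z$ marginal, giving $\tfrac12\|\Phi^{AC_Z}-\sigma^{AC_Z}\|_1\le\sqrt{2\epsilon}$, where $\sigma^{AC_Z}$ is the reduced state of $U^{B\rightarrow C_ZC_XB}\psi^{AB}(U^{B\rightarrow C_ZC_XB})^{\dagger}$. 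In other words, Alice and Bob implicitly share a $\sqrt{2\epsilon}$-approximate EPR pair on $A$ and $C_Z$.

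Second, I would run on $\sigma^{AC_Z}$ the same computation as in the discussion preceding Lemma~\ref{lem:entdec}: the event ``Bob's phase outcome matches Alice's'' is described by the single operator $P=\sum_x\widetilde{P}_x^A\otimes\widetilde{P}_x^{C_Z}$, which satisfies $0\le P\le\id$ and, because $\ket{\Phi}^{AC_Z}=\tfrac{1}{\sqrt2}\sum_x\ket{\widetilde{x}}^A\ket{\widetilde{x}}^{C_Z}$ by the amplitude/phase self-duality of Equation~(\ref{eq:epr}), obeys ${\rm Tr}[P\,\Phi^{AC_Z}]=1$. Since the value of ${\rm Tr}[P\,\cdot\,]$ can change by at most $\tfrac12\|\cdot\|_1$ when the state is perturbed, we get ${\rm Tr}[P\,\sigma^{AC_Z}]\ge 1-\sqrt{2\epsilon}$; and using $\mathcal{M}_X^B=\{U^{\dagger}\widetilde{P}_x^{C_Z}U\}$ one checks that ${\rm Tr}[P\,\sigma^{AC_Z}]$ is exactly $p_{\rm guess}(X^A|\mathcal{M}_X^B)_\psi$ (move $U$ across by cyclicity of the trace). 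This is the asserted bound.

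I do not expect a genuine obstacle here: all the substance --- the construction of $U$ via Uhlmann's theorem and the trade of large $p_{\rm secure}(Z^A|E)_\psi$ for large $p_{\rm guess}(X^A|C_ZB)_\psi$ --- is already packaged inside Theorem~\ref{thm:dw}. What requires care is only the bookkeeping: (i) confirming that measuring $C_Z$ after applying $U$ genuinely is a POVM on $B$, so the conclusion has the claimed form; (ii) confirming that the global trace-distance bound of Theorem~\ref{thm:dw} pushes forward to the $AC_Z$ marginal; and (iii) checking that on the ideal state the phase measurements on the two halves of $\ket{\Phi}$ agree with certainty, which is where Equation~(\ref{eq:epr}) enters. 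If one instead wants to avoid reusing the earlier converse calculation, step three is self-contained: it is just the elementary fidelity/trace-distance perturbation estimate for the fixed two-outcome coarse-graining ``match / no match''.
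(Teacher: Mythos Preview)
Your proposal is correct and follows essentially the same approach as the paper: apply Theorem~\ref{thm:dw} with $\epsilon_1=0$, $\epsilon_2=\epsilon$ to obtain the isometry $U$, define $\mathcal{M}_X^B=\{U^\dagger\widetilde{P}_x^{C_Z}U\}$, and then invoke the ``approximate EPR implies approximate predictability'' computation from the discussion preceding Lemma~\ref{lem:entdec}. You have simply unpacked that last step explicitly via the operator $P=\sum_x\widetilde{P}_x^A\otimes\widetilde{P}_x^{C_Z}$ and the standard trace-distance perturbation bound, whereas the paper leaves it as a cross-reference.
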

\noindent This is Theorem 4.2(a) of~\citeme{renes_duality_2010}.
Note that we have dropped the explicit use of $\ket{\psi_Z}$ by stipulating that $p_{\rm guess}(Z^A|\mathcal{M}_Z^B)_\psi=1$. 

Continuing the trend of denying information to $E$, we might hope that Alice and Bob share entanglement when the amplitude and phase are unpredictable using $E$. However, a simple counterexample shows that this cannot be true in general. Define $\ket{\psi}^{ABE}=\tfrac{1}{\sqrt{2}}\left(\ket{0}+i\ket{1}\right)^A\otimes \ket{\varphi}^{BE}$; here Alice's system is an eigenstate of the observable $XZ$. Due to the product structure, both $Z^A$ and $X^A$ are unpredictable using either $E$ or $B$. This is to be expected in light of the preceding discussion on the need to saturate the uncertainty principle. One way to avoid this problem is to require that not only is the amplitude measurement unpredictable using $E$, but the phase measurement is unpredictable using $E$ \emph{even assuming $E$ could predict the amplitude}. Formally, we require $p_{\rm secure}(X^A|C_ZE)_{\psi_Z}$ to be large, which again involves the state $\ket{\psi_Z}$ that saturates the uncertainty principle, though this time $C_Z$ is joined with $E$, not $B$. 

\begin{theorem}
\label{thm:2xdecoupling}
If $p_{\rm secure}(X^A|C_ZE)_{\psi_Z}\geq 1-\epsilon_1$ and $p_{\rm secure}(Z^A|E)_\psi\geq 1-\epsilon_2$, then there exists a partial isometry $U^{B\rightarrow C_ZC_XB}$ such that 
\begin{align}
\tfrac 12\left\|\ket{\Phi}^{AC_Z}\ket{\psi}^{C_XBE}-U^{B\rightarrow BC_ZC_X}\ket{\psi}^{ABE}\right\|_1\leq \sqrt{2\epsilon_1}+\sqrt{2\epsilon_2}.
\end{align}
\end{theorem}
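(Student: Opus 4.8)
The plan is to derive the theorem from Lemma~\ref{lem:entdec}, exactly as Theorems~\ref{thm:entdec} and~\ref{thm:dw} are, by producing the two partial isometries $U_1^{B\to C_ZB}$ and $U_2^{C_ZB\to C_ZC_XB}$ it needs; the new feature is that \emph{both} are now obtained from Uhlmann's theorem rather than from an explicitly given measurement. The second isometry is produced exactly as in the proof of Theorem~\ref{thm:dw}: amplitude-measuring $\ket{\psi}^{ABE}$ and discarding $B$ gives the same state on $AE$ as tracing $C_ZB$ out of $\ket{\psi_Z}^{AC_ZBE}$, namely $\psi_Z^{AE}$, so $p_{\rm secure}(Z^A|E)_\psi\ge1-\epsilon_2$ yields $\tfrac12\|\psi_Z^{AE}-\tfrac12\mathbbm{1}^A\otimes\psi^E\|_1\le\epsilon_2$ and hence $F(\psi_Z^{AE},\tfrac12\mathbbm{1}^A\otimes\psi^E)\ge1-\epsilon_2$; Uhlmann applied to the purifications $\ket{\psi_Z}^{AC_ZBE}$ (purifying system $C_ZB$) and $\ket{\Phi}^{AC_Z}\ket{\psi}^{C_XBE}$ (purifying system $C_ZC_XB$) then gives a $U_2^{C_ZB\to C_ZC_XB}$ with $(\bra{\Phi}^{AC_Z}\bra{\psi}^{C_XBE})U_2^{C_ZB\to C_ZC_XB}\ket{\psi_Z}^{AC_ZBE}\ge1-\epsilon_2$, which is the lemma's second hypothesis.

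The new ingredient is the first isometry, which must come out of $p_{\rm secure}(X^A|C_ZE)_{\psi_Z}\ge1-\epsilon_1$. The observation driving it is that this quantity measures precisely how far the true marginal $\psi^{AE}$ of $\ket{\psi}^{ABE}$ sits from its amplitude-dephased version $\psi_Z^{AE}$: expanding $\ket{z}^A=\tfrac{1}{\sqrt{2}}\sum_x(-1)^{xz}\ket{\widetilde x}^A$ as in the proof of Theorem~\ref{thm:entdec}, a short computation shows that $1-p_{\rm secure}(X^A|C_ZE)_{\psi_Z}$ and $\tfrac12\|\psi^{AE}-\psi_Z^{AE}\|_1$ both equal $\sqrt{p_0p_1}\,\|{\rm Tr}_B[\ket{\varphi_0}\bra{\varphi_1}^{BE}]\|_1$, the amplitude coherence of $A$ relative to $E$. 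Thus $\tfrac12\|\psi^{AE}-\psi_Z^{AE}\|_1\le\epsilon_1$, so $F(\psi^{AE},\psi_Z^{AE})\ge1-\epsilon_1$, and since $\ket{\psi}^{ABE}$ purifies $\psi^{AE}$ with purifying system $B$ while $\ket{\psi_Z}^{AC_ZBE}$ purifies $\psi_Z^{AE}$ with the larger purifying system $C_ZB$, Uhlmann delivers a partial isometry $U_1^{B\to C_ZB}$ with $\braket{\psi_Z|U_1^{B\to C_ZB}|\psi}\ge1-\epsilon_1$, the lemma's first hypothesis. Taking $U=U_2^{C_ZB\to C_ZC_XB}U_1^{B\to C_ZB}$, Lemma~\ref{lem:entdec} then gives the bound $\sqrt{2\epsilon_1}+\sqrt{2\epsilon_2}$.

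The crux, and the only step not transcribed from an earlier proof, is the implication $p_{\rm secure}(X^A|C_ZE)_{\psi_Z}\ge1-\epsilon_1\implies F(\psi^{AE},\psi_Z^{AE})\ge1-\epsilon_1$. This is the concrete content of the remark that $\ket{\psi_Z}$ saturates the uncertainty principle ``with $C_Z$ joined to $E$'': since $A$'s amplitude has been coherently copied into $C_Z$, ``$E$ cannot predict $A$'s phase even holding the copy $C_Z$'' is the same as ``$A$'s amplitude is already nearly classical relative to $E$'', and it is this near-classicality of $\psi^{AE}$ that lets Uhlmann build an amplitude-recovery map into $B$ with no measurement ever named. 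Using $p_{\rm secure}(X^A|E)_\psi$ instead would bound the \emph{phase}-basis coherences of $\psi^{AE}$ rather than its amplitude-basis ones and the reduction would collapse, so keeping $C_Z$ alongside $E$ is exactly what makes the trace-distance identity close. What remains is routine bookkeeping: the constant in that identity, and the fact that the unequal-dimension form of Uhlmann's theorem applies here since $\dim\mathcal{H}_{C_ZB}\ge\dim\mathcal{H}_B$ and $\dim\mathcal{H}_{C_ZC_XB}\ge\dim\mathcal{H}_{C_ZB}$.
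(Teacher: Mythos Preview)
Your proposal is correct and follows essentially the same approach as the paper: both conditions of Lemma~\ref{lem:entdec} are obtained via Uhlmann's theorem, the second exactly as in Theorem~\ref{thm:dw}, and the first by establishing $\tfrac12\|\psi^{AE}-\psi_Z^{AE}\|_1\le\epsilon_1$ from the phase-security assumption. The only cosmetic difference is in how that trace-distance identity is verified: the paper rewrites $\ket{\psi_Z}$ in the phase basis as $\tfrac{1}{\sqrt 2}\sum_x\ket{\widetilde x}^A(Z^x)^{C_Z}\ket{\psi}^{C_ZBE}$ and uses unitary invariance of the trace distance under $(Z^x)^{C_Z}$ to equate all summands in $1-p_{\rm secure}$, whereas you compute both sides explicitly as the amplitude coherence $\sqrt{p_0p_1}\|{\rm Tr}_B[\ket{\varphi_0}\bra{\varphi_1}^{BE}]\|_1$.
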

\begin{proof}
From the proof of Theorem~\ref{thm:dw}, the second condition here implies the second condition of Lemma~\ref{lem:entdec} is satisfied. To show the first, consider the implications of 
$p_{\rm secure}(X^A|C_ZE)_{\psi_Z}\geq 1-\epsilon_1$ for the state $\ket{\psi_Z}^{AC_ZBE}$. We may express the state as follows
\begin{align}
\ket{\psi_Z}^{AC_ZBE}&=\tfrac{1}{\sqrt{2}}\sum_x\ket{\widetilde{x}}^A\sum_z \sqrt{p_z}\,(-1)^{xz}\,\ket{z}^{C_Z}\ket{\varphi_z}^{BE}\\
&=\tfrac{1}{\sqrt{2}}\sum_x\ket{\widetilde{x}}^A(Z^x)^{C_Z}\sum_z \sqrt{p_z}\ket{z}^{C_Z}\ket{\varphi_z}^{BE}\\
&=\tfrac{1}{\sqrt{2}}\sum_x\ket{\widetilde{x}}^A(Z^x)^{C_Z}\ket{\psi}^{C_ZBE}.
\label{eq:psizphase}
\end{align}
Defining $\ket{\eta_x}^{C_ZBE}\equiv(Z^x)^{C_Z}\ket{\psi}^{C_ZBE}$ and $\eta\equiv\tfrac12\sum_x\eta_x^{C_ZE}$, we have 
\begin{align}
1-p_{\rm secure}(X^A|C_ZE)_{\psi_Z}&=\tfrac12\Big\|\tfrac12\sum_x\ket{\widetilde{x}}\bra{\widetilde{x}}^A\otimes \eta^{C_ZE}_x-\tfrac12\mathbbm{1}^A\otimes\eta^{C_ZE}\Big\|_1\\
&=\tfrac14\sum_x\left\|\eta^{C_ZE}_x-\eta^{C_ZE}\right\|_1.
\end{align}
But since the trace distance is invariant under unitary operations, in particular $(Z^x)^{C_Z}$,
$\left\|\eta^{C_ZE}_x-\eta^{C_ZE}\right\|_1=\big\|\eta^{C_ZE}_{x'}-\eta^{C_ZE}\big\|_1$ for all $x,x'$. Observe that $\eta^{C_ZE}=\psi_Z^{C_ZE}$ since the random phase flip has the effect of ``measuring'' the amplitude of $C_Z$. As $\ket{\eta_0}^{C_ZBE}=\ket{\psi}^{C_ZBE}$, we can therefore infer that $\tfrac12\big\|\psi^{C_ZE}-\psi_Z^{C_ZE}\big\|_1\leq \epsilon_1$, or equivalently $\tfrac12\big\|\psi^{AE}-\psi_Z^{AE}\big\|_1\leq \epsilon_1$. Converting trace distance to fidelity and applying Uhlmann's theorem, we find there exists an isometry $U_1^{B\rightarrow C_ZB}$ such that 
\begin{align}
^{AC_ZBE\!\!}\bra{\psi_Z}U_1^{B\rightarrow C_ZB}\ket{\psi}^{ABE}\geq 1-\epsilon_1.
\end{align}
Thus, the first condition of Lemma~\ref{lem:entdec} is satisfied, completing the proof.
\end{proof}

Again, the resulting isometry can be used to define the measurement $\mathcal{M}_Z^B$ in the following, which is Theorem 4.2(b) of~\citeme{renes_duality_2010}.
\begin{corollary}
\label{cor:secguessR}
If $p_{\rm secure}(X^A|E)_\psi\geq 1-\epsilon$ and $p_{\rm guess}(Z^A|\mathcal{M}_Z^E)_\psi= 1$ for some measurement $\mathcal{M}_Z^E$, then there exists a measurement $\mathcal{M}_Z^{B}$ such that $p_{\rm guess}(Z^A|\mathcal{M}_Z^{B})_{\psi}\geq 1-\sqrt{2\epsilon}$. 
\end{corollary}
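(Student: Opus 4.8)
\section*{Proof proposal}

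The plan is to extract from the proof of Theorem~\ref{thm:2xdecoupling} just its \emph{first} isometry, $U_1^{B\rightarrow C_ZB}$, which coherently performs a $Z$-type measurement on Bob's system and carries $\ket{\psi}^{ABE}$ onto the state $\ket{\psi_Z}^{AC_ZBE}=\sum_z\sqrt{p_z}\ket{z}^A\ket{z}^{C_Z}\ket{\varphi_z}^{BE}$, and then to read the desired POVM off of it via $\mathcal{M}_Z^B=\{U_1^{\dagger B\rightarrow C_ZB}\,P_z^{C_Z}\,U_1^{B\rightarrow C_ZB}\}$. Unlike in Corollary~\ref{cor:secguessB}, where it was the \emph{phase} correlation of the recovered $\ket{\Phi}^{AC_Z}$ that mattered, here we only need the \emph{amplitude} correlation, which $\ket{\psi_Z}$ already possesses; so the second isometry of Lemma~\ref{lem:entdec} (and the second hypothesis of Theorem~\ref{thm:2xdecoupling}) never enters.

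First I would use $p_{\rm guess}(Z^A|\mathcal{M}_Z^E)_\psi=1$. Writing $\mathcal{M}_Z^E=\{\Lambda_z^E\}$, a guessing probability of exactly one forces $\mathrm{Tr}[\Lambda_z^E\varphi_z^E]=1$ for every $z$, hence $\Lambda_z^E\ket{\varphi_z}^{BE}=\ket{\varphi_z}^{BE}$ and $\sqrt{\Lambda_{z'}^E}\ket{\varphi_z}^{BE}=\delta_{z,z'}\ket{\varphi_z}^{BE}$. Consequently the coherent implementation of $\mathcal{M}_Z^E$, a partial isometry acting on $E$ alone, sends $\ket{\psi}^{ABE}$ \emph{exactly} to $\ket{\psi_Z}^{AC_ZBE}$ with $C_Z$ adjoined to $E$. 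Since trace distance is invariant under that isometry and it commutes with measuring $X^A$, I obtain $p_{\rm secure}(X^A|C_ZE)_{\psi_Z}=p_{\rm secure}(X^A|E)_\psi\geq 1-\epsilon$, which is precisely the input that drives the first isometry in the proof of Theorem~\ref{thm:2xdecoupling}.

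Next I would run that part of the argument verbatim: $p_{\rm secure}(X^A|C_ZE)_{\psi_Z}\geq 1-\epsilon$ gives $\tfrac12\|\psi^{AE}-\psi_Z^{AE}\|_1\leq\epsilon$, hence $F(\psi^{AE},\psi_Z^{AE})\geq 1-\epsilon$; as $\ket{\psi}^{ABE}$ and $\ket{\psi_Z}^{AC_ZBE}$ purify $\psi^{AE}$ and $\psi_Z^{AE}$ respectively, Uhlmann's theorem yields a partial isometry $U_1^{B\rightarrow C_ZB}$ with $^{AC_ZBE\!\!}\bra{\psi_Z}U_1^{B\rightarrow C_ZB}\ket{\psi}^{ABE}\geq 1-\epsilon$. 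Converting this fidelity bound to trace distance puts $U_1\ket{\psi}^{ABE}$ within $\sqrt{2\epsilon}$ of $\ket{\psi_Z}^{AC_ZBE}$, on which a $Z$-measurement of $A$ and of $C_Z$ agree with certainty. Then $p_{\rm guess}(Z^A|\mathcal{M}_Z^B)_\psi$ is the expectation of the projector $\sum_z P_z^A\otimes P_z^{C_Z}$ in the state $U_1\ket{\psi}\bra{\psi}U_1^{\dagger}$; this expectation equals $1$ in $\ket{\psi_Z}$, so by monotonicity of trace distance under the partial trace over $E$ it is at least $1-\sqrt{2\epsilon}$, as claimed.

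The delicate point is the identity $p_{\rm secure}(X^A|C_ZE)_{\psi_Z}=p_{\rm secure}(X^A|E)_\psi$: the inequality ``$\leq$'' is automatic (handing a distinguisher the extra register $C_Z$ cannot hurt it), but the reverse fails in general, and it is exactly the hypothesis $p_{\rm guess}(Z^A|\mathcal{M}_Z^E)_\psi=1$ — equivalently, that the state already saturates the uncertainty relation in the sense noted before Theorem~\ref{thm:dw} — that makes the amplitude record $C_Z$ producible by an isometry on $E$ alone, so that conditioning on it is free. Everything else is the routine fidelity/trace-distance bookkeeping already used for Corollary~\ref{cor:secguessB}.
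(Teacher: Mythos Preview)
Your proposal is correct and is exactly the argument the paper has in mind: the paper's ``proof'' is the single remark that ``the resulting isometry can be used to define the measurement $\mathcal{M}_Z^{B}$,'' and you have unpacked precisely that remark. In particular, your observation that only the first isometry $U_1^{B\to C_ZB}$ from the proof of Theorem~\ref{thm:2xdecoupling} is needed---because $\ket{\psi_Z}$ already has perfect $A$--$C_Z$ amplitude correlation---is the right reading; the second hypothesis of Theorem~\ref{thm:2xdecoupling} would indeed fail here (since $p_{\rm guess}(Z^A|\mathcal{M}_Z^E)_\psi=1$ makes $p_{\rm secure}(Z^A|E)_\psi$ small), and your identification of the coherent $\mathcal{M}_Z^E$ as an isometry on $E$ that manufactures $C_Z$ ``for free'' is exactly the mechanism the paper alludes to in the paragraph preceding Theorem~\ref{thm:2xdecoupling}.
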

Figure~\ref{fig:triality} illustrates the contents of Theorems 1, 2, and 3 by indicating which system must have what kind of information, or lack thereof, in order to infer the presence of entanglement between Alice and Bob.

\def\dist{3.5}
\def\dx{.5}
\def\height{.25}
\def\gap{.1}
\def\nodes{
\fill[tud6a!50] (0,0) circle (.75);
\node[anchor=center] at (0,0) (alice) {\Large $A$};
\fill[tud6a!50] (0.866*\dist,.5*\dist) circle (.75);
\node[anchor=center] at (0.866*\dist,.5*\dist) (bob) {\Large $B$};
\fill[tud6a!50] (0.866*\dist,-.5*\dist) circle (.75);
\node at (0.866*\dist,-.5*\dist) (eve) {\Large $E$};
}
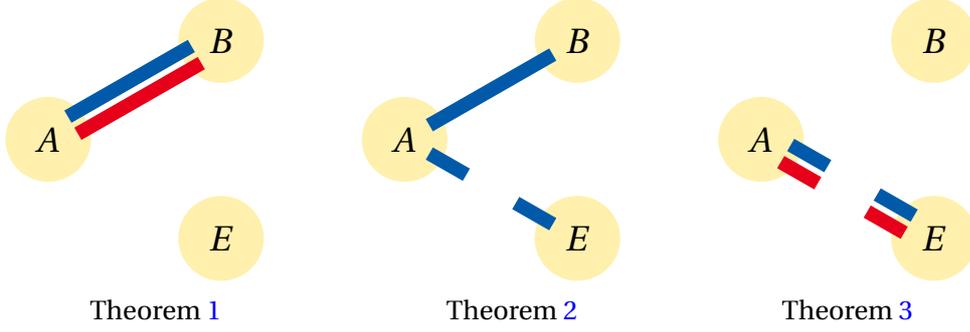
\begin{figure}[th!]
\begin{center}
\begin{tikzpicture}[scale=.75,thick]

\begin{scope}[shift={(-6.25,0)}]
\nodes
\fill[tud9b,rotate around={30:(0,0)}] (\dx,-\height-\gap/2) rectangle (\dist-\dx,-\gap/2);
\fill[tud1b,rotate around={30:(0,0)}] (\dx,\height+\gap/2) rectangle (\dist-\dx,+\gap/2);
\node at (0.5376*\dist,-3) {Theorem~\ref{thm:entdec}};
\end{scope}

\begin{scope}[shift={(0,0)}]
\nodes
\fill[tud1b,rotate around={30:(0,0)}] (\dx,-\height/2) rectangle (\dist-\dx,\height/2);
\fill[tud1b,rotate around={-30:(0,0)}] (\dx,-\height/2) rectangle (\dist-\dx,\height/2);
\fill[white,rotate around={-30:(0,0)}] (2.5*\dx,-\height) rectangle (\dist-2.5*\dx,\height);
\node at (0.5376*\dist,-3) {Theorem~\ref{thm:dw}};

\end{scope}

\begin{scope}[shift={(6.25,0)}]
\nodes
\fill[tud9b,rotate around={-30:(0,0)}] (\dx,-\height-\gap/2) rectangle (\dist-\dx,-\gap/2);
\fill[tud1b,rotate around={-30:(0,0)}] (\dx,\height+\gap/2) rectangle (\dist-\dx,+\gap/2);
\fill[white,rotate around={-30:(0,0)}] (2.5*\dx,-\height-.1) rectangle (\dist-2.5*\dx,\height+.1);
\node at (0.433*\dist,-3) {Theorem~\ref{thm:2xdecoupling}};
\end{scope}

\end{tikzpicture}
\caption{\label{fig:triality} Graphical depiction of the contents of Theorems~\ref{thm:entdec}, \ref{thm:dw}, and \ref{thm:2xdecoupling}. All three theorems specify conditions under which Alice and Bob can transform their shared state $\psi^{AB}$ into a collection of EPR pairs using only local operations. Theorem~\ref{thm:entdec} shows this can be done when both the amplitude and phase of Alice's system are correlated with Bob's system in that he could reliably predict either, depicted by the blue (amplitude) and red (phase) lines joining Alice and Bob. Theorem~\ref{thm:dw} shows that amplitude correlation with Bob and \emph{un}correlation with the environment leads to the same conclusion. Finally, Theorem~\ref{thm:2xdecoupling} establishes that appropriate uncorrelation of both amplitude and phase suffices to infer that the Alice-Bob system to be entangled.}
\end{center}
\vspace{-14pt}
\end{figure}

\section{Entropic Characterizations}
As advertised after Equation~(\ref{eq:securedef}), another possible formalization of ``unpredictability'' is using the conditional entropy: The amplitude measurement outcome $Z^A$ is unpredictable using $E$ when $H(Z^A|E)$ is large. Owing to the connection between conditional entropy and the quantity $p_{\rm secure}$ given by the following lemma, we can establish entropic conditions on entanglement from the results of the previous section. This was partially investigated in~\citeme{renes_conjectured_2009}.
\begin{lemma}
\label{lem:condent}
If $H(Z^A|E)_\psi\geq 1-\epsilon^2$, then $p_{\rm secure}(Z^A|E)_\psi\geq 1-\epsilon$.
\end{lemma}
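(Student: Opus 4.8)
The plan is to relate both sides to the conditional entropy $H(Z^A|E)_\psi$ through a chain of standard inequalities. Write the post-measurement state as the classical-quantum state $\psi^{Z^A E}=\sum_z p_z\,\ket{z}\bra{z}^A\otimes\varphi_z^E$. The quantity $1-p_{\rm secure}(Z^A|E)_\psi=\tfrac12\|\psi^{Z^AE}-\tfrac12\mathbbm{1}^A\otimes\psi^E\|_1$ is the trace distance to the ``ideal'' decoupled state. The hypothesis $H(Z^A|E)_\psi\geq 1-\epsilon^2$ says the conditional entropy is close to its maximal value $1$ (since $A$ is a qubit), which is exactly the statement that the measurement outcome is nearly uniform and nearly independent of $E$. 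So the heart of the matter is a quantitative version of ``$H(Z^A|E)$ maximal $\Rightarrow$ $\psi^{Z^AE}$ close to $\tfrac12\mathbbm{1}\otimes\psi^E$.''

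First I would pass from conditional entropy to the mutual-information-type quantity. Note $H(Z^A|E)_\psi = H(Z^A)_\psi - I(Z^A\!:\!E)_\psi$ and $H(Z^A)_\psi\leq 1$, so the hypothesis gives $H(Z^A)_\psi - I(Z^A\!:\!E)_\psi\geq 1-\epsilon^2$, hence both $1-H(Z^A)_\psi\leq\epsilon^2$ and $I(Z^A\!:\!E)_\psi\leq\epsilon^2$. The first bound, via Pinsker's inequality applied to the classical distribution $p_z$ versus uniform, controls how far $\psi^{Z^A}=\sum_z p_z\ket{z}\bra{z}$ is from $\tfrac12\mathbbm{1}$. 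The second bound is the relative entropy $D(\psi^{Z^AE}\|\psi^{Z^A}\otimes\psi^E)\leq\epsilon^2$, and Pinsker again gives $\tfrac12\|\psi^{Z^AE}-\psi^{Z^A}\otimes\psi^E\|_1\leq\sqrt{\tfrac{\ln 2}{2}}\,\epsilon$ (or a similar constant, depending on the normalization of $\log$). Then I would combine the two by the triangle inequality: $\tfrac12\|\psi^{Z^AE}-\tfrac12\mathbbm{1}\otimes\psi^E\|_1 \leq \tfrac12\|\psi^{Z^AE}-\psi^{Z^A}\otimes\psi^E\|_1 + \tfrac12\|\psi^{Z^A}\otimes\psi^E - \tfrac12\mathbbm{1}\otimes\psi^E\|_1$, and the second term equals $\tfrac12\|\psi^{Z^A}-\tfrac12\mathbbm{1}\|_1\leq\sqrt{\tfrac{\ln 2}{2}}\,\epsilon$ as well.

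This produces a bound of the shape $1-p_{\rm secure}(Z^A|E)_\psi\leq c\,\epsilon$ for an absolute constant $c$; the delicate point is getting $c=1$ exactly as the lemma claims. The main obstacle is therefore sharpness of constants: a naive two-term Pinsker argument costs roughly $\sqrt{2\ln 2}\,\epsilon\approx 1.18\,\epsilon$, which is too weak. To recover the clean constant $1$ I expect one needs a single combined application of quantum Pinsker to $D(\psi^{Z^AE}\,\|\,\tfrac12\mathbbm{1}^A\otimes\psi^E)$ directly, rather than splitting. Indeed one has the identity $D(\psi^{Z^AE}\,\|\,\tfrac12\mathbbm{1}^A\otimes\psi^E) = 1 - H(Z^A|E)_\psi$ (here using $\log_2$), so the hypothesis is literally $D(\psi^{Z^AE}\,\|\,\tfrac12\mathbbm{1}^A\otimes\psi^E)\leq\epsilon^2$, and then the version of Pinsker's inequality in the form $\tfrac12\|\rho-\sigma\|_1\leq\sqrt{D(\rho\|\sigma)}$ (with $D$ measured in nats, or with the appropriate base-2 convention matching the paper's $\log$) yields exactly $1-p_{\rm secure}(Z^A|E)_\psi\leq\sqrt{\epsilon^2}=\epsilon$. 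So the real content of the proof is (i) recognizing that the "secure" deviation is the trace distance in a single relative-entropy pair whose divergence is precisely $1-H(Z^A|E)$, and (ii) invoking Pinsker in the normalization consistent with the paper's entropy units; everything else is bookkeeping.
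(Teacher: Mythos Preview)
Your proposal is correct and, after the detour, lands on exactly the paper's argument: compute $D\big(\psi^{Z^AE}\,\big\|\,\tfrac12\mathbbm{1}^A\otimes\psi^E\big)=1-H(Z^A|E)_\psi$ directly and apply quantum Pinsker once. The paper handles your normalization worry by quoting Pinsker in the form $\tfrac{1}{2\ln 2}\|\rho-\sigma\|_1^2\leq D(\rho\|\sigma)$ (base-2 relative entropy) and then relaxing it to the cleaner $\tfrac12\|\rho-\sigma\|_1\leq\sqrt{D(\rho\|\sigma)}$, which is valid since $\sqrt{(\ln 2)/2}<1$; your initial two-term triangle-inequality route is unnecessary.
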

\begin{proof}
The proof relies on the connection between relative entropy $D(\rho||\sigma)$ of two states $\rho$ and $\sigma$ and trace distance between them, in particular the bound $\tfrac{1}{2\ln 2}\left\|\rho-\sigma\right\|_1^2\,\leq\, D(\rho||\sigma)$~\cite{schumacher_approximate_2002}. This may be more conveniently expressed as 
$\tfrac{1}{2}\left\|\rho-\sigma\right\|_1\,\leq\, \sqrt{D(\rho||\sigma)}$.
By direct calculation it is easy to show
\begin{align}
D\left(\psi_Z^{AE}\,\big\|\,\tfrac12\mathbbm{1}^A\otimes\psi^E\right)=1-H(Z^A|E)_\psi.
\end{align}
Using the bound on $H(Z^A|E)_\psi$ and the definition of $p_{\rm secure}(Z^A|E)_\psi$ completes the proof.  
\end{proof}

\begin{theorem}
\label{thm:entropic}
Given any of the following pairs of conditions,
\begin{align*}
\begin{array}{lll}
{\rm (1)}\quad H(Z^A|B)_\psi\leq \epsilon_1^2,\hspace{5mm}\phantom{,} & 
{\rm (2)}\quad H(Z^A|B)_\psi\leq \epsilon_1^2,\hspace{10mm}\phantom{,} &
{\rm (3)}\quad H(X^A|C_ZE)_{\psi_Z}\geq 1-\epsilon_1^2,\\[1mm]
\phantom{(1)\quad }H(X^A|B)_\psi\leq \epsilon_2^2&
\phantom{(2)\quad }H(Z^A|E)_\psi\geq 1-\epsilon_2^2 &
\phantom{(3)\quad }H(Z^A|E)_\psi\geq 1-\epsilon_2^2 
\end{array}
\end{align*}
there exists a partial isometry $U^{B\rightarrow C_ZC_XB}$ such that 
\begin{align}
\tfrac 12\left\|\ket{\Phi}^{AC_Z}\ket{\psi}^{C_XBE}-U^{B\rightarrow BC_ZC_X}\ket{\psi}^{ABE}\right\|_1\leq \sqrt{2\epsilon_1}+\sqrt{2\epsilon_2}.
\end{align}
\end{theorem}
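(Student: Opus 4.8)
The plan is to recognize Theorem~\ref{thm:entropic} as a direct corollary of Theorems~\ref{thm:entdec}, \ref{thm:dw}, and~\ref{thm:2xdecoupling}: for each of the three pairs of conditions I would translate every entropic hypothesis into the corresponding operational hypothesis (a lower bound on $p_{\rm guess}$ or on $p_{\rm secure}$) required by those theorems, so that case~(1) reduces to Theorem~\ref{thm:entdec}, case~(2) to Theorem~\ref{thm:dw}, and case~(3) to Theorem~\ref{thm:2xdecoupling}. Since all three of those theorems output precisely the desired trace-distance bound, nothing else is needed once the translation is carried out.

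For the ``unpredictable from $E$'' hypotheses — $H(Z^A|E)_\psi\ge 1-\epsilon_2^2$ in cases~(2) and~(3), and $H(X^A|C_ZE)_{\psi_Z}\ge 1-\epsilon_1^2$ in case~(3) — I would invoke Lemma~\ref{lem:condent}. Although that lemma is stated for the amplitude of $A$ conditioned on $E$ in the state $\psi$, its proof uses only the bound $\tfrac12\|\rho-\sigma\|_1\le\sqrt{D(\rho\|\sigma)}$ together with the identity $D(\tau^{WF}\|\tfrac12\mathbbm{1}^W\otimes\tau^F)=1-H(W|F)_\tau$ for a qubit register $W$; it therefore applies verbatim with $Z^A$ replaced by $X^A$, $E$ replaced by $C_ZE$, and $\psi$ replaced by $\psi_Z$. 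Hence each such hypothesis yields $p_{\rm secure}(\cdot|\cdot)\ge 1-\epsilon_i$, exactly the input the downstream theorems consume.

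For the ``predictable from $B$'' hypotheses — $H(Z^A|B)_\psi\le\epsilon_1^2$ in cases~(1) and~(2), and $H(X^A|B)_\psi\le\epsilon_2^2$ in case~(1) — I need the converse-type fact that small conditional entropy yields a good guessing measurement. The cleanest route is to note that the conditional von Neumann entropy dominates the conditional min-entropy, whose operational meaning supplies a POVM $\mathcal{M}_Z^B$ with $p_{\rm guess}(Z^A|\mathcal{M}_Z^B)_\psi = 2^{-H_{\min}(Z^A|B)_\psi}\ge 2^{-H(Z^A|B)_\psi}\ge 1-\epsilon_1$ (using $2^{-t}\ge 1-t$ and $\epsilon_1\le1$); equivalently a short Fano-type estimate on the optimal measurement gives, on a qubit, an error probability at most $\tfrac12 H(Z^A|B)_\psi\le\tfrac12\epsilon_1^2$. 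This is the only ingredient not already assembled in the excerpt, so I regard it as the ``main obstacle,'' although it is entirely standard.

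With these two conversions in hand, cases~(1), (2), and~(3) are immediate applications of Theorems~\ref{thm:entdec}, \ref{thm:dw}, and~\ref{thm:2xdecoupling} respectively, and the partial isometry they produce is the one asserted in the statement. I would also remark that the translations in fact deliver guessing and security bounds of the form $1-\epsilon_i^2$ rather than $1-\epsilon_i$, so the resulting trace-distance bound is really $\sqrt2(\epsilon_1+\epsilon_2)$, which is no weaker than the stated $\sqrt{2\epsilon_1}+\sqrt{2\epsilon_2}$ by the harmless estimate $\sqrt2\,\epsilon_i\le\sqrt{2\epsilon_i}$ for $\epsilon_i\le1$; stating the looser bound just keeps the conclusion uniform with Theorems~\ref{thm:entdec}--\ref{thm:2xdecoupling}.
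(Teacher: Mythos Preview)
Your approach is correct but takes a genuinely different route from the paper. The paper's proof handles only case~(3) directly---applying Lemma~\ref{lem:condent} (in both forms, as you note) and then Theorem~\ref{thm:2xdecoupling}---and then observes that cases~(1) and~(2) each \emph{imply} case~(3) via the uncertainty principle Equation~(\ref{eq:jcbjmr}). Concretely, applying the uncertainty relation to the pure state $\ket{\psi_Z}^{AC_ZBE}$ gives $H(Z^A|B)_{\psi_Z}+H(X^A|C_ZE)_{\psi_Z}\ge 1$, and since $H(Z^A|B)_{\psi_Z}=H(Z^A|B)_\psi$, the first hypothesis in~(1) or~(2) yields the first hypothesis in~(3); the second hypothesis of~(1) similarly gives the second hypothesis of~(2) via $H(X^A|B)_\psi+H(Z^A|E)_\psi\ge 1$. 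Thus the paper never needs to convert small conditional von Neumann entropy into a large guessing probability.

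Your route instead matches each case to its ``natural'' theorem (\ref{thm:entdec}, \ref{thm:dw}, \ref{thm:2xdecoupling}) and brings in the min-entropy fact $p_{\rm guess}=2^{-H_{\min}}\ge 2^{-H}\ge 1-H$ to handle the ``predictable from $B$'' hypotheses. This is perfectly valid and even yields the tighter bound $\sqrt{2}\,\epsilon_1+\sqrt{2\epsilon_2}$ in case~(2) and $\sqrt{2}(\epsilon_1+\epsilon_2)$ in case~(1) (your closing remark overstates this: in case~(3) Lemma~\ref{lem:condent} only delivers $p_{\rm secure}\ge 1-\epsilon_i$, so you get exactly the stated bound there). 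The trade-off is that the paper's argument stays entirely within the tools already developed and showcases the uncertainty principle as the unifying mechanism, whereas yours imports a standard but external operational inequality; interestingly, the paper remarks just after the theorem that the implication ``small $H(Z^A|B)$ $\Rightarrow$ large $p_{\rm guess}$'' is subtle in higher dimensions, which is part of the motivation for routing everything through case~(3).
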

\begin{proof}
Using Lemma~\ref{lem:condent} for the last pair, we can apply Theorem~\ref{thm:2xdecoupling}. But since $H(Z^A|B)_\psi=H(Z^A|B)_{\psi_Z}$, (1) and (2) each separately imply (3) by the uncertainty principle Equation~(\ref{eq:jcbjmr}). 
\end{proof}
That the first pair of entropic conditions implies that Alice and Bob share entanglement is a variation of a related result found by Christandl and Winter that quantum channels are useful for transmitting entanglement if they could be used to reliably transmit classical amplitude and phase information~\cite{christandl_uncertainty_2005}. 
The fact that the first pair of entropic conditions in Theorem~\ref{thm:entropic} are sufficient for systems $A$ of arbitrary dimension is actually somewhat surprising, as it is known that just because the conditional entropy $H(Z^A|B)$ is small does not imply that there exists a measurement $\mathcal{M}_Z^B$ such that $H(Z^A|\mathcal{M}_Z^B)=H(Z^A|B)$, let alone that the guessing probability $p_{\rm guess}(Z^A|\mathcal{M}_Z^B)$ is large.
In fact, Ruskai has shown that Bob's conditional marginal states must all commute pairwise for the conditional entropy to be achievable~\cite{ruskai_inequalities_2002}. 

The gap can be simply illustrated by the following example implicitly given by Holevo~\cite{holevo_statistical_1973}, in which Bob's state conditioned on Alice's amplitude basis measurement is a randomly-selected amplitude or phase eigenstate,
\begin{align}
\ket{\psi}^{AB}=\tfrac 12\sum_{t=0}^3 \ket{t}^A\ket{\varphi_t}^B,
\end{align}
for $\ket{\varphi_0}=\ket{0}$, $\ket{\varphi_1}=\ket{1}$, $\ket{\varphi_2}=\ket{+}$, and $\ket{\varphi_3}=\ket{-}$. By direct calculation we find $H(Z^A|B)=1$, where now $Z^A$ is any non-degenerate observable diagonal in the $\ket{z}$ basis. On the other hand, a derivation by DiVincenzo \emph{et al.}~\cite{divincenzo_locking_2004} using the Maassen-Uffink uncertainty relation Equation~(\ref{eq:maassen}) shows that the optimal measurement $\mathcal{M}_Z^B$ is such that $H(Z^A|\mathcal{M}_Z^B)=\frac 12$. The optimal measurement can also be found by exploiting the group covariance of Bob's states and appealing to a theorem of Davies~\cite{davies_information_1978}. Nonetheless, fulfilling \emph{both} entropy conditions evidently circumvents this issue, as the necessary measurements are defined by Corollaries 2 and 3.

\section{Secret Keys \& Private States}
\label{sec:secretkey}
With a very slight modification, we can extend the results above to give necessary and sufficient conditions on the ability to extract a secret key instead of an EPR pair from the state $\psi^{AB}$. In Section~\ref{sec:qkd} we discussed the fact that EPR pairs can be used to create secret keys, but entanglement of this form is not actually necessary, a fact first observed by Aschauer and Briegel~\cite{aschauer_security_2002}. Instead, bipartite quantum states which are capable of producing secret keys are called \emph{private states} and their general form was established by Horodecki \emph{et al.}~\cite{horodecki_secure_2005}. In this section we show that just like entanglement, knowledge of complementary observables plays a decisive role in characterizing private states.  

Private states have two defining features, as alluded to prior to Theorem~\ref{thm:dw}. First, the key measurements by Alice and Bob clearly must produce identical results. Second, the key should be completely random and uncorrelated with any third party, i.e.\ a would-be eavesdropper Eve. Without loss of generality we can assume that Alice and Bob have two systems each, $A$, $A'$ and $B$, $B'$, respectively, and the key bit is generated by amplitude measurements of $A$ and $B$. If they start with any other state having only systems $A'$ and $B'$, they can coherently perform the key generation measurements and store the result in the amplitude of systems $A$ and $B$, respectively.  Horodecki \emph{et al.}~\cite{horodecki_secure_2005} give the following characterization of \emph{ideal} private states, whose proof we include here for completeness. 
\begin{theorem}[Horodecki \emph{et al.}~\cite{horodecki_secure_2005}]
$\psi^{AA'BB'}$ is a private state iff there exists a \emph{twisting operator} $U^{AA'B'}$ of the form $U^{AA'B'}=\sum_{z=0}^1\ket{z}\bra{z}^{A}\otimes V_{z}^{A'B'}$ with $V_{z}$ unitary such that for some $\xi^{A'B'}$,
\begin{align}
\psi^{AA'BB'}=U^{AA'B'}\left(\Phi^{AB}\otimes \xi^{A'B'}\right)U^{\dagger AA'B'}.
\end{align}
\end{theorem}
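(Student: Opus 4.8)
The plan is to prove both directions of the iff, treating the implication ``twisted form $\Rightarrow$ private'' as the routine direction and ``private $\Rightarrow$ twisted form'' as the substantive one. Throughout I would fix a purification $\ket{\psi}^{AA'BB'E}$ of the shared state, since demanding that the key be uncorrelated with the purifying system $E$ is equivalent to demanding it be uncorrelated with any third party (any other extension of $\psi^{AA'BB'}$ is obtained from $E$ by a channel, which cannot create correlations). The two defining properties of an (ideal) private state then become concrete statements about the measurements: (i) the amplitude measurements $Z^A$ and $Z^B$ yield identical, uniformly random outcomes, and (ii) after this measurement the marginal state on $E$ is independent of the outcome.

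For the forward direction, suppose $\psi^{AA'BB'}=U^{AA'B'}(\Phi^{AB}\otimes\xi^{A'B'})U^{\dagger AA'B'}$ with $U^{AA'B'}=\sum_z \ket{z}\bra{z}^A\otimes V_z^{A'B'}$. Choosing a purification $\ket{\xi}^{A'B'E}$ of $\xi^{A'B'}$, the vector $U^{AA'B'}(\ket{\Phi}^{AB}\ket{\xi}^{A'B'E})$ purifies $\psi^{AA'BB'}$. I would then observe that $U^{AA'B'}$ is block-diagonal in the eigenbasis of $Z^A$ and acts as the identity on $B$ and $E$, so the joint statistics of $Z^A$ and $Z^B$ are exactly those of $\Phi^{AB}$, giving perfect correlation and uniformity, which is (i). Expanding $\ket{\Phi}^{AB}=\tfrac{1}{\sqrt2}\sum_k\ket{k}^A\ket{k}^B$, the post-measurement state conditioned on outcome $k$ is $\ket{k}^A\ket{k}^B V_k^{A'B'}\ket{\xi}^{A'B'E}$; since $V_k$ touches only $A'B'$, tracing out $A'A'B'B'$ (everything but $E$) returns $\xi^E$ independently of $k$, which is (ii). Hence the twisted form is private.

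For the converse, I would start from (i): writing $\ket{\psi}^{AA'BB'E}=\sum_{jk}\ket{j}^A\ket{k}^B\ket{\mu_{jk}}^{A'B'E}$, perfect correlation forces $\mu_{jk}=0$ for $j\neq k$ and uniformity forces $\|\mu_{kk}\|^2=\tfrac12$, so $\ket{\psi}^{AA'BB'E}=\tfrac{1}{\sqrt2}\sum_k\ket{k}^A\ket{k}^B\ket{\phi_k}^{A'B'E}$ with each $\ket{\phi_k}$ normalized. Property (ii) says $\phi_0^E=\phi_1^E=:\rho^E$, i.e.\ $\ket{\phi_0}^{A'B'E}$ and $\ket{\phi_1}^{A'B'E}$ are two purifications of the same $\rho^E$. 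By the purification freedom underlying Uhlmann's theorem there is a partial isometry $W^{A'B'}$ with $\ket{\phi_1}^{A'B'E}=W^{A'B'}\ket{\phi_0}^{A'B'E}$; since $\phi_0^{A'B'}$ and $\phi_1^{A'B'}$ have equal rank (both equal to $\mathrm{rank}(\rho^E)$), $W$ extends to a unitary $V_1^{A'B'}$. Taking $V_0=\mathbbm{1}$, $\ket{\xi}^{A'B'E}=\ket{\phi_0}^{A'B'E}$, and $U^{AA'B'}=\sum_z\ket{z}\bra{z}^A\otimes V_z^{A'B'}$, one checks directly that $U^{AA'B'}(\ket{\Phi}^{AB}\ket{\xi}^{A'B'E})=\ket{\psi}^{AA'BB'E}$, and tracing out $E$ yields $\psi^{AA'BB'}=U^{AA'B'}(\Phi^{AB}\otimes\xi^{A'B'})U^{\dagger AA'B'}$ with $\xi^{A'B'}={\rm Tr}_E[\phi_0^{A'B'E}]$.

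The main obstacle is not any single calculation but pinning down the operational definition of ``private state'' precisely enough that (i) and (ii) are exactly the hypotheses one uses — in particular justifying the reduction to the purifying system $E$ as the worst-case eavesdropper — together with the minor bookkeeping around partial isometries versus unitaries when invoking Uhlmann (here resolved by the rank coincidence, so no enlargement of $A'B'$ is needed). Everything else is essentially reading off consequences of the amplitude-basis expansion of $\ket{\psi}^{AA'BB'E}$.
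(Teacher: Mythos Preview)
Your proposal is correct and follows essentially the same route as the paper: pass to a purification, use the perfect-correlation and uniformity requirement to obtain the diagonal form $\tfrac{1}{\sqrt{2}}\sum_z\ket{z,z}^{AB}\ket{\varphi_z}^{A'B'E}$, use secrecy from $E$ to conclude the $\varphi_z^E$ coincide, and then invoke the unitary freedom of purifications to produce the $V_z^{A'B'}$. You are in fact more careful than the paper, which only spells out the ``private $\Rightarrow$ twisted'' direction and glosses over the partial-isometry-versus-unitary point you address via the rank argument.
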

\begin{proof}
Consider a purification of a private state. By the first requirement, it must have the form 
\begin{align}
\ket{\psi}^{AA'BB'E}=\tfrac{1}{\sqrt{2}}\sum_{z=0}^1\ket{z,z}^{AB}\ket{\varphi_z}^{A'B'E}.
\end{align}
The second requirement implies that the states $\varphi_z^E$ are all identical, so that the key bit $z$ is secret from any eavesdropper. All possible purifications of a state are related by unitaries on the purifying system, meaning $\ket{\varphi_z}^{A'B'E}=V_z^{A'B'}\ket{\varphi_0}^{A'B'E}$ for some unitaries $V_z$. Using these to define the twisting operator and letting $\xi^{A'B'}=\varphi_0^{A'B'}$ completes the proof.
\end{proof}
Thus, private states are ``twisted'' versions of entangled states in which the $A'B'$ system is transformed in some way conditioned on the value of the key. Since the function of the $A'B'$ system is to block correlations of the key with $E$, it is called the \emph{shield}. Here we have defined the twisting operator as conditioning on Alice's key system $A$, but since her key is always equal to Bob's, the twisting operator can just as well be conditioned on $B$. Private states are conceptually distinct from entangled states because the distributed nature of the $A'B'$ system prevents Alice and Bob from undoing the twisting operator on a general private state. Indeed, there exist private states from which no entanglement can be locally extracted~\cite{horodecki_secure_2005}.

As with entanglement, we are more interested in characterizations of approximate secret keys, since perfection will be impossible to achieve in practice. The following lemma shows that the above definition of secret keys can be extended to a sensible approximate version. Here we denote by $\psi^{Z^AZ^BE}$ the state $\psi^{ABE}$ after measurement of the observables $Z^A$ and $Z^B$, and we say that an approximate secret key is $\epsilon$-good when its trace distance to a perfect key is less than $\epsilon$. 
\begin{lemma}
If $p_{\rm guess}(Z^A|Z^B)_\psi\geq 1-\epsilon_1$ and $p_{\rm secure}(Z^A|E)_\psi\geq 1-\epsilon_2$, then ${\psi}^{Z^AZ^BE}_Z$ is an $(\epsilon_1+\epsilon_2)$-good secret key.
\end{lemma}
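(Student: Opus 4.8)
The plan is a short two-step triangle-inequality argument between the post-measurement state ${\psi}^{Z^AZ^BE}_Z$ and a target perfect key. Write the purification as $\ket{\psi}^{ABE}=\sum_z\sqrt{p_z}\ket{z}^A\ket{\varphi_z}^{BE}$ as in Section~\ref{sec:main}, with $E$-marginal $\psi^E$, and let $\omega_{z,z'}^E=\langle z'|_B\,\varphi_z^{BE}\,|z'\rangle_B$ be the (unnormalised) state of $E$ conditioned on Alice obtaining $z$ and Bob obtaining $z'$, so that ${\psi}^{Z^AZ^BE}_Z=\sum_{z,z'}p_z\ket{zz'}\bra{zz'}^{AB}\otimes\omega_{z,z'}^E$ and $\varphi_z^E=\sum_{z'}\omega_{z,z'}^E$. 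I would take as the target the state
\[
\tau \;=\; \tfrac12\sum_{z=0}^1\ket{z}\bra{z}^A\otimes\ket{z}\bra{z}^B\otimes\psi^E ,
\]
which is manifestly a perfect one-bit key: the bit is uniform, Alice's and Bob's amplitude outcomes always agree, and the $AB$ part is in product with $E$. The goal is $\tfrac12\|{\psi}^{Z^AZ^BE}_Z-\tau\|_1\leq\epsilon_1+\epsilon_2$, and I would route through the intermediate state
\[
\rho \;=\; \sum_{z=0}^1 p_z\,\ket{z}\bra{z}^A\otimes\ket{z}\bra{z}^B\otimes\varphi_z^E ,
\]
in which Bob's register has been forced to agree with Alice's but $E$ is still correlated with the (possibly biased) key.

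First I would bound $\tfrac12\|{\psi}^{Z^AZ^BE}_Z-\rho\|_1$ using the guessing hypothesis. Since ${\psi}^{Z^AZ^BE}_Z$ is diagonal in the product basis $\{\ket{z}^A\ket{z'}^B\}$, the difference ${\psi}^{Z^AZ^BE}_Z-\rho$ splits as $\Delta_+-\Delta_-$, where $\Delta_+=\sum_{z\neq z'}p_z\ket{zz'}\bra{zz'}^{AB}\otimes\omega_{z,z'}^E\geq0$ is supported on the off-diagonal blocks and $\Delta_-=\sum_z p_z\ket{zz}\bra{zz}^{AB}\otimes\sum_{z'\neq z}\omega_{z,z'}^E\geq0$ is supported on the diagonal blocks (the $E$-weight that $\rho$ reattaches to $\ket{zz}$). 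These two $AB$-subspaces are orthogonal, so the trace norm is additive: $\|\Delta_+-\Delta_-\|_1={\rm Tr}\,\Delta_++{\rm Tr}\,\Delta_-=2\sum_{z\neq z'}p_z{\rm Tr}[P_{z'}^B\varphi_z^B]=2\bigl(1-p_{\rm guess}(Z^A|Z^B)_\psi\bigr)\leq2\epsilon_1$, using the disagreement-probability identity already recorded in Section~\ref{sec:main}. Hence $\tfrac12\|{\psi}^{Z^AZ^BE}_Z-\rho\|_1\leq\epsilon_1$.

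Next I would bound $\tfrac12\|\rho-\tau\|_1$ using the security hypothesis. The point is that $\rho$ and $\tau$ are images of the \emph{same} CPTP map — the classical ``copy'' map that reads the amplitude register $A$ and writes a duplicate into a fresh system $B$ — applied respectively to $\psi^{Z^AE}=\sum_z p_z\ket{z}\bra{z}^A\otimes\varphi_z^E$ and to $\tfrac12\mathbbm{1}^A\otimes\psi^E$. By monotonicity of the trace distance under CPTP maps together with the definition~(\ref{eq:securedef}) of $p_{\rm secure}$,
\[
\tfrac12\|\rho-\tau\|_1\;\leq\;\tfrac12\bigl\|\psi^{Z^AE}-\tfrac12\mathbbm{1}^A\otimes\psi^E\bigr\|_1\;=\;1-p_{\rm secure}(Z^A|E)_\psi\;\leq\;\epsilon_2 .
\]
Combining the two bounds with the triangle inequality gives $\tfrac12\|{\psi}^{Z^AZ^BE}_Z-\tau\|_1\leq\epsilon_1+\epsilon_2$, so ${\psi}^{Z^AZ^BE}_Z$ is $(\epsilon_1+\epsilon_2)$-good.

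The only step needing genuine care is the first bound: one must verify that ${\psi}^{Z^AZ^BE}_Z-\rho$ really decomposes into positive operators supported on orthogonal $AB$-subspaces, since this is exactly what makes the trace norm collapse to a plain sum of traces and lets the guessing probability enter linearly (rather than through a gentle-measurement square root). Everything else is assembling ingredients already present in the excerpt — the disagreement identity of Section~\ref{sec:main}, the definition of $p_{\rm secure}$, and monotonicity of the trace distance — plus the immediate observation that $\tau$, with $E$-marginal $\psi^E$, satisfies both defining properties of a perfect one-bit key; I do not anticipate any further obstacle.
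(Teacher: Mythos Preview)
Your proof is correct and follows essentially the same two-step triangle-inequality route as the paper, through the same intermediate state $\rho$. The only cosmetic differences are that the paper computes the first trace distance directly via the block-diagonal structure (arriving at the same $\sum_{z\neq z'}p_{zz'}$), and for the second step conjugates by a \textsc{cnot} and invokes unitary invariance rather than your copy-map-plus-monotonicity phrasing; since the copy map is an isometry these are the same argument.
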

\begin{proof}
Start with $p_{\rm guess}(Z^A|Z^B)_\psi\geq 1-\epsilon_1$. By the triangle inequality we have
\begin{align}
\td{\sum_{z,z'} p_{zz'}P_z^A\otimes P_{z'}^B\otimes \varphi_{zz'}^E}{\sum_{z,z'} p_{zz'}P_z^A\otimes P_{z}^B\otimes \varphi_{zz'}^E}&=\tfrac12 \sum_{z,z'\neq z}p_{zz'}+\tfrac12\sum_z\big\|\sum_{z'\neq z}p_{zz'}\varphi_{zz'}^E\big\|_1\\
&\leq \sum_{z,z'\neq z}p_{zz'}\\
&\leq \epsilon_1.
\end{align}
But the state $\sum_{z,z'} p_{zz'}P_z^A\otimes P_{z}^B\otimes \varphi_{zz'}^E$ can be thought of as $U_{\textsc{cnot}}^{AB}\left(\sum_{z,z'} p_{zz'}P_z^A\otimes P_{0}^B\otimes \varphi_{zz'}^E\right)U^{\dagger AB}_{\textsc{cnot}}$. From the second condition it follows, for $\varphi^E=\sum_{zz'}\varphi_{zz'}^E$, that
\begin{align}
\td{\sum_{z,z'} p_{zz'}P_z^A\otimes P_{0}^B\otimes \varphi_{zz'}^E}{\tfrac12\mathbbm{1}^A\otimes P_0^B\otimes \varphi^E}\leq \epsilon_1,
\end{align}
since the presence of $P_0^B$ doesn't change the trace distance. Using unitary invariance of the trace distance and the triangle inequality once more completes the proof.
\end{proof}
To give an approximate characterization of private states based on knowledge of complementary information, we merely need to show that a converse of Corollaries~\ref{cor:secguessB} and~\ref{cor:secguessR} holds, namely that if Bob can accurately guess the amplitude of Alice's system, then the phase is unpredictable using the purification $E$. We formalize this in the following lemma, which is Theorem 4.1 of~\citeme{renes_duality_2010}.
\begin{lemma}
\label{lem:guesssec}
If there exists a measurement $\mathcal{M}_Z^{B}$ such that $p_{\rm guess}(Z^A|\mathcal{M}_Z^{B})_\psi\geq 1-\epsilon_2$ for pure state $\ket{\psi}^{ABE}$, then $p_{\rm secure}(X^A|E)_\psi\geq 1-\sqrt{2\epsilon}$. 
\end{lemma}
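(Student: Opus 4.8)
The plan is to read this lemma as the dual of the coherent‑measurement construction behind Theorem~\ref{thm:entdec}: a good guess of $Z^A$ from $B$ forces $B$ together with a copy register to hold essentially all of Alice's amplitude information, so that by purity of $\ket{\psi}^{ABE}$ the environment $E$ is left with nothing about the complementary observable $X^A$. Concretely, I would reuse the partial isometry $U_1^{B\rightarrow C_ZB}$ from the proof of Theorem~\ref{thm:entdec} that coherently implements $\mathcal{M}_Z^B$ and records the outcome in $C_Z$. The same estimate given there, namely $^{AC_ZBE\!\!}\braket{\psi_Z|U_1^{B\rightarrow C_ZB}|\psi}^{ABE}\geq\sum_z p_z\bra{\varphi_z}\Lambda_z^B\ket{\varphi_z}^{BE}=p_{\rm guess}(Z^A|\mathcal{M}_Z^B)_\psi$, shows that the coherently measured state has overlap at least $1-\epsilon_2$ with the ideal state $\ket{\psi_Z}^{AC_ZBE}=\sum_z\sqrt{p_z}\ket{z}^A\ket{z}^{C_Z}\ket{\varphi_z}^{BE}$.

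Next I would push this fidelity statement down to the $AE$ marginal. Since $U_1^{B\rightarrow C_ZB}\ket{\psi}^{ABE}$ and $\ket{\psi_Z}^{AC_ZBE}$ are both pure with overlap at least $1-\epsilon_2$, their trace distance is at most $\sqrt{1-(1-\epsilon_2)^2}\leq\sqrt{2\epsilon_2}$. The one point requiring care is the following: because $U_1$ is an isometry acting only on $B$, tracing out its entire output $C_ZB$ from $U_1^{B\rightarrow C_ZB}\psi^{ABE}U_1^{\dagger}$ recovers the original marginal $\psi^{AE}$ (the sum over the erased copy index collapses $\sum_z\Lambda_z^B=\mathbbm{1}^B$), \emph{not} the dephased marginal. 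Monotonicity of the trace distance under partial trace then gives $\td{\psi^{AE}}{\psi_Z^{AE}}\leq\sqrt{2\epsilon_2}$, where $\psi_Z^{AE}=\sum_z p_z\ket{z}\bra{z}^A\otimes\varphi_z^E$ is $\psi^{AE}$ dephased in the amplitude basis. In words, the ability of $B$ to guess $Z^A$ already forces $\psi^{AE}$ to be nearly amplitude‑dephased — a monogamy phenomenon.

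The proof then closes with a one‑line computation: measuring $X^A$ on $\psi_Z^{AE}$ sends each $\ket{z}\bra{z}^A$ to $\tfrac12\mathbbm{1}^A$ (every amplitude eigenstate is uniform in the phase basis), hence maps $\psi_Z^{AE}$ \emph{exactly} to $\tfrac12\mathbbm{1}^A\otimes\sum_z p_z\varphi_z^E=\tfrac12\mathbbm{1}^A\otimes\psi^E$. Applying the $X^A$‑measurement channel to both $\psi^{AE}$ and $\psi_Z^{AE}$ and invoking monotonicity once more yields $\td{\psi^{X^AE}}{\tfrac12\mathbbm{1}^A\otimes\psi^E}\leq\sqrt{2\epsilon_2}$, which is precisely $p_{\rm secure}(X^A|E)_\psi\geq 1-\sqrt{2\epsilon_2}$, matching the stated bound (with $\epsilon=\epsilon_2$).

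The honest assessment is that there is no deep obstacle here once the structure is in place; the argument is monogamy plus two applications of data processing, with the fidelity‑to‑trace‑distance conversion as glue. The only subtlety worth getting right is the bookkeeping in the second step — that coherently measuring on $B$ leaves the $AE$ marginal equal to $\psi^{AE}$ rather than $\psi_Z^{AE}$, so that it is the true marginal, not the dephased one, that is pinned within $\sqrt{2\epsilon_2}$ of $\tfrac12\mathbbm{1}^A\otimes\psi^E$ after the phase measurement. It is also worth noting that, unlike the Uhlmann‑based decoupling proofs of Theorems~\ref{thm:dw} and~\ref{thm:2xdecoupling}, this argument is fully constructive via the explicit isometry $U_1$, which is why it dovetails with the construction of $\mathcal{M}_X^B$ in Corollary~\ref{cor:secguessB}.
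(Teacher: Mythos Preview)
Your proof is correct and follows essentially the same approach as the paper: both invoke the coherent-measurement isometry $U_1^{B\rightarrow C_ZB}$ from Theorem~\ref{thm:entdec} to get fidelity $\geq 1-\epsilon$ with $\ket{\psi_Z}$, observe that $U_1$ acts only on $B$ so the $AE$ marginal is unaffected, note that $\ket{\psi_Z}$ has $E$ exactly decoupled from $X^A$ (the paper cites Equation~(\ref{eq:psizphase}) for this, you compute it directly from $\ket{z}\bra{z}^A\mapsto\tfrac12\mathbbm{1}^A$), and convert fidelity to trace distance. The only cosmetic difference is ordering: you pass to trace distance first and then apply monotonicity twice (partial trace, then $X^A$-measurement), whereas the paper carries the fidelity to the end and applies monotonicity implicitly in one step.
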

\begin{proof}
Following the proof of Theorem~\ref{thm:entdec}, we know that $\bra{\psi_Z}U^{B\rightarrow C_ZB}_{\mathcal{M}_Z}\ket{\psi}^{ABE}\geq 1-\epsilon$. Using the form of $\ket{\psi_Z}^{AC_ZBE}$ in Equation~(\ref{eq:psizphase}), it follows that $E$ is completely decoupled from the phase measurement of $A$, i.e.\ the post-measurement state is $\tfrac12\mathbbm{1}^A\otimes \psi^E$. Since $U^{B\rightarrow C_ZB}_{\mathcal{M}_Z}$ does not involve $A$ or $E$, the post-measurement state of $AE$ is the same for $\ket{\psi}^{ABE}$ as for $U^{B\rightarrow C_ZB}_{\mathcal{M}_Z}\ket{\psi}^{ABE}$. Converting fidelity to trace distance, we find that $\td{\sum_x q_x \widetilde{P}_x^A\otimes \vartheta_x^E}{\tfrac12\mathbbm{1}^A\otimes \psi^E}\leq \sqrt{2\epsilon}$.
\end{proof}

\def\vgap{.75}
\def\hgap{2}
\def\hgapfudge{.25}
\def\re{7}
\begin{figure}[th!]
\begin{center}
\begin{tikzpicture}[thick]
\tikzstyle{empty} = [inner sep=1pt,outer sep=1pt]
\tikzstyle{gate} = [fill=white, draw]
\tikzstyle{ctrl} = [fill,shape=circle,minimum size=4pt,inner sep=0pt,outer sep=0pt]
\tikzstyle{targ} = [draw,shape=circle,minimum size=8pt,inner sep=0pt,outer sep=0pt]

\draw[dotted] (-4,0) -- (\re+3,0);
\node at (-3,.5) {Alice};
\node at (-3,-.5) {Bob};

\node[anchor=east] at (0,\vgap+1) (al) {$A$};
\node[anchor=east] at (0,1) (apl) {$A'$};
\node[anchor=east] at (0,-1) (bpl) {$B'$};
\node[anchor=east] at (1,-\vgap-1) (cl) {$\ket{0}^{C_Z}$};
\node[anchor=east] at (1,-2*\vgap-1) (dl) {$\ket{0}^{C_X}$};
\node[anchor=east] at (0,-3*\vgap-1) (bl) {$B$};

\node[empty,anchor=west] at (\re+2,\vgap+1) (ar) {};
\node[empty,anchor=west] at (\re,1) (apr) {};
\node[empty,anchor=west] at (\re,-1) (bpr) {};
\node[empty,anchor=west] at (\re,-\vgap-1) (cr) {};
\node[empty,anchor=west] at (\re,-2*\vgap-1) (dr) {};
\node[empty,anchor=west] at (\re+2,-3*\vgap-1) (br) {};

\draw (al) -- (ar);
\draw (apl) -- (apr);
\draw (bl) -- (br);
\draw (bpl) -- (bpr);
\draw (cl) -- (cr);
\draw (dl) -- (dr);

\node[ctrl] (ug) at (2,-3*\vgap-1) {};
\node[targ] (ut) at (2,-\vgap-1) {};
\draw (ut.north) -- (ug);

\node[targ] (mt) at (2+\hgap,-2*\vgap-1) {};
\node[gate,minimum height=3.25cm] (mm) at (2+\hgap,-0.35) {$\mathcal{M}_X$};
\draw (mm.south) -- (mt.south);
\node[empty] (mn) at (2+\hgap,-3*\vgap-1) {};

\node[targ] (vt) at (2+2*\hgap,-2*\vgap-1) {};
\node[ctrl] (vc) at (2+2*\hgap,-3*\vgap-1) {};
\draw (vc) -- (vt.north);

\draw[decorate,decoration={brace,amplitude=4},thick] (bl.south west) to node[midway,left,inner sep=5pt] {$\psi^{AA'BB'}$} (al.north west);
\draw[decorate,decoration={brace,amplitude=3},thick] (apr.north east) to node[midway,right,inner sep=5pt] {$\psi^{C_XA'C_ZB'}$} (dr.south east);
\draw[decorate,decoration={brace,amplitude=3},thick] (ar.north east) to node[midway,right,inner sep=5pt] {$\Phi^{AB}$} (br.south east);
\node [below of=vc,anchor=south] (lb1) {$U_{\textsc{cnot}}^{BC_X}$};
\node [below of=ug,anchor=south] (lb2) {$U_{\textsc{cnot}}^{BC_Z}$};
\node [below of=mn,anchor=south] (lb3) {$U_{\mathcal{M}_X}^{C_XA'C_ZB'}$};

\begin{pgfonlayer}{background}
\node [fill=tud1b!50,rounded corners, dashed,fit=(ug) (ut) (lb2)] {};
\node [fill=tud9b!50,rounded corners,fit=(mt) (mm) (lb3)] {};
\node [fill=tud6a!50,rounded corners,fit=(vc) (vt) (lb1)] {};
\end{pgfonlayer}

\end{tikzpicture}
\caption{\label{fig:twistingop} The quantum circuit implementing the (un)twisting operator on the state $\psi^{AA'BB'}$, when Bob can approximately predict Alice's key (amplitude) measurement of system $A$ and there exists a measurement $\mathcal{M}_X^{A'BB'}$ approximately predicting her phase measurement. It proceeds in three steps. First, Bob coherently copies his key (amplitude) to an auxiliary system $C_Z$ using a controlled-\textsc{not} gate (unitary $U_{\textsc{cnot}}^{BC_Z}$). Next, he coherently performs the measurement $\mathcal{M}_X$ allowing him to predict $X$, storing the result in auxiliary system $C_X$ (unitary $U_{\mathcal{M}_X}^{C_XA'C_ZB'}$). Finally, to recover a maximally entangled state in system $B$, he applies another controlled-\textsc{not} gate, with control $B$ and target $C_X$ (unitary $U_{\textsc{cnot}}^{BC_X}$). Observe that the overall action is a controlled operation with Bob's key as the control and the shield and auxiliary systems the target, i.e.\ a twisting operator.}
\end{center}
\vspace{-14pt}
\end{figure}
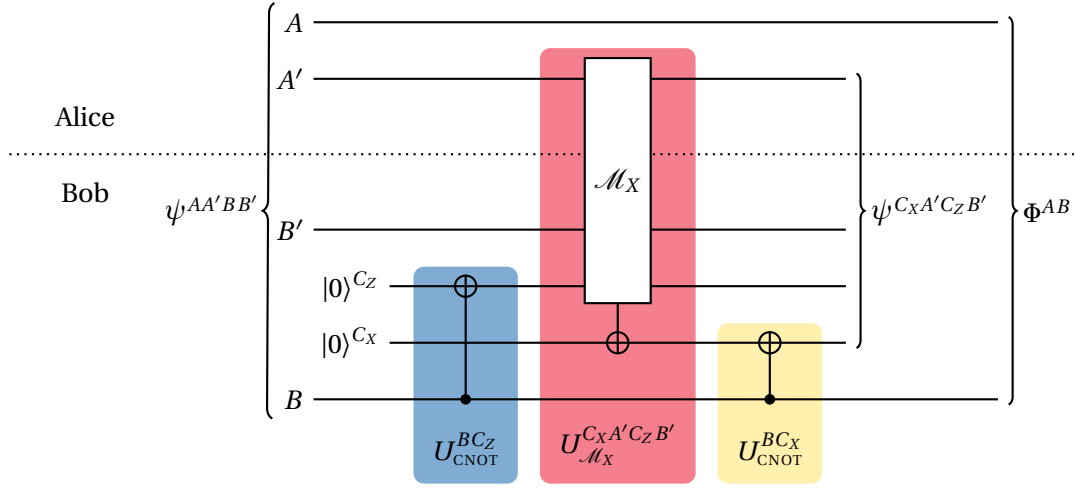

With this lemma the following theorem, first shown in~\citeme{renes_physical_2008}, is immediate.
\begin{theorem}
\label{thm:compps}
Suppose $p_{\rm guess}(Z^A|Z^B)_\psi\geq 1-\epsilon_1$ and there exists a measurement $\mathcal{M}_X^{A'BB'}$ for which $p_{\rm guess}(X^A|\mathcal{M}_X^{A'BB'})_\psi \geq 1-\epsilon_2$. Then ${\psi}^{Z^AZ^BE}$ is an $(\epsilon_1+\sqrt{2\epsilon_2})$-good secret key.
\end{theorem}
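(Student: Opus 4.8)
The plan is to obtain this theorem by chaining Lemma~\ref{lem:guesssec} with the (unlabeled) lemma stated immediately before it, so the argument is short; indeed the surrounding text has already flagged it as ``immediate.'' The first lemma turns the phase-guessing hypothesis into a decoupling statement (Alice's amplitude is secure from $E$), and the second packages the two one-sided conditions --- Bob matches Alice's amplitude, and $E$ is ignorant of that amplitude --- into a good key.

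In detail, I would first apply Lemma~\ref{lem:guesssec} with the amplitude observable $Z$ and the phase observable $X$ exchanged. This substitution is harmless in the qubit setting: $X$ and $Z$ are interchanged by the Hadamard unitary, and the proof of Lemma~\ref{lem:guesssec} invokes only the Fourier-transform relation between the amplitude and phase eigenbases, which is symmetric under $X\leftrightarrow Z$. I would take the guessing system of the lemma to be the composite register $A'BB'$, so that $\ket{\psi}^{A(A'BB')E}$ is the relevant tripartite pure state with $E$ as its purifying system. Then the hypothesis $p_{\rm guess}(X^A|\mathcal{M}_X^{A'BB'})_\psi\geq 1-\epsilon_2$ gives, through the $X\leftrightarrow Z$ form of Lemma~\ref{lem:guesssec}, the bound $p_{\rm secure}(Z^A|E)_\psi\geq 1-\sqrt{2\epsilon_2}$.

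Next I would combine this with the remaining hypothesis $p_{\rm guess}(Z^A|Z^B)_\psi\geq 1-\epsilon_1$ and invoke the lemma preceding Lemma~\ref{lem:guesssec}, which says that precisely these two conditions make $\psi^{Z^AZ^BE}$ an $(\epsilon_1+\sqrt{2\epsilon_2})$-good secret key. The one bookkeeping point is that that lemma was phrased for a bipartite state $\psi^{AB}$ with purification $E$, whereas here the shield systems $A'$ and $B'$ are also present; but its proof refers only to the reduced states on $A$, $B$, and $E$, so the shields are irrelevant (equivalently, absorb $A'B'$ into the environment, apply the lemma there, and trace them out at the end using monotonicity of the trace distance under partial trace). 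The circuit realizing this reduction --- coherently copy Bob's key into $C_Z$, coherently run $\mathcal{M}_X$ to fix the phase into $C_X$, then a final controlled-\textsc{not} --- is exactly the twisting operator drawn in Figure~\ref{fig:twistingop}.

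The only thing requiring care is the labeling of systems: confirming that Lemma~\ref{lem:guesssec} may legitimately be read with $X$ and $Z$ swapped and with $A'BB'$ treated as a single memory register, and that the preceding lemma is untouched by the extra shield systems. I do not expect any genuine obstacle here; the substantive work --- decoupling the environment from the phase measurement, which itself rests on the entanglement-recovery construction of Theorem~\ref{thm:entdec} --- has already been carried out inside Lemma~\ref{lem:guesssec}.
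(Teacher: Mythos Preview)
Your proposal is correct and matches the paper's approach exactly: the paper declares the theorem ``immediate'' from Lemma~\ref{lem:guesssec}, and the intended chain is precisely the one you spell out---apply Lemma~\ref{lem:guesssec} with $X\leftrightarrow Z$ and with $A'BB'$ playing the role of the memory system to get $p_{\rm secure}(Z^A|E)_\psi\geq 1-\sqrt{2\epsilon_2}$, then feed this together with $p_{\rm guess}(Z^A|Z^B)_\psi\geq 1-\epsilon_1$ into the preceding (unlabeled) lemma. Your remarks about the harmlessness of the $X\leftrightarrow Z$ swap and the irrelevance of the shield systems to that lemma are the right bookkeeping checks, and the twisting-operator circuit you mention is indeed what the paper goes on to exhibit explicitly after the theorem.
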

It is also interesting to see how the untwisting operator can be directly constructed using the measurement $\mathcal{M}_X^{A'BB'}$. First write the initial state as $\ket{\psi}^{AA'BB'E}=\sum_{z,z'}\sqrt{p_{z,z'}}\ket{z}^A\ket{z'}^B\ket{\varphi_{z,z'}}^{A'B'E}$ consider the action of a \textsc{cnot} operation from $B$ to an ancilla system $C_Z$ prepared in the state $\ket{0}^{C_Z}$. This copies the value of $z'$ and gives 
\begin{align}
U_{\textsc{cnot}}^{BC_Z}\ket{\psi}^{AA'BB'E}=\sum_{z,z'}\sqrt{p_{z,z'}}\ket{z}^A\ket{z'}^{C_Z}\ket{z'}^B\ket{\varphi_{z,z'}}^{A'B'E}.
\end{align}
From the first condition it follows that $\bra{\psi_Z}U_{\textsc{cnot}}^{BC_Z}\ket{\psi}^{AA'BB'E}\geq 1-\epsilon_1$, where 
\begin{align}
\label{eq:needlabel}
\ket{\psi_Z}^{AC_ZA'BB'E}=\sum_{z,z'}\sqrt{p_{z,z'}}\ket{z}^A\ket{z}^{C_Z}\ket{z'}^B\ket{\varphi_{z,z'}}^{A'B'E}.
\end{align}
Now make the replacement $A'BB'\rightarrow B$ in this state and apply the latter half of the proof of Theorem~\ref{thm:entdec}, from which it follows that 
\begin{align}
^{AC_Z\!\!}\bra{\Phi}\,^{C_XA'BB'E\!\!}\bra{\psi}U_{\textsc{cnot}}^{C_ZC_X}V^{A'BB'\rightarrow C_XA'BB'}\ket{\psi_Z}^{AC_ZA'BB'E}\geq 1-\epsilon_2.
\end{align}
The fidelity is unchanged by inserting the identity operator in the form $U_{\textsc{swap}}^{BC_Z}U_{\textsc{swap}}^{\dagger BC_Z}$, yielding
\begin{align}
^{AB\!\!}\bra{\Phi}\,^{C_XA'C_ZB'E\!\!}\bra{\psi}U_{\textsc{cnot}}^{BC_X}V^{A'C_ZB'\rightarrow C_XA'C_ZB'}U_{\textsc{swap}}^{BC_Z}\ket{\psi_Z}^{AC_ZA'BB'E}\geq 1-\epsilon_2.
\end{align}
Since $U_{\textsc{swap}}U_{\textsc{cnot}}^{BC_Z}\ket{\psi}^{AA'BB'E}=U_{\textsc{cnot}}^{BC_Z}\ket{\psi}^{AA'BB'E}$, the same method applied to the first fidelity condition gives
\begin{align}
\bra{\psi_Z}U_{\textsc{swap}}^{\dagger BC_Z}U_{\textsc{cnot}}^{BC_Z}\ket{\psi}^{AA'BB'E}\geq 1-\epsilon_1.
\end{align}
Lemma~\ref{lem:entdec} now implies that the operator $U_{\textsc{cnot}}^{BC_X}V^{A'C_ZB'\rightarrow C_XA'C_ZB'}U_{\textsc{cnot}}^{BC_Z}$ produces a high-fidelity entangled state in systems $AB$. But owing to its form, this is a twisting operator, as depicted in Figure~\ref{fig:twistingop}.

\chapter{Processing Quantum Information}
\label{chap:proc}
Having concretely developed the relationship between quantum information in the form of entanglement and classical information about complementary amplitude and phase observables in Chapter~\ref{chap:char}, we may now apply it to the problem of constructing various quantum information processing protocols and understanding why they work. Being able to do so is the second stated goal of this thesis, and will be the subject of this and the remaining chapters. This chapter considers the particular tasks of entanglement distillation, quantum state merging, and secret key distillation in three respective sections. The complementarity approach to the first and last was developed in detail in~\citeme{renes_physical_2008}, while state merging was treated from this approach in~\citeme{boileau_optimal_2009}.

Entanglement distillation is one of the fundamental protocols in quantum information processing and can be used as a building block in a variety of other protocols. In particular, one-way protocols for entanglement distillation can be repurposed for use in reliable communication of quantum information over noisy channels. This allows us to apply our results to that problem and show that the quantum capacity of a channel can be achieved when the sender uses CSS codes. Meanwhile, the secret key distillation results imply that the capacity of a quantum channel to send classical information privately can likewise be achieved when the sender uses CSS codes. 




\section{Optimal Entanglement Distillation}
\label{sec:opted}
We begin by returning to the problem of entanglement distillation, introduced in Section~\ref{sec:ed1}. In this setting, Alice and Bob share a supply of identical, somewhat-entangled bipartite resource states which they would like to use to create EPR pairs. An entanglement distillation protocol is a sequence of local operations they should perform on their respective systems, supplemented by classical communication to coordinate their actions and exchange information. The protocol produces approximate EPR pairs at a given rate $r$, converting $n$ resource states to $nr$ pairs. For instance, the rate of the protocol described in Chapter~\ref{chap:ill} using the Shor 9-qubit code is given by the rate of the error-correcting code, namely $1/9$, since the output was taken from the encoded subspace of the code. The asymptotically optimal rate is the largest $r$ one can find among protocols for $n\rightarrow \infty$ such that the approximation parameter vanishes in this limit. 

As we saw in Chapter~\ref{chap:char}, Alice and Bob implicitly share an entangled state if Alice's amplitude and phase measurements are predictable by Bob. But a generic bipartite state does not share this property; at best Bob has only partial information about either observable. Heuristically, one way to manufacture entangled states would therefore be to increase Bob's information about these measurements somehow. And since such information is classical, we may be able to arrange for Alice to send it over the classical communication channel. In the following we shall develop this heuristic notion into a concrete protocol. To do so we must first overcome two immediate hurdles. First, what sort of information can she send which will be sufficient for this purpose? And second, how do we make sure Alice does not violate the uncertainty principle when sending information about complementary observables? We take up these two questions in turn.

\subsection{Information Reconciliation}
\label{subsec:ir}

If we consider either observable alone, the present task is a more general version of the information reconciliation task mentioned in conjunction with QKD in Section~\ref{sec:qkd}. If we only care about, say, amplitude, then we can imagine Alice measures the amplitude of all of her systems, and these outcomes are described as a classical random variable. Formally, we can describe the state of all their systems after the measurement by 
\begin{align}
\label{eq:jointstate}
\Psi^{Z^AB}=\sum_\bz p_\bz \ket{\bz}\bra{\bz}^A\otimes \varphi_\bz^B=\sum_\bz P_\bz^A\otimes {\rm Tr}_A[P_\bz^A(\psi^{AB})^{\otimes n}]
\end{align}
using the state $\ket{\psi}^{ABE}=\sum_{z}p_z\ket{z}^A\ket{\varphi_z}^{BE}$ and defining $p_{\bz}=p_{z_1}\cdots p_{z_n}$ and $\varphi_\bz=\varphi_{z_1}\otimes \cdots\otimes \varphi_{z_n}$. We  use boldface to denote sequences or strings of indices. 
For each sequence of outcomes Bob is left with the quantum state $\varphi_\bz^B$, but generally there is no measurement which will indicate which one he has with any accuracy. 

However, if Alice gives him some extra information about her outcome $\bz$, then the set of states he is attempting to distinguish between gets smaller, and the task gets easier. For instance, if Alice simply tells him that the sum of the first two outcomes (thought of as binary outcomes) is 0 modulo 2, then he excludes from consideration all the $\varphi_\bz^B$ for which this is not true and attempts to distinguish between the remaining states with a new measurement. Of course, she could just send Bob her entire measurement record $\bz$, but the goal of information reconciliation is for Alice to transmit as few bits as necessary to enable Bob to reconstruct $\bz$ with high probability.

It turns out that in the asymptotic limit $n\rightarrow\infty$, Alice only needs to send information at rate $H(Z^A|B)_\psi$. This expression accords with the interpretation of conditional entropy as the uncertainty about $Z^A$ given $B$: Bob is missing this much information about $Z^A$ and in the protocol Alice simply provides it. Importantly, the information in question can be generated by the technique of \emph{universal hashing} and Alice does not need to know anything about Bob's system except the value of $H(Z^A|B)_\psi$. In universal hashing, Alice randomly picks a so-called hash function $f$ from a universal family of hash functions and sends Bob a description of $f$ along with the output $f(\bz)$. 

First defined by Carter and Wegman~\cite{carter_universal_1977,carter_universal_1979}, universal hashing is meant to mimic certain behavior of random functions: A family of functions is universal when the probability that two different inputs to a randomly-chosen family member have the same output is the same as if the function had been chosen at random from all possible functions. This latter probability is simply the inverse of the number of possible function outputs, so formally we say a set $\mathcal{F}$ of functions $f:\{0,1\}^n\rightarrow \{0,1\}^{m}$ is universal when
\begin{align}
{\rm Pr}_f\left[f(x)=f(y)\right]\leq \frac{1}{2^{m}}\qquad \forall x,y\in\{0,1\}^n.
\end{align}
Above we illustrated the information Alice might send to Bob by a linear function, and in fact the set of all linear functions forms a universal family~\cite{carter_universal_1977,carter_universal_1979}. We shall make extensive use of linear functions for hashing in the next section. 

As shown in~\citeme{renes_physical_2008}, when the size of the hash is roughly $nH(Z^A|B)_\psi$ bits, Bob can reliably predict $Z^A$. More concretely, for each hash value $\widehat{\bz}=f(\bz)$ there exists a measurement $\mathcal{M}_{Z;\widehat{\bz}}^{B}$ with elements $\Lambda^{B}_{\bz;\widehat{\bz}}$ such that the guessing probability averaged over $\widehat{\bz}$ is nearly one, $\sum_\bz p_\bz {\rm Tr}[ \Lambda^{B}_{\bz;\widehat{\bz}}\varphi_\bz^B]\approx 1$.
 The proof, following ideas from Holevo~\cite{holevo_capacity_1998} and Schumacher and Westmoreland~\cite{schumacher_sending_1997} in the study of transmission of classical information over quantum channels, explicitly constructs $\mathcal{M}_Z^{BB'}$ as a variant of the \emph{pretty-good measurement} first used by Holevo for pure states~\cite{holevo_asymptotically_1978} and later extended to mixed states (and so-named) by Hausladen and Wootters~\cite{hausladen_pretty_1994}. Essentially, Bob's measurement is given by 
\begin{align}
\Lambda_{\bz;\widehat{\bz}}\approx p_\bz \varphi_{\widehat{\bz}}^{-1/2}\varphi_\bz\varphi_{\widehat{\bz}}^{-1/2},\qquad \varphi_{\widehat{\bz}}=\sum_{\bz:f(\bz)=\widehat{\bz}}p_\bz\varphi_\bz,
\end{align}
with some small modifications. 
We can simplify the formalism somewhat by imagining that Bob stores the hash value in an auxiliary system $B'$ and uses the measurement $\mathcal{M}_{Z}^{BB'}$ with elements $\Gamma_\bz^{BB'}=\Lambda_{\bz;\widehat{\bz}}^{B}\otimes P_{\widehat{\bz}}^{B'}$, for which $p_{\rm guess}(Z^A|\mathcal{M}_Z^{BB'})\approx 1$.

In the case Bob holds classical information, i.e.\ the states $\varphi_\bz$ are all simultaneously diagonalizable, information reconciliation is closely related to the famous Slepian-Wolf problem of coding of correlated sources~\cite{slepian_noiseless_1973}. The case of Bob having quantum information was studied and solved in the present i.i.d.\ scenario by Winter~\cite{winter_coding_1999} and Devetak and Winter~\cite{devetak_classical_2003} using random coding techniques. In Section~\ref{subsec:oneshot} we very briefly describe how the result can be generalized to the case of arbitrary resources. 

\subsection{Reconciling Complementary Information}
\label{sec:reconcile}
Having seen that the output of a suitably-sized random hash function enables Bob to reconstruct the outcome of Alice's amplitude or phase measurement, we now turn to the problem of how Alice can generate \emph{both} pieces of information without violating the uncertainty principle. Calling the hash function used for the amplitude measurement $f$ and the phase measurement $g$, Bob separately requires both $f(\bz)$ and $g(\bx)$ so that he can predict the amplitude outcome $\bz$ and the phase outcome $\bx$. Naively, it seems impossible to generate both $f(\bz)$ and $g(\bx)$, since this would apparently require Alice to measure both the amplitude and phase of her systems. 

Crucially, however, the \emph{input} $\bx$ (\bz) is not required to fix the \emph{output} $g(\bx)$ ($f(\bz)$). Instead, Alice need only measure appropriate observables which generate the output directly, \emph{and the necessary observables for $f(\bz)$ and $g(\bx)$ can commute}. Such a structure is in fact provided by CSS codes. Recall again the very simple example above, in which Alice transmitted the output of the linear function $z_1\oplus z_2$ to Bob. As we saw in Section~\ref{subsec:compqecc}, this can equally-well be thought of as the outcome of measuring the operator $Z_1Z_2$, since its eigenvalues are $(-1)^{z_1\oplus z_2}$. But every linear function is a sequence of one-bit functions, each of which is just a sum of particular amplitude outcomes $z_k$, so to each linear function corresponds a sequence of products of amplitude operators. In other words, every linear function of the amplitude measurement outcome is associated with a collection of $Z$-type stabilizers, and similarly for $X$-type stabilizers and functions of the phase measurement outcomes. 

If the two functions $f$ and $g$ are chosen so that the corresponding $Z$- and $X$-type stabilizers commute, together they define a CSS code, and Alice can then generate both $f(\bz)$ and $g(\bx)$ by measuring the stabilizers of the code. The commutation condition on the stabilizers can be succinctly expressed in the following way. For an $n$-qubit stabilizer, the corresponding linear function can be specified by the $n$-dimensional binary $\mathbbm{F}_2$ vector with entries 1 at position $k$ if $z_k$ appears in the sum, and zero otherwise. For instance, in the 9-qubit Shor code, the stabilizer $Z_1Z_2$ corresponds to the vector $(1,1,0,0,0,0,0,0,0)$ while the stabilizer $X_1X_2X_3X_7X_8X_9$ corresponds to $(1,1,1,0,0,0,1,1,1)$. In this representation, two stabilizers commute if the corresponding vectors are orthogonal over $\mathbbm{F}_2$. 

The only requirement on the functions $f$ and $g$ is that they come from universal families $\mathcal{F}$ and $\mathcal{G}$ of hash functions, respectively. Suppose $n_Z$ and $n_X$ are the required number of amplitude and phase type stabilizers, respectively, as determined by the rate requirements of the respective information reconciliation tasks. Then it is easy to show that one simple universal family encompassing both hash functions is the set of $(n_Z+n_X)\times n$ matrices over $\mathbbm{F}_2$ consisting of pairwise orthogonal rows. The first $n_Z$ rows give the $Z$-type stabilizers and the remaining $n_X$ rows the $X$-type stabilizers. 

Given these stabilizers, it is convenient to think of the code as partitioning Alice's qubits in system $A$ into three different sets of virtual qubits, the encoded qubits in subsystem $\overline{A}$, the $n_Z$ qubits whose amplitude measurement gives $f(\bz)$ in $\widehat{A}$, and the $n_X$ qubits whose phase measurement gives $g(\bx)$ in $\widetilde{A}$. Then by the properties of the stabilizer operators, $\widehat{\bz}=f(\bz)$ and $\widetilde{\bx}=g(\bx)$, where $\widehat{\bz}$ denotes a particular sequence of amplitude measurement outcomes for system $\widehat{A}$.   

Now we have all the pieces needed to construct an entanglement distillation protocol. Starting from $n$ copies of the resource state, Alice will measure $n_Z$ $Z$-type stabilizers and $n_X$ $X$-type stabilizers and communicate the resulting syndromes to Bob. Then, for large enough $n$, he will be able to predict Alice's measurement of encoded amplitude and phase operators using the appropriate pretty-good measurements, and thus create an approximate EPR state following Theorem~\ref{thm:entdec}.

Formally, they begin with the state
\begin{align}
\ket{\Psi}^{ABE}&=\sum_\bz \sqrt{p_\bz}\ket{\bz}^A\ket{\varphi_\bz}^{BE}\\
&=\sum_{\overline{\bz},\widehat{\bz},\widetilde{\bz}}\sqrt{p_{\bz}}\ket{\overline{\bz}}^{\overline{A}}\ket{\widehat{\bz}}^{\widehat{A}}\ket{\widetilde{\bz}}^{\widetilde{A}}\ket{\varphi_{\bz}}^{BE},
\end{align}
where in the second line we use the decomposition of Alice's system into the three sets of virtual qubits and consider $\bz$ to be a function of $(\overline{\bz},\widehat{\bz},\widetilde{\bz})$. The number $n_Z$ is chosen so that $p_{\rm guess}(Z^A|\mathcal{M}_Z^{BB'})_\psi\approx 1$, where again $B'$ is the system in which Bob stores $\widehat{\bz}$. Since $\overline{\bz}$ is a (linear) function of $\bz$, this implies that $p_{\rm guess}(\overline{Z}|\mathcal{M}^{BB'}_Z)\approx 1$. As much is true for the phase in that given the value of $\widetilde{\bx}$ stored in $B''$, there exists a measurement $\mathcal{M}_X^{BB''}$ for which $p_{\rm guess}(\overline{X}|\mathcal{M}^{BB''}_Z)_\psi\approx 1$. Therefore Bob can recover approximate EPR pairs by performing these measurements coherently, as shown in Theorem~\ref{thm:entdec}. In this way they can distill $n-n_X-n_Z$ approximate EPR pairs, provided this quantity is positive.
Note that here we have only utilized communication from Alice to Bob, making this a \emph{one-way} protocol. Using back and forth communication Alice and Bob could in principle increase the distillation rate, as pointed out by Bennett \emph{et al.}~\cite{bennett_mixed-state_1996} for protocols where \emph{both} parties use quantum error correction, as described in Section~\ref{sec:ed1}. 

\subsection{Constructing an Optimal Protocol}
\label{subsec:optimal}
The final question is how small $n_Z$ and $n_X$ can be made, and here there arises an additional subtlety. From the above discussion, we would expect that $n_Z\approx nH(Z^A|B)_\psi$ and $n_X\approx nH(X^A|B)_\psi$. However, Alice and Bob can do better. Initially, the purification of their shared state is 
\begin{align}
\label{eq:edstart}
\ket{\Psi}^{ABE}=\sum_\bz\sqrt{p_\bz}\ket{\bz}^A\ket{\varphi_\bz}^{BE}.
\end{align}
After receiving the amplitude information, Bob has full information about \bz, which he could store in system $C_Z$. Then, for the purposes of predicting Alice's hypothetical phase measurement, it is as if they originally shared (a close approximation to) the following state,
\begin{align}
\ket{\Psi_Z}^{AC_ZBE}= \sum_\bz\sqrt{p_\bz}\ket{\bz}^A\ket{\bz}^{C_Z}\ket{\varphi_\bz}^{BE},
\end{align}
and this may simplify Bob's phase-prediction task in general. 

One might worry that Alice's phase measurement is no longer possible even hypothetically, due to the amplitude stabilizer measurement. However, Bob can still use the conditional marginal states $\vartheta_\bx^{C_ZB}$ for $\sqrt{q_\bx}\ket{\vartheta_\bx}^{C_ZBE}=\frac{1}{\sqrt{2^n}}\sum_\bz\sqrt{p_\bz}(-1)^{\bx\cdot\bz}\ket{\bz}^{C_Z}\ket{\varphi_\bz}^{BE}=\frac{1}{\sqrt{2^n}}(Z^\bx)^{C_Z}\ket{\Psi}^{C_ZBE}$ to build the unitary operator $U_{\mathcal{M}_X}$. This gives him what would have been the phase measurement outcome, and therefore the outcome of the encoded phase measurement. The existence of the former has indeed been destroyed by the amplitude stabilizer measurement, but the latter has not. 

A concrete example in which amplitude information is relevant to phase is provided by the following. Suppose the state Alice and Bob share is a maximally-entangled state $\Phi^{AB}$ afflicted only with errors of the form $XZ$. Then amplitude $Z$ errors are completely correlated with  phase $X$ errors. Thus, Bob need only know the positions of amplitude errors in order to infer the positions of phase errors. In other words, $H(X^A|BC_Z)_\psi=0$. 

Taking the above consideration into account, the rate of entanglement distillation becomes $r(\psi)=1-H(Z^A|B)_\psi-H(X^A|BC)_{\psi_Z}$, as all the approximation parameters can be taken to arbitrarily small values by choosing a large enough $n$. 
It turns out that $r(\psi)=-H(A|B)_\psi$, which we can see by direct computation. First evaluate the latter entropy $H(X^A|C_ZB)_{\psi_Z}$, using the form of $\ket{\psi_Z}$ derived in (\ref{eq:psizphase}). We find 
\begin{align}
H(X^A|CZ_B)_{\psi_Z}&\equiv H(X^AC_ZB)_{\psi_Z}-H(C_ZB)_{\psi_Z}\\
&=H(X^A)_{\psi_Z}+H(C_ZB|X^A)_{\psi_Z}-H(C_ZB)_{\psi_Z}\\
&=1-H(C_ZB)_\psi-H(C_ZB)_{\psi_Z}\\
&=1-H(E)_\psi-H(AE)_{\psi_Z}\\
&=1-H(E)_\psi-H(Z^AE)_{\psi}.
\label{eq:xtoz}
\end{align}
The first step follows from the general relation between conditional and unconditional von Neumann entropies, while the second follows because the state of $C_ZBE$ conditioned on outcome $X^A=x$ is $(Z^x)^{C_Z}\ket{\psi}^{C_ZBE}$. As these are all unitarily equivalent, each term $H(C_ZB|X^A=x)_{\psi_Z}$ has the same value $H(C_ZB)_\psi$. In the third step we have used the fact that $H(S_1)=H(S_2)$ for a bipartite pure state on systems $S_1$ and $S_2$. The last step follows because $\psi_{Z}^{AE}$ is identical to the result of measuring the amplitude of $A$ for the initial state $\psi^{AE}$. Hence $r(\psi)=H(Z^A|E)_\psi-H(Z^A|B)_\psi$. But since the $BE$ system given the measurement outcome $Z^A=z$ is pure, $H(B|Z^A=z)_\psi=H(E|Z^A=z)_\psi$. Therefore $H(Z^AE)_\psi=H(Z^AB)_\psi$ and $r(\psi)=-H(A|B)_\psi=H(A|E)_\psi$. This rate is sometimes called the \emph{hashing bound}. 

Two further modifications lead to the optimal entanglement distillation rate. First, Alice is free to first apply any quantum operation $\mathcal{Q}^A$ to her system before the protocol begins, and this increases the rate to 
\begin{align}
\mathsf{D}_1(\psi)=\max_\mathcal{Q}\left(-H(A|B)_{\psi_\mathcal{Q}}\right).
\end{align}
Second, the rate can be further improved by \emph{regularization}.
Although we have described the protocol above for system $A$ a qubit, it works almost precisely the same for any dimension $d$ which is a prime power.\footnote{The restriction to prime powers comes from the structure of the stabilizer operators. These require the vector-representation described in Section~\ref{sec:reconcile} which only exists when the symbols come from a finite field.} 
Given a state $\psi^{AB}$, we could then imagine considering $\Psi^{AB}=(\psi^{AB})^{\otimes m}$ to be the fundamental input to the protocol, and Alice and Bob starting with $n$ copies thereof. The difference is that now Alice and Bob can ignore the product structure of $\Psi^{AB}$
, which leads to the possibly-higher rate
\begin{align}
\mathsf{D}(\psi)=\lim_{n\rightarrow\infty}\frac1n\mathsf{D}_1(\psi^{\otimes n}).
\end{align}
Devetak and Winter show this rate, the \emph{distillable entanglement}, is in fact optimal in~\cite{devetak_distillation_2005}.

\subsection{Quantum Noisy Channel Coding}
\label{subsec:channelcoding}
With a small modification, this entanglement distillation protocol can be used for reliable transmission of quantum information over a noisy channel $\mathcal{N}$. As mentioned in the discussion of approximate error-correction in Section~\ref{sec:main}, we can always mimic the quantum communication task by sending half of an EPR pair through the channel and measuring the half remaining with Alice in the appropriate basis. Thus, by deferring the measurement indefinitely, we only need to consider reliably transmitting halves of EPR states.
 
Now consider the entanglement distillation protocol applied to $\psi^{AB}=[{\rm id}^A\otimes\mathcal{N}^B](\theta^{AB})$, for an arbitrary pure state $\ket{\theta}^{AB}$. The protocol is constructed so that, averaged over all values that the syndromes could take on, the distilled state closely approximates the ideal of $nr$ EPR pairs. Pick the syndrome with the best approximation parameter, which is surely better than the average. Since in the communication scenario Alice can choose the input, she can always do so in a way which ensures her stabilizer measurement yields precisely this syndrome. Therefore, Alice and Bob could agree on the syndrome value in advance. 

But this defines an encoder and decoder in an error-correction scheme! Alice directly creates the bipartite state resulting from measuring the code stabilizers on many instances of $\theta^{AB}$ and obtaining the specified syndrome. She then sends Bob's halves through the channel, and he is able to decode the result by applying the entanglement distillation procedure. Since entanglement can be faithfully transmitted, so could any particular single-system state. 

Applied to single inputs, this implies that reliable quantum communication must be possible over the channel at rate (here we dispense with the operation $\mathcal{Q}$)
\begin{align}
\mathsf{Q}_1(\mathcal{N})&=\max_{\theta}\,\left(-H(A|B)_\psi\right).
\end{align}
Despite its nonstandard appearance, this is equal to a maximization over the \emph{coherent information} $I_c$ introduced by Schumacher and Nielsen~\cite{schumacher_quantum_1996} and more frequently used in this context. 
To see this, write $\ket{\theta}^{AB}=\sum_k \sqrt{p_k}\ket{k}^A\ket{\vartheta_k}^{B}$ for some probabilities $p_k$ and normalized states $\ket{\vartheta_k}^{B}$. The action of the channel on $B$ can be thought of as an isometry $U_\mathcal{N}^{B\rightarrow BE}$ and $\ket{\psi}^{ABE}=\sum_k\sqrt{p_k}\ket{k}^AU_\mathcal{N}^{B\rightarrow BE}\ket{\vartheta_k}^{B}$ is the output. Computing the conditional entropy $H(A|B)_\psi$ we find 
\begin{align}
\mathsf{Q}_1(\mathcal{N})&=\max_\theta\left(H(B)_\psi-H(R)_\psi\right)\\
&=\max_\vartheta\left(H(\mathcal{N}(\vartheta))-H(\mathcal{N}^*(\vartheta))\right)\\
&\equiv\max_\vartheta I_c(\vartheta,\mathcal{N}),
\end{align}
where $\vartheta=\sum_k p_k \vartheta_k^B$ and $\mathcal{N}^*$ is the channel complementary to $\mathcal{N}$ obtained by applying $U_\mathcal{N}^{B\rightarrow BE}$ and keeping $R$ instead of $B$. In the first line $H(AB)_\psi=H(R)_\psi$ since $\psi$ is pure, and maximization over $\theta$ is equivalent to maximization over $\vartheta$ in the second line. 

Regularization could improve the result, and we have therefore we have constructed a noisy-channel coding scheme which achieves a rate $\mathsf{Q}(\mathcal{N})$, where
\begin{align}
\mathsf{Q}(\mathcal{N})=\lim_{n\rightarrow \infty}\frac1n \mathsf{Q}_1(\mathcal{N}^{\otimes n}).
\end{align}
In fact, this is the ultimate capacity of the channel. In a sequence of papers~\cite{schumacher_sending_1996,schumacher_quantum_1996,barnum_information_1998, barnum_quantum_2000}, Barnum, Knill, Nielsen, and Schumacher established $\mathsf{Q}$ as an upper bound on the capacity, while Lloyd~\cite{lloyd_capacity_1997}, Shor~\cite{shor_quantum_2002}, and Devetak~\cite{devetak_private_2005} used random-coding arguments to show that $\mathsf{Q}$ can be attained. 

Here we have shown that CSS codes can achieve the capacity, since the resulting code inherits this structure from Alice's use of CSS-type stabilizers in the entanglement distillation protocol. Previously, CSS codes were only known to achieve a lower rate, as implicitly shown by Shor and Preskill~\cite{shor_simple_2000} and explicitly by Hamada~\cite{hamada_reliability_2004}. The more-general stabilizer codes were shown to achieve the capacity by Hayden \emph{et al.}~\cite{hayden_decoupling_2008}. 

Devetak's coding scheme has some CSS-like properties in that it essentially consists of an amplitude error-correction step followed by a privacy amplification step. From the discussion of the previous chapter, particularly Lemma~\ref{lem:guesssec} but with amplitude and phase trading places, we are tempted to view the latter step as error correction of a phase observable, and indeed we shall examine this in more detail in Section~\ref{subsec:edirpa}, but the amplitude and phase observables implicitly used in~\cite{devetak_private_2005} are functions of the coding scheme itself and not identical to the (code-independent) amplitude and phase as we have used here. 

One appealing aspect of the use of CSS structure is the possibility of constructing efficiently encodable and decodable codes which approach or even achieve the capacity. For classical communication over classical channels, Forney exhibited such a construction by concatenating random codes with structured codes known as polynomial or Reed-Solomon codes~\cite{forney_concatenated_1966}. Hamada has extended this to the quantum case in a sequence of papers~\cite{hamada_conjugate_2006,hamada_efficient_2008,hamada_concatenated_2008}, but only up to the suboptimal rate mentioned above. It would be interesting to see if the methods presented here can be combined with those of Hamada to reach the capacity efficiently.

\section{Optimal State Merging}
Since the unitary Bob eventually uses to distill the entangled states also transfers the state of Alice's system $A$ to his laboratory, the above protocol can be used for state merging, a process first studied by Horodecki \emph{et al.}~\cite{horodecki_partial_2005}. Here the goal is to merge Alice's part $\psi^{AB}$ of the joint state $\psi^{AB}$ with Bob's so that he ends up with $\psi^{AB}$, using as little quantum or classical communication as possible. Additionally, if we consider the purification $\ket{\psi}^{ABE}$, all correlations with the purifying system $R$ should be transferred to Bob as well. Not only should Bob end up with a good approximation to $\psi^{AB}$, but together with $R$ the final state should closely approximate $\ket{\psi}^{ABE}$. 

When $\psi^{AR}$ is itself pure, state merging reduces to quantum data compression. Since Bob has no initial information about Alice's state, whatever she sends must be sufficient to reconstruct her state and can be regarded as the compressed version of it. Schumacher has shown that a state $\psi^{A}$ can be compressed at rate no greater than $H(A)$~\cite{schumacher_quantum_1995}, meaning Alice and Bob will need to use a quantum channel at this rate. 

However, when Bob's system is correlated with Alice's, they can take advantage of these correlations to reduce the amount of communication needed. Indeed, if Alice and Bob share the EPR state $\ket{\Phi}^{AB}$, then no communication is required at all! This follows because a maximally-entangled state is not correlated with any third system, and so Bob can simply recreate the state at his end. For example, applying Theorem~\ref{thm:entdec} to the input state $\ket{\Phi}^{AB}$ yields output $\ket{\Phi}^{AD}\ket{\Phi}^{BC}$ upon application of the partial isometry $U^{B\rightarrow BCD}$. 

Moreover, sometimes sending only classical information is sufficient for transferring a quantum state. This is precisely the case when using the entanglement distillation protocol, which works for all $\psi^{AB}$ such that $H(A|B)<0$. That classical communication is sometimes sufficient is somewhat surprising, but with entanglement Alice could teleport her system to Bob using only classical information, and this is effectively what happens as a byproduct of the entanglement distillation protocol. By expressly using teleportation, we can also apply the distillation protocol to cases when $H(A|B)>0$, as described in~\cite{horodecki_partial_2005}. For $n$ resource states $\psi^{AB}$ Alice and Bob can create $nH(A|B)_\psi$ EPR pairs to go with their $n$ resource states, and the overall conditional entropy of the entire collection of systems is now roughly zero. Running the entanglement distillation protocol produces no new EPR pairs, but does transfer Alice's part of the resource state to Bob. 

Horodecki \emph{et al.} have shown that state merging requires quantum communication at the rate $H(A|B)_\psi$ when this quantity is positive, but only classical communication at rate $I(A:E)_\psi$ when $H(A|B)_\psi$ is negative~\cite{horodecki_partial_2005,horodecki_quantum_2007}. Using the entanglement distillation procedure above is therefore optimal in the first setting but not always in the second, as the rate of classical communication needed is $(n_Z+n_X)/n=1-H(A|E)_\psi\geq I(A:E)_\psi$. 

However, we can make a small alteration to the protocol to make it optimal, as shown in~\citeme{boileau_optimal_2009}. Observe that when $H(A)_\psi=1$, the procedure is in fact optimal. This suggests that we ought to first compress system $A$ and then perform entanglement distillation.
The difficulty in making this work is to ensure that the compression step does not interfere with the amplitude and phase information reconciliation steps. Since compression of quantum systems can be thought of as essentially just classical compression in the eigenbasis, it simplifies matters to choose the amplitude basis to be the eigenbasis of Alice's state $\psi^A$. 

Formally, the tripartite system $ABE$ starts in the pure state given in Equation~(\ref{eq:edstart}). The compressor projects the system onto a subspace spanned by a set of eigenvectors $\ket{\bz}$ whose total probability is nearly equal to one, a so-called \emph{typical set}. Even though the typical set contains almost all of the probability, it only contains roughly $2^{nH(Z)}$ of the $2^n$ total eigenvectors. Thus, with probability nearly one the projection operation succeeds and the subspace needed to support the state drastically shrinks. Rarely, the projection operation fails, and the state must be written off as a total loss. 

When the compressor succeeds, the state can be expressed as 
\begin{align}
\ket{\psi'}^{ABE}=\frac{1}{\mathcal{N}}\sum_{\bz\in {\rm Typ}}\sqrt{p_{\bz}}\ket{\bz}^A\ket{\varphi_{\bz}}^{BE},
\end{align}
where Typ is the typical set and $\mathcal{N}$ is the required normalization factor. On the \emph{typical subspace} we can order the basis elements lexicographically and define a new amplitude observable $Z'$ as in Equation~(\ref{eq:wh}), as well as the phase observable corresponding to the shift operator of said basis. After the compression step, the idea is for Alice and Bob to run the entanglement distillation procedure for the new observables $Z'$ and $X'$. However, the distribution of measurement results for these two operators is no longer i.i.d., and thus the results of information reconciliation we used previously no longer apply. We have no direct way of knowing how many stabilizers Alice should measure, nor how Bob should construct his measurement.

This poses no serious problem for the new amplitude observable, since it is essentially the same as the old one, just missing the non-typical values. Indeed, the information reconciliation protocol also makes use of typicality in that Bob's measurement does not bother to look for non-typical \bz\ in the first place. Thus, explicitly rejecting these possibilities in the compression step will only serve to reduce the error probability for information reconciliation of $Z'$. Alice can perform precisely the same $Z$-type stabilizer measurements as before, and Bob's original measurement will accurately reconstruct $Z$ and therefore $Z'$. 

However, this sort of argument does not work for the new phase observable $X'$. Since $X$ and $X'$ are not so simply related, Bob's knowledge of $X$ generally does not pertain at all to his knowledge of $X'$. Luckily, the extra system $C_Z$ which was used to achieve the optimal entanglement distillation rate comes the rescue. In the entanglement distillation protocol it gave Bob's marginal states conditioned on Alice's phase measurement a group-covariant structure, and it does so in the present scheme as well. In turn, this makes it possible to transform the information reconciliation protocol in the original i.i.d.\ setting to one appropriate for the new non-i.i.d.\ setting. 

After the amplitude information reconciliation step, phase information reconciliation proceeds as if Alice and Bob shared the state
\begin{align}
\ket{\psi'_{Z}}^{ABE}=\frac{1}{\mathcal{N}}\sum_{\bz\in {\rm Typ}}\sqrt{p_{\bz}}\ket{\bz}^A\ket{\bz}^{C}\ket{\varphi_{\bz}}^{BE}.
\end{align}
The group covariance arises just as before, due to the ``copy'' of $\bz$ in system $C_Z$.
The required number of $X'$-stabilizer outcomes must be computed in the construction of the reconciliation protocol, and it turns out to be $n_{X'}=H(A)_{\psi_Z}+H(C_ZB|X^A)_{\psi_Z}-H(C_ZB)_{\psi_Z}$. Following the calculation in the previous section, this is just $n_{X'}=H(A)_\psi-H(Z^A|E)_\psi$. For the first term we have used the fact that $H(A)_{\psi_Z}=H(Z^A)_\psi=H(A)_\psi$ since the amplitude basis is the eigenbasis. The communication cost of the protocol is now $n_Z+n_{X'}=H(A)_\psi+H(Z^A|B)_\psi-H(Z^A|E)_\psi=I(A:E)_\psi$. 

Since they are working in the typical subspace, Alice and Bob can expect to extract roughly 
$n \log|{\rm Typ}|-n_Z-n_{X'}$ entangled pairs, where $|{\rm Typ}|\approx 2^{nH(A)}$ is the size of the typical set. This works out to an entanglement distillation rate of $H(A)_\psi-I(A:E)_\psi=-H(A|B)_\psi$, just as before.  Therefore, by adding a compression step and choosing the amplitude basis to be the eigenbasis of Alice's system, we have managed to convert the optimal entanglement distillation protocol into an optimal state merging protocol. 

\section{Secret Key Distillation and Private Communication}
\label{sec:skd}
Section~\ref{sec:secretkey} detailed the close connection between private and entangled states, and in this section we show that the same methods used in Section~\ref{sec:opted} to construct entanglement distillation protocols can be used to construct protocols for creating a shared secret key from a supply of bipartite quantum states. Due to the CSS nature of this approach, we really only need to construct a private state distillation scheme, and it will work for secret key distillation as well. As explained in Section~\ref{sec:qkd}, Alice and Bob ultimately only need to ensure that the information about Alice's hypothetical phase measurement is somewhere to be found in the systems under their control. 

The private state distillation protocol works almost exactly as the entanglement distillation protocol. Given $n$ copies of the resource state $\psi^{AB}$, Alice is free to decide how to define the prospective key and first performs a quantum operation $\mathcal{Q}^{A\rightarrow A\,A'}$ which maps her system $A$ into two systems $A\,A'$. The first is used as the key and the second as a shield. This operation may additionally involve a measurement whose outcome $T$ is publicly transmitted to Bob, and the resulting state is $\psi^{AA'BT}_\mathcal{Q}=\sum_t \mathcal{Q}_t^{A\rightarrow AA'}(\psi^{AB})\otimes \ket{t}\bra{t}^T$. 

In the second step Alice measures enough amplitude and phase stabilizers on $A$ so that the amplitude $Z^A$ can be reconstructed from system $B$ and the phase $X^A$ from the compound system $A'B$. The number of stabilizers needed is set by the requirements for information reconciliation of each task separately, and again the amplitude information may be useful in recovering the phase information. Therefore the number of stabilizers needed amounts to $n_Z\approx nH(Z^A|BT)_{\psi_{\mathcal{Q}}}$ and $n_X\approx nH(X^A|C_ZA'BT)_{\psi_{\mathcal{Q},Z}}$, where $\psi_{\mathcal{Q},Z}^{AC_ZA'B}$ is the state defined by coherently copying the amplitude in $A$ to system $C_Z$. 
Alice may choose the optimal operation $\mathcal{Q}$, yielding the distillation rate  
\begin{align}
\label{eq:k1}
\mathsf{K}_1(\psi)&=\max_{\mathcal{Q}}\left(1-H(Z^A|BT)_{\psi_{\mathcal{Q}}}-H(X^A|C_ZA'BB'T)_{\psi_{\mathcal{Q},Z}}\right)\\
&=\max_{\mathcal{Q}}\left(H(Z^A|RT)_{\psi_{\mathcal{Q}}}-H(Z^A|BT)_{\psi_{\mathcal{Q}}}\right),
\end{align}
where the second line follows by the same calculations which led to Equation~(\ref{eq:xtoz}).

Alice need only transmit the amplitude syndromes to Bob since they use the encoded amplitude $\overline{Z}$ as the final key. The phase syndromes need not be transmitted, since according to Theorem~\ref{thm:compps} the mere existence of a phase-predicting measurement $\mathcal{M}_X^{C_ZA'BB'}$ ensures the secrecy of the key. This means the protocol can be immediately converted into a secret-key distillation scheme in which Alice and Bob make their amplitude measurements first, Alice then transmits the amplitude syndromes, and finally Alice and Bob compute the final key from the encoded amplitude operator $\overline{Z}$. From the outside, they \emph{could} have actually run the private state distillation protocol, phase stabilizer measurement and all, and so the secret key distillation protocol inherits security from the private state distillation protocol. 

Regularization can again in principle increase the rate further, and the resulting rate is identical to the upper bound found by Devetak and Winter~\cite{devetak_distillation_2005}. Thus we have constructed a secret key distillation protocol which achieves the optimal rate 
\begin{align}
\label{eq:regK}
\mathsf{K}(\psi)=\lim_{n\rightarrow \infty}\frac1n \mathsf{K}_1(\psi^{\otimes n}).
\end{align}

Given a shared, secret key Alice can transmit secret messages to Bob over a public communication channel simply by encrypting the message with the key. For absolute security, Shannon showed that one requires a key exactly as long as the message~\cite{shannon_communication_1949}, and the message may be encrypted by simply computing the exclusive-\textsc{or} of the key, a scheme known as a \emph{one-time pad} or \emph{Vernam cipher} after its inventor~\cite{vernam_cipher_1926}. 

Therefore Alice and Bob may use the secret key distillation scheme above for private communication over public channels. As Alice can choose the input to the channel, she may simply select that input which gives the output with the largest distillable key. Then they proceed with secret key distillation and the one-time pad. This gives a private communication rate of at least $\mathsf{P}_1(\mathcal{N})$ using $\mathsf{K}_1$ above, at least when assisted with public communication. This quantity is sometimes referred to as the \emph{private information} and we shall encounter it again in Section~\ref{sec:psqkd}. Once more, regularization may improve the rate, and the resulting expression $\mathsf{P}(\mathcal{N})$ was shown to be an upper bound in~\cite{devetak_private_2005}. The the protocol for private communication constructed in this way achieves the capacity. Here we have not attempted to remove the public communication from Alice to Bob as we did in the case of quantum communication, but it is also shown in~\cite{devetak_private_2005} that the private capacity can be achieved even without such assistance.

\chapter{Duality of Protocols}
\label{chap:duality}
In Chapter~\ref{chap:proc} we saw that reconciling Bob's quantum information in system $B$ with Alice's amplitude observable $Z^A$ requires her to send Bob extra information about $Z^A$ at rate $H(Z^A|B)_\psi$. This quantity trades off with $H(X^A|E)_\psi$ in the uncertainty principle Equation~(\ref{eq:jcbjmr}), $H(Z^A|B)_\psi+H(X^A|E)_\psi\geq 1$. As it happens, $H(X^A|E)_\psi$ is also the rate at which Alice can perform privacy amplification of $X^A$, extracting uniformly-distributed bits from $X^A$ which are completely uncorrelated with $E$. Thus, the less information Alice has to send to Bob about $Z^A$, the more randomness she can extract from $X^A$ unknown to $E$. There exists a duality between these two protocols due to the uncertainty principle.
The fact that the rates of the two protocols are connected invites us to think that the protocols themselves may be connected as well---that it may be able to transform one protocol into the other. 

Here we show that this is indeed the case, recounting results from~\citeme{renes_duality_2010,renes_noisy_2010} and presenting some new material. This chapter is divided into four main sections. In the first, we recount how information reconciliation and privacy amplification protocols based on linear hash functions can be transformed into each other, following~\citeme{renes_duality_2010}. The duality extends to non-i.i.d.\ resources where the notion of asymptotic rates is no longer valid, and we remark that this implies a more general form of the uncertainty principle in terms of generalized entropies suitable for such unstructured resources. In the second section, we explore the implications of this duality for constructing entanglement distillation protocols, and by extension, the other related protocols discussed in Chapter~\ref{chap:proc}. This material has not been previously published. The third section is devoted to the result of~\citeme{renes_noisy_2010} which shows that coding schemes for communication of either public or private classical information over noisy channels can be constructed by combining privacy amplification and information reconciliation. Thus, the two dual protocols occupy a very fundamental place in the study of information theory, as they can be combined to generate a variety of protocols for other tasks.

\section{Duality of Privacy Amplification and Information Reconciliation}
\label{subsec:duality}
The duality of information reconciliation and privacy amplification protocols both based on linear universal hashing essentially comes down to complementarity, specifically the fact that amplitude measurements destroy phase information and vice versa. Roughly speaking, if Alice measures amplitude stabilizers to perform information reconciliation of $Z^A$ with Bob, this can also be seen as randomizing the conjugate phase $X^A$ stabilizers, as would be useful in privacy amplification. With Lemma~\ref{lem:guesssec} in mind, we expect that if information reconciliation succeeds and Bob can reliably recover the encoded amplitude $\overline{Z}$, then the encoded phase $\overline{X}$ must be uncorrelated with system $E$. Making this work in reverse is slightly more complicated, and there are two versions, corresponding to Corollaries~\ref{cor:secguessB} and~\ref{cor:secguessR}.

\subsection{Privacy Amplification}
\label{sec:pa}
Before delving into the duality of these protocols, we first describe the process of privacy amplification and the known results in more detail. Imagine that Alice has an $n$-bit classical random variable $X^A$ which is correlated with an external system $E$ in some way. Letting $X^A$ be the phase observable, we can describe this state of affairs as 
\begin{align}
\Psi^{AE}
=\sum_\bx q_\bx\ket{\bx}\bra{\bx}^A\otimes \vartheta_\bx^E.
\end{align}
If the $\vartheta_\bx$ were identical for all $\bx$, then $E$ would have no information about the value of $X^A$. Conversely, if the $\vartheta_\bx$ have disjoint supports, then a measurement of $E$ projecting onto these supports can determine $\bx$ without error. 

First introduced by Bennett, Brassard, and Robert~\cite{bennett_how_1986,bennett_privacy_1988}, the goal of privacy amplification is twofold, to compute some function $\overline{X}^A=f(X^A)$ of $X^A$ which is both uniformly distributed and independent of $E$. Keeping only the function output $f(\bx)$ means that, for a given output $\overline{\bx}$, the state in $E$ is averaged over all the $\bx$ for which $f(\bx)=\overline{\bx}$. The goal is then to average over enough values of $\bx$ so that the conditional states $\vartheta_\bx=\sum_{\bx:f(\bx)=\overline{\bx}}q_\bx \vartheta_\bx^E$ are identical for all $\bx$.
Of course, it is unrealistic to expect such an ideal output, so we settle for $p_{\rm secure}(\overline{X}^A|E)\geq 1-\epsilon$. When the state $\Psi^{AE}$ is $n$ instances of a state $\psi^{AE}$ pertaining to a single bit in $A$, the asymptotically-optimal rate at which private random bits can be extracted is defined by the largest rate achievable in the simultaneous limits $n\rightarrow \infty$, $\epsilon\rightarrow 0$. 

In the case that the states $\vartheta_\bx^E$ are classical, i.e.\ simultaneously diagonalizable, Bennett \emph{et al.}\ have shown that universal hashing can be used for privacy amplification~\cite{bennett_how_1986,bennett_privacy_1988,bennett_generalized_1995}. 
Using random coding techniques in the i.i.d.\ setting, Devetak and Winter proved that the rate $H(Z^A|E)_\psi$ is achievable in the asymptotic limit for quantum $\vartheta_\bx^E$ ~\cite{devetak_distillation_2005}, while Renner and K\"onig show that universal hashing is also effective against quantum adversaries even for unstructured, non-i.i.d.\ resources~\cite{renner_universally_2005}. 

One drawback of approaches based on universal hashing is the need for a large amount of randomness to select the hash function from the family, $\Theta(n)$ \emph{seed} bits for $n$ input bits. Smaller function families would naturally be preferable. If we are unconcerned with privacy, the task reduces to extracting the maximum amount of randomness inherent in the distribution of $Z^A$, and constructing efficient \emph{extractors} has been the subject of much research in theoretical computer science (see e.g.\ Shaltiel~\cite{shaltiel_recent_2004} for a review). 

In particular, Trevisan's breakthrough construction showed that essentially all the randomness may be extracted from the input using extractors with seeds of size $O({{\rm polylog}(n)})$~\cite{trevisan_construction_1999,trevisan_extractors_2001}. Recently De \emph{et al.}\ showed that Trevisan's construction can be extended to privacy amplification against quantum adversaries~\cite{de_trevisans_2009}. 


\subsection{Privacy Amplification from Information Reconciliation}
Now we examine how an information reconciliation protocol using linear functions for universal hashing can be used for privacy amplification. Use of CSS codes makes this simple. Consider, as usual, a tripartite pure state $\ket{\psi}^{ABE}$. Instead of taking system $A$ to be a qubit, we now assume that it has dimension $2^n$ for some $n$. This can done without loss of generality by embedding $A$ into a state space larger than the support of $\psi^{A}$, and allows us to think of system $A$ as a collection of $n$ qubits. 

Suppose that there exists a protocol for information reconciliation of Bob's information with the Alice's amplitude $Z^A$ which calls for Alice to compute a linear function of $Z^A$ and send it to Bob. This computation can be thought of as measuring the stabilizers of a CSS code which contains only $Z$-type stabilizers. In terms of virtual qubits as described in Section~\ref{subsec:compqecc}, the entire collection of qubits can be grouped into two subsets, the encoded qubits and the stabilizer qubits. Denoting the amplitude values of the encoded qubits by $\overline{\bz}$ and those of the stabilizer qubits by $\widehat{\bz}$, we can express the initial state as (abusing notation slightly)
\begin{align}
\label{eq:pafromir}
\ket{\psi}^{ABE}=\sum_\bz \sqrt{p_\bz}\ket{\bz}^A\ket{\varphi_\bz}^{BE}=\sum_{\overline{\bz},\widehat{\bz}}\sqrt{p_{\overline{\bz},\widehat{\bz}}}\ket{\overline{\bz}}^{\overline{A}}\ket{\widehat{\bz}}^{\widehat{A}}\ket{\varphi_{\overline{\bz},\widehat{\bz}}}^{BE},
\end{align}
 where $\overline{A}$ ($\widehat{A}$) denotes the virtual subsystem of the encoded (stabilizer) qubits. 

The information reconciliation protocol assures us that given the value of $\widehat{\bz}$, Bob can determine the value of $\bz$ and therefore $\overline{\bz}$. That is, there exists a measurement on $\widehat{A}B$ which can reliably predict the amplitude of $\overline{A}$ with guessing probability greater than $1-\epsilon$ for some small $\epsilon$. Then by Lemma~\ref{lem:guesssec}, $p_{\rm secure}(X^{\overline{A}}|E)_\psi\geq 1-\sqrt{2\epsilon}$. Therefore, to generate a random secret string from the phase observable $X^A$, Alice can simply compute the encoded phase $\overline{X}$.

\subsection{Information Reconciliation from Privacy Amplification}
\label{sec:irfrompa}
Showing that a privacy amplification protocol can be repurposed for information reconciliation is somewhat more involved. Here we encounter the same complications as in Section~\ref{sec:dnd}: Just because $E$ has no knowledge of $X^A$  does not imply that $B$ can predict $Z^A$. But the same technique used there of imposing extra conditions so that the uncertainty principle is saturated works here as well. There are two separate cases to consider.\\

In the first of these we require $p_{\rm guess}(Z^A|E)_\psi=1$, meaning we might as well write the state as 
\begin{align}
\label{eq:pascenario}
\ket{\psi}^{ABE}=\sum_\bz\sqrt{p_\bz}\ket{\bz}^A\ket{\bz}^{E_1}\ket{\varphi_\bz}^{BE_2},
\end{align}
for $E=E_1E_2$. This is somewhat more natural for the goal of amplitude information reconciliation, as it ensures that the $AB$ state describes a classical variable in $A$ and a quantum state in $B$: $\psi^{AB}=\sum_\bz p_\bz P_\bz^A\otimes \varphi_\bz^B$. 

Now suppose that there exists an encoded $\overline{X}$ such that $p_{\rm secure}(\overline{X}^A|E)_\psi\geq 1-\epsilon$. Again using the encoded and stabilizer qubits for system $A$, it follows from Corollary~\ref{cor:secguessR} that there exists a measurement $\mathcal{M}_{\overline{Z}}$ on $\widehat{A}B$ which can recover $\overline{Z}$ with error probability less than $\sqrt{2\epsilon}$. However, Bob does not have access to $\widehat{A}$, and Alice must take care in what information she sends to Bob, lest it leak any information about the phase to $E$. Intuitively, however, measuring amplitude stabilizers on $\widehat{A}$ destroys any phase information that might be present, so it should be safe to transmit the resulting syndromes to Bob. 

Indeed, the formal nature of the state shared by Alice and Bob makes this clear, since $\widehat{A}$ is effectively already measured. Tracing out $E$, we obtain
\begin{align}
\psi^{AB}=\sum_{\overline{\bz},\widehat{\bz}}p_{\overline{\bz},\widehat{\bz}}P_{\overline{\bz}}^{\overline{A}}\otimes P_{\widehat{\bz}}^{\widehat{A}}\otimes \varphi_{\overline{\bz},\widehat{\bz}}^B.
\end{align}
Due to the classical structure of system $\widehat{A}$, we can assume without loss of generality that the measurement $\mathcal{M}_Z^{\widehat{A}B}$ has this structure, too. For let $\Lambda_{\overline{\bz}}^{\overline{A}B}$ be the POVM elements of the $\mathcal{M}_{\overline{Z}}^{\widehat{A}B}$ and consider the joint probability of obtaining the outcome $\mathcal{M}_{\overline{Z}}^{\widehat{A}B}=\overline{\bz}'$ and  $\overline{Z}=\overline{\bz}$,
\begin{align}
{\rm Pr}\big[\mathcal{M}_{\overline{Z}}^{\widehat{A}B}=\overline{\bz}',\overline{Z}=\overline{\bz}\big]=\sum_{\widehat{\bz}}p_{\overline{\bz},\widehat{\bz}}{\rm Tr}\big[\Lambda_{\overline{\bz}'}^{\widehat{A}B}P_{\widehat{\bz}}^{\widehat{A}}\otimes \varphi_{\overline{\bz},\widehat{\bz}}^B\big].
\end{align}
Clearly the same probability results if we first determine the value of $\widehat{\bz}$ and then use a POVM on $B$ having elements $\Pi^B_{\overline{\bz};\widehat{\bz}}={\rm Tr}[P_{\widehat{\bz}}^{\widehat{A}}\Lambda_{\overline{\bz}}^{\overline{A}B}]$. But this is precisely how we expected the information reconciliation process to work: after learning $\widehat{Z}$, Bob can measure $B$ and recover $\overline{Z}$.\\

In the second case we require $p_{\rm guess}(X^A|B)_\psi=1$, so that Bob already has information about the phase. Should he also learn the amplitude, Alice and Bob would have created an entangled state, so this scenario is essentially the latter half of an entanglement distillation scheme. In fact, the protocol of Devetak and Winter in~\cite{devetak_distillation_2005} is constructed along these lines. 
Just as in the previous scenario, if a privacy amplification protocol can construct an encoded phase $\overline{X}$ uncorrelated with $E$, then the conjugate encoded amplitude $\overline{Z}$ must be reliably recoverable by measurement on $\widehat{A}B$, though now the implication follows from Corollary~\ref{cor:secguessB}. However, we cannot use the same argument to show that measurement of the amplitude of $\widehat{A}$ is sufficient to enable information reconciliation using system $B$. 

Instead, we can proceed as follows. From the requirement $p_{\rm guess}(X^A|B)_\psi=1$, the 
marginal state of the $AE$ subsystems takes the form $\psi^{AE}=\sum_\bx q_\bx P_\bx^A\otimes \vartheta_\bx^E$ for some probabilities $q_\bx$ and normalized states $\vartheta_\bx^E$. Decomposing Alice's qubits into virtual encoded and stabilizer qubits, the state is, in a slight abuse of notation, just
$\psi^{AE}=\sum_{\overline{\bx},\widehat{\bx}}q_{\overline{\bx},\widehat{\bx}}P_{\overline{\bx}}^{\overline{A}}\otimes P_{\widehat{\bx}}^{\widehat{A}}\otimes \vartheta_{\overline{\bx},\widehat{\bx}}$. Since the $\widehat{A}$ system is in a phase eigenstate, measuring its amplitude delivers a completely random outcome and results in precisely the same state as if $\widehat{A}$ were traced out. But the encoded phase is chosen by the privacy amplification protocol so that disposing of the stabilizer qubits leaves a nearly ideal key, and the amplitude measurement of the stabilizer qubits does not change this. Thus, for every measurement result we can conclude by Corollary~\ref{cor:secguessB} that there exists a measurement on $B$ which gives $\overline{\bz}$ with high probability.

In both of these situations the desired measurement is only shown to exist, but is not directly constructed. However, due to a result by Barnum and Knill, this presents no real difficultly, as the pretty-good measurement has an error probability which is at most a factor of two worse than the optimal case~\cite{barnum_quantum_2000}. Thus, if privacy amplification is possible so that $p_{\rm secure}(\overline{X}^A|E)_\psi\geq 1-\epsilon$, then using the amplitude stabilizer measurement and the pretty good measurement for Bob's conditional marginal states results in information reconciliation protocols with error probability less than $2\sqrt{2\epsilon}$.

\subsection{One-Shot Protocols and a Generalized Uncertainty Principle} 
\label{subsec:oneshot}
In the preceding sections we have treated Alice's system as a collection of $n$ qubits, but it is important to note that the duality holds for arbitrary resource states, not just i.i.d.\ states. The i.i.d.\ setting is only necessary to define the asymptotically-achievable rates of the various protocols. Recently, a new framework has been constructed which makes it possible to characterize protocols operating on arbitrary, structureless resource states in terms of \emph{smooth entropies}. A proper treatment of smooth entropies and their calculus is beyond the scope of this thesis, but we remark that they can be thought of as generalizations of R\'enyi entropies which are somewhat more familiar in standard information theory and obey many of the same chain rules as the usual Shannon or von Neumann entropies. Here we wish to point out that the duality above, in particular the former duality of Section~\ref{sec:irfrompa}, implies a new entropic uncertainty principle formulated in terms of smooth entropies. 

There are two different smooth entropies, the smooth min-entropy and the smooth max-entropy, and each comes in both conditional and unconditional varieties. It turns out that the number $\ell^\epsilon_{\rm ext}(X^A|E)_\psi$ of $\epsilon$-good random bits one can extract from $Z^A$ which are secret from $E$ is characterized by the smooth min-entropy,  $\ell^\epsilon_{\rm ext}(X^A|E)_\psi\approx H_{\rm min}^\epsilon(Z^A|E)_\psi$~\cite{renner_universally_2005,renner_security_2005,koenig_sampling_2007,tomamichel_leftover_2010}. More precisely, $\ell^\epsilon_{\rm ext}(X^A|E)_\psi$ equals $H_{\rm min}^\epsilon(Z^A|E)_\psi$ up to small deviations involving the smoothing parameter $\epsilon$. Much the same holds for information reconciliation, except using the smooth max-entropy. As shown by the present author and Renner~\cite{renes_one-shot_2010}, the number of bits Alice needs to send to Bob, generated by universal hashing, is given by $\ell_{\rm rec}^\epsilon(Z^A|B)_\psi\approx H_{\rm max}^\epsilon(Z^A|B)_\psi$. Though it might not appear so, the definitions of the smooth entropies are logically distinct from the operational quantities $\ell_{\rm ext}^\epsilon$ and $\ell_{\rm rec}^{\epsilon}$. It should be noted, however, that the smooth entropies are themselves related to the operational quantities $p_{\rm guess}$ and $p_{\rm secure}$, a fact discovered by K\"onig\etalsp\cite{koenig_operational_2009}. 

Now consider a quantum state of the form given in Equation~(\ref{eq:pafromir}). Information reconciliation of the amplitude requires that Alice send $\ell_{\rm rec}^\epsilon(Z^A|B)_\psi$ bits obtained via universal hashing of $Z^A$ to Bob. But this implies Alice can equally-well use the encoded phase to generate random bits uncorrelated with $E$. In all she can create $n-\ell_{\rm rec}^\epsilon(Z^A|B)_\psi$ random bits this way, which must of course be less than the bound on privacy amplification established by the smooth min-entropy. Similarly, $\ell_{\rm rec}^\epsilon(Z^A|B)_\psi$ is bounded by the smooth max-entropy, so we anticipate from this heuristic argument that 
\begin{align}
H_{\rm min}^\epsilon(X^A|E)_\psi+H_{\rm max}^\epsilon(Z^A|E)_\psi\apprge  n.
\end{align}
Indeed, the full analysis performed in~\citeme{renes_duality_2010} shows that the above expression is correct, up to terms of order $\log(1/\epsilon)$. The state in Equation~(\ref{eq:pafromir}) is arbitrary, so this generalized uncertainty principle holds for conjugate observables and any tripartite quantum state. Recently, Tomamichel and Renner have found a simple proof which extends the above uncertainty relation to arbitrary observables in the manner of Equation~(\ref{eq:jcbjmr})~\cite{tomamichel_uncertainty_2011}.

\section{Different Approaches to Entanglement Distillation}
The entanglement distillation protocol presented in Section~\ref{sec:opted} was built by combining information reconciliation protocols for both Alice's amplitude and phase observables. By the duality of information reconciliation and privacy amplification, we expect to be able to trade one task for the other, and base the construction of the protocol on either Theorem~\ref{thm:dw} or Theorem~\ref{thm:2xdecoupling} rather than Theorem~\ref{thm:entdec}. In the following we present these two alternate approaches. It should be stressed that ultimately the alternate approaches followed here yield the same protocol as in  Section~\ref{sec:opted}, but they have completely independent justifications.

In the first approach, we may think of the phase information reconciliation in the original protocol as amplitude privacy amplification, which makes the goal of entanglement distillation to simultaneously give Bob full information about Alice's amplitude while ensuring that $E$ has none. Formally, the goal in constructing the protocol is to fulfill the conditions of Theorem~\ref{thm:dw}. Clearly this approach is quite closely related to secret-key distillation, which has nearly the same goals, and indeed was the original approach followed by Devetak and Winter~\cite{devetak_distillation_2005} for entanglement distillation and Devetak in establishing the quantum capacity of a quantum channel~\cite{devetak_private_2005}. Here we  construct an entanglement distillation protocol having the same aims but a somewhat different structure, namely the use of CSS codes by Alice.  

In the second approach, we can give up on Bob altogether and focus entirely on removing amplitude and phase correlations from $E$, with the aim of fulfilling the conditions of Theorem~\ref{thm:2xdecoupling}. To our knowledge, this approach is new. It shows that the commonly used quantum decoupling method can be broken down into two classical decoupling steps, further reinforcing the claim that quantum information processing can be understood as a combination of classical information processing of amplitude and phase information.  Figure~\ref{fig:protocolcomp} depicts the relationship between the three approaches.  

\def\xwid{4.5}
\def\zwid{5.2}
\def\nwid{3}
\def\dx{.5}
\def\height{.25}
\def\gap{.1}

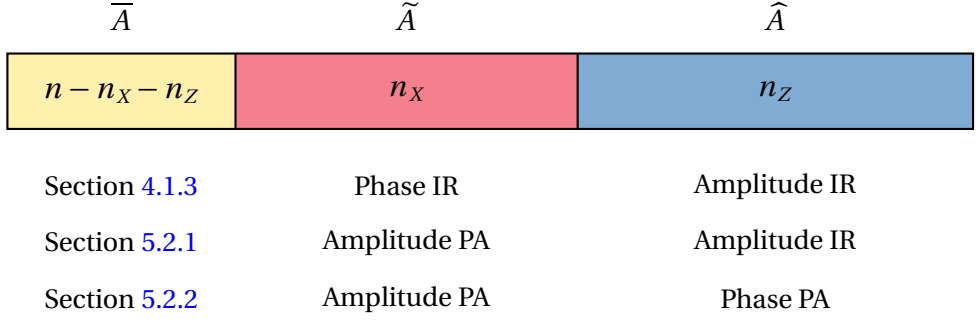
\begin{figure}[h!]
\begin{center}
\begin{tikzpicture}[thick]

\fill[tud6a!50] (0,0) rectangle (\nwid,1);
\fill[tud9b!50] (\nwid,0) rectangle (\nwid+\xwid,1);
\fill[tud1b!50] (\nwid+\xwid,0) rectangle (\nwid+\xwid+\zwid,1);

\draw (0,0) rectangle (\nwid+\xwid+\zwid,1);
\draw (\nwid,0) -- (\nwid,1);
\draw (\nwid+\xwid,0) -- (\nwid+\xwid,1);

\node at (\nwid/2,.5) {\large $n-n_X-n_Z$};
\node at (\nwid/2,1.5) {\large $\overline{A}$};

\node at (\nwid+\xwid/2,.5) {\color{black}\large $n_X$};
\node at (\nwid+\xwid/2,1.5) {\large $\widetilde{A}$};

\node at (\nwid+\xwid+\zwid/2,.5) {\large $n_Z$};
\node at (\nwid+\xwid+\zwid/2,1.5) {\large $\widehat{A}$};

\node at (\nwid/2,-.75) {Section~\ref{subsec:optimal}};
\node at (\nwid/2,-1.5) {Section~\ref{subsec:edirpa}};
\node at (\nwid/2,-2.25) {Section~\ref{subsec:edpapa}};

\node at (\nwid+\xwid/2,-.75) {Phase IR};
\node at (\nwid+\xwid+\zwid/2,-.75) {Amplitude IR};

\node at (\nwid+\xwid/2,-1.5) {Amplitude PA};
\node at (\nwid+\xwid+\zwid/2,-1.5) {Amplitude IR};

\node at (\nwid+\xwid/2,-2.25) {Amplitude PA};
\node at (\nwid+\xwid+\zwid/2,-2.25) {Phase PA};

\end{tikzpicture}
\caption{\label{fig:protocolcomp} Breakdown of Alice's $n$ physical qubits into three subsets of virtual qubits in subsystems $\overline{A}$, $\widehat{A}$, and $\widetilde{A}$ and what the different subsets are used for in the various approaches to entanglement distillation presented here. Theorem~\ref{thm:entdec} is the goal of the construction in Section~\ref{subsec:optimal}, where $\widehat{A}$ is used to reconcile the amplitude information with Bob and $\widetilde{A}$ the phase information. The construction in Section~\ref{subsec:edirpa} takes Theorem~\ref{thm:dw} as its goal, and $\widehat{A}$ is again used for amplitude information reconciliation with Bob, but $\widetilde{A}$ is used for privacy amplification of the same amplitude information against the environment. Finally, Theorem~\ref{thm:2xdecoupling} is the aim of construction in Section~\ref{subsec:edpapa}, where $\widehat{A}$ is used to decouple Alice's phase information from the environment and $\widetilde{A}$ her amplitude information.}
\end{center}
\vspace{-14pt}
\end{figure}

\subsection{Amplitude Information Reconciliation \& Privacy Amplification}
\label{subsec:edirpa}
Although the approach based on Theorem~\ref{thm:dw} is substantially similar to that pursued in~\cite{devetak_distillation_2005}, we include it here for completeness. Again we consider the case in which Alice and Bob share asymptotically-many copies of a resource state $\psi^{AB}$ which may be purified to $\ket{\psi}^{ABE}$. We will construct the protocol by choosing two sets of amplitude stabilizers, first a number $n_{Z}$ large enough to enable information reconciliation with Bob and the second $n_{X}$ to achieve privacy amplification against $E$. Thinking in terms of virtual qubits and their associated amplitude and phase operators, let us call the encoded amplitude operators $\overline{Z}$, of which there are $n-n_{Z}-n_{X}$, the $n_{Z}$ stabilizers associated with information reconciliation $\widehat{Z}$, and those $n_{X}$ associated with privacy amplification $\widetilde{Z}$. 

To ensure that Bob can reconstruct the original amplitude, and therefore the encoded $\overline{Z}$, Alice measures the $\widehat{Z}$ stabilizers and sends the resulting syndromes to Bob. This could give additional information about $\overline{Z}$ to $E$, but if the $\widetilde{Z}$ stabilizers are numerous enough, averaging over their syndromes destroys whatever information $E$ had about the original amplitude $Z$. By itself, $\widetilde{Z}$ is independent of $\overline{Z}$, since they belong to different sets of virtual qubits, so Alice can be certain that no information leaks to $E$ in this process. 

We are not ready to apply Theorem~\ref{thm:dw}, however. The shared state at this step in the protocol is 
\begin{align}
\ket{\Psi'}^{\overline{A}\widetilde{A}BB'EE'}=\sum_{\overline{\bz},\widehat{\bz},\widetilde{\bz}}\sqrt{p_{\overline{\bz},\widehat{\bz},\widetilde{\bz}}}\ket{\overline{\bz}}^{\overline{A}}\ket{\widetilde{\bz}}^{\widetilde{A}}\ket{\widehat{\bz}}^{B'}\ket{\widehat{\bz}}^{E'}\ket{\varphi_{\overline{\bz},\widehat{\bz},\widetilde{\bz}}}^{BE},
\end{align}
where the amplitude of $\widehat{A}$ has been transferred and copied to new systems $B'$ and $E'$, which mimics the classical measurement of $\widehat{Z}$ and broadcast of the result $\widehat{\bz}$. From information reconciliation there is a measurement $\mathcal{M}_{\overline{Z}}$ on $BB'$ such that $p_{\rm guess}(\overline{Z}|\mathcal{M}_{\overline{Z}}^{BB'})$ is close to one, and via the above discussion of privacy amplification $p_{\rm secure}(\overline{Z}|EE')$ is likewise nearly one. To apply Theorem~\ref{thm:dw} we still need to discard $\widetilde{A}$ without changing either of these conditions. 

This situation is precisely that of the second case of the previous section, from which it follows that measuring the phase $\widetilde{X}$ will not decrease Bob's guessing probability and will also not leak any information about $Z$ to $E$. Formally, we can see this by examining the state after the phase stabilizer measurement, 
\begin{align}
\label{eq:ampphasmeas}
\ket{\Psi''}^{\overline{A}BB'B''EE'E''}&=\frac{1}{\sqrt{2^{n_{\rm PA}}}}\sum_{\widetilde{\bx},\overline{\bz},\widehat{\bz},\widetilde{\bz}}\sqrt{p_{\overline{\bz},\widehat{\bz},\widetilde{\bz}}}(-1)^{\widetilde{\bx}\cdot\widetilde{\bz}}\ket{\overline{\bz}}^{\overline{A}}\ket{\widehat{\bz}}^{B'}\ket{\widetilde{\bx}}^{B''}\ket{\widehat{\bz}}^{E'}\ket{\widetilde{\bx}}^{E''}\ket{\varphi_{\overline{\bz},\widehat{\bz},\widetilde{\bz}}}^{BE}\\
&=\frac{1}{\sqrt{2^{n_{\rm PA}}}}\sum_{\widetilde{\bx},\overline{\bz},\widehat{\bz},\widetilde{\bz}}\sqrt{p_{\overline{\bz},\widehat{\bz},\widetilde{\bz}}}\ket{\overline{\bz}}^{\overline{A}}\ket{\widehat{\bz}}^{B'}(X^{\widetilde{\bz}})^{B''}\ket{\widetilde{\bx}}^{B''}\ket{\widehat{\bz}}^{E'}\ket{\widetilde{\bx}}^{E''}\ket{\varphi_{\overline{\bz},\widehat{\bz},\widetilde{\bz}}}^{BE}.
\end{align}
Because $\widetilde{\bz}$ only shows up as part of a unitary operator on $B''$, tracing out all of Bob's systems means the state in $E$ is averaged over these values, which was precisely the goal of privacy amplification. Moreover, $\widehat{\bz}$ by itself is uncorrelated with $\overline{\bz}$.   
Thus, in transferring the phase of $\widetilde{A}$ to systems $B''$ and $E''$, we have $p_{\rm guess}(\overline{Z}|\mathcal{M}_{\overline{Z}}^{BB'B''}), p_{\rm secure}(\overline{Z}|EE'E'')\approx 1$. Hence we can apply Theorem~\ref{thm:dw} to infer that Alice and Bob can recover a high-quality entangled state from their systems. By the known results on information reconciliation and privacy amplification, we can pick $n_Z\approx nH(Z^A|B)_\psi$ and $n_X\approx n-nH(Z^A|E)_\psi$ so that the rate achievable by this protocol is $H(Z^A|E)_\psi-H(Z^A|B)_\psi=-H(A|B)_\psi$, the hashing bound.

\subsection{Privacy Amplification of Both Amplitude and Phase}
\label{subsec:edpapa}
The method of the previous section can serve as a stepping stone towards a protocol which is based entirely on decoupling both amplitude and phase from $E$. All we have to do is turn the amplitude information reconciliation into privacy amplification of phase. From the discussion prior to Theorem~\ref{thm:2xdecoupling}, we know that it will be insufficient to decouple $E$ from $X$ and $Z$, rather we must aim to simultaneously decouple $E$ from $Z$ on the one hand, and $C_ZE$ from $X$ on the other. Note that in the latter case the state $\ket{\psi_Z}^{AC_ZBE}$ is only a device used in the proof; it does not need to show up in the protocol directly. 

To achieve this simultaneous decoupling, we again begin by specifying two sets of stabilizers, $n_Z$ $Z$-type stabilizers to decouple the amplitude and $n_X$ $X$-type stabilizers to decouple the phase. As before, Alice's $n$ qubits can be grouped into three sets of virtual qubits, the $n-n_Z-n_X$ encoded qubits in $\overline{A}$, $n_X$ qubits in $\widehat{A}$, and $n_Z$ qubits in $\widetilde{A}$. If $n_X$ and $n_Z$ are chosen appropriately, we can be sure that both $p_{\rm secure}(\overline{X}^A|C_ZE)_{\psi_Z}$ and $p_{\rm secure}(\overline{Z}^A|E)_\psi$ are nearly one. Therefore system $\overline{A}$ is implicitly in a maximially-entangled state with the joint system $\widehat{A}\,\widetilde{A}\,B$, and the remaining task is to classically transfer $\widehat{A}\,\widetilde{A}$ to Bob without violating the privacy conditions. 

Following the method of the previous construction, suppose Alice makes \emph{amplitude} measurements on $\widehat{A}$ and \emph{phase} measurements on $\widetilde{A}$, which she then broadcasts this information publicly. While $E$ now recieves extra information about the original amplitude and phase, no information about the encoded amplitude and phase has been leaked for the same reason as in the previous construction. The marginal states in $E$ conditioned on the encoded amplitude (phase) value are still averaged over enough $\bz$ ($\bx$) values to make them essentially identical.

Formally, the situation is very similar to the previous case as well. In fact, for the observable $\overline{Z}$, the state of $\ket{\psi}^{ABE}$ after the measurements described above is precisely that of Equation~(\ref{eq:ampphasmeas}), and so we can immediately conclude that $p_{\rm secure}(\overline{Z}|EE'E'')\approx 1$. The state relevant to privacy amplification of the phase can be expressed as, following Equation~(\ref{eq:psizphase}),
\begin{align}
\ket{\psi_Z}^{AC_ZBE}&=\tfrac{1}{\sqrt{2^n}}\sum_{\overline{\bx},\widehat{\bx},\widetilde{\bx}}\ket{\overline{\bx}}^{\overline{A}}\ket{\widehat{\bx}}^{\widehat{A}}\ket{\widetilde{\bx}}^{\widetilde{A}}(Z^\bx)^{C_Z}\ket{\psi}^{C_ZBE},
\end{align}
and after the measurement it becomes 
\begin{align}
\ket{\Psi_Z''}^{\overline{A}BB'B''EE'E''}&= \tfrac{1}{\sqrt{2^{n+n_X}}}\sum_{\widehat{\bz},\overline{\bx},\widehat{\bx},\widetilde{\bx}}(-1)^{\widehat{\bx}\cdot\widehat{\bz}}\ket{\overline{\bx}}^{\overline{A}}\ket{\widehat{\bz}}^{B'}\ket{\widehat{\bz}}^{E'}\ket{\widetilde{\bx}}^{B''}\ket{\widetilde{\bx}}^{E''}(Z^\bx)^{C_Z}\ket{\psi}^{C_ZBE}\\
&= \tfrac{1}{\sqrt{2^{n+n_X}}}\sum_{\widehat{\bz},\overline{\bx},\widehat{\bx},\widetilde{\bx}}\ket{\overline{\bx}}^{\overline{A}}\ket{\widehat{\bz}}^{B'}\ket{\widehat{\bz}}^{E'}\ket{\widetilde{\bx}}^{B''}\ket{\widetilde{\bx}}^{E''}(Z^{\overline{\bx}})^{\overline{C}_Z}(Z^{\widetilde{\bx}})^{\widetilde{C}_Z}\ket{\psi}^{C_ZBE}.
\end{align}
Now the phase $(-1)^{\widehat{\bx}\cdot\widehat{\bz}}$ cancels the similar phase inherent in the operator $(Z^\bx)^C.$ Again this enforces an average over $\widehat{\bx}$ for the states in system $E$, ensuring that they are completely uncorrelated with $\bx$ and therefore $\overline{\bx}$. Just as before, $\widetilde{\bx}$ does not add any additional information about $\overline{\bx}$, so we can conclude that $p_{\rm secure}(\overline{X}|EE'E'')_{\psi_Z}\approx 1$ and therefore Theorem~\ref{thm:2xdecoupling} is applicable.
For $n_X$ and $n_Z$ we can pick $n-nH(X^A|C_ZE)_{\psi_Z}$ and $n-nH(Z^A|E)_\psi$, respectively, yielding an overall rate of $H(X^A|C_ZE)_{\psi_Z}+H(Z^A|E)_\psi-1$. This works out to be $H(A|E)=-H(A|B)$, which is the hashing bound once again. 

\section{Classical Channel Coding}
In Section~\ref{subsec:channelcoding} we described how a protocol for entanglement distillation using one-way communication can be used to reliably send quantum information over a noisy channel, and that protocols achieving the optimal rate of entanglement distillation lead to optimal channel coding. A similar result holds for classical information, as demonstrated in~\citeme{renes_noisy_2010}, albeit using information reconciliation and randomness extraction or privacy amplification. This leads not only to a new proof of Shannon's original noisy channel coding theorem in the case the channel is classical, but also to one-shot results for both public and private communication of classical information over noisy quantum channels. Moreover, using the results of Section~\ref{subsec:duality}, we can exchange the use of information reconciliation with privacy amplification of a complementary observable, and thereby construct a channel coding scheme which is entirely based on decoupling-type arguments. That is, we can construct a means for noisy channel communication not by directly ensuring that the receiver can properly decode the transmissions, but rather by ensuring that complementary information does not leak to the environment.  

On a heuristic level, the approach itself is quite similar to that of Section~\ref{subsec:channelcoding}, not just the result. We can make the same sort of modification to an appropriate information reconciliation protocol as we did to entanglement distillation in order to create a coding scheme for the channel scenario. Suppose that Alice can send classical messages $z\in\{0,1\}$ to Bob over a quantum channel such that he receives the corresponding state $\varphi_z$. If they are in possession of an information reconciliation protocol for the state $\psi^{AB}=\frac12\sum_z \ket{z}\bra{z}^A\otimes\varphi_z^B$, then they can use this to communicate reliably over the channel. In the information reconciliation scheme Alice would compute a hash function of $n$ instances of the random variable $Z^A$, and with this information $f(\bz)$ Bob could determine the actual $\bz$ from his state $\varphi_\bz^B$. 

In the channel scenario this can be used to specify a code by the set of all possible inputs $\bz$ (codewords) which hash to a specified value, say $\widehat{\bz}$. Ordering the elements of this set in some way, Alice can then map her actual message to the corresponding codeword. This defines an encoder. Presumably they have chosen an $\widehat{\bz}$ for which the information reconciliation decoder has a small probability of error, and thus Bob can use that decoder to determine $\bz$ and therefore Alice's intended message. 

In fact, when the original inputs $\bz$ are uniformly distributed as above, one can easily show that not only will Bob have a small average probability of decoding error, but also a low error probability for every message. 
To determine the number of messages Alice can send, it is simplest to consider the case of linear hash functions, where every output has the same number of preimages, namely the ratio of input to output size.\footnote{The general case can be handled by probabilistic arguments~\citeme{renes_noisy_2010}.} Thus, if information reconciliation requires an $m$-bit hash for an $n$-bit string $\bz$, the resulting code can be used to transmit $n-m$ bits, remembering that an $n$-bit input corresponds to $2^n$ possible input strings. 
As we are working in the asymptotic i.i.d.\ scenario, we can apply the result mentioned in Section~\ref{subsec:ir} that information reconciliation is possible at the rate $r=H(Z^A|B)_\psi$, so that $m\approx nr$. Therefore Alice can reliably send messages at rate $1-H(Z^A|B)_\psi$.

There is still room for improvement, however, as the Holevo-Schumacher-Westmoreland (HSW) theorem (the quantum version of Shannon's noisy channel coding theorem) assures us that rates up to at least the Holevo quantity $\chi=\max_{P_Z}I(Z^A{:}B)_\psi=\max_{P_Z}H(Z^A)_\psi-H(Z^A|B)_\psi$ are possible~\cite{holevo_capacity_1998,schumacher_sending_1997}.\footnote{As with quantum communication and private classical communication over quantum channels, regularization can increase the rate further. Indeed, as discussed at the end of Chapter~\ref{chap:qkd}, regularization is necessary to reach the capacity.} Clearly something is missing in the above, unless it happens that the optimal distribution is uniform. We have restricted attention to the uniform distribution for convenience in the proof, in particular to easily determine the number of messages which can be send with small worst-case probability of error. If we only cared about average probability of error, any distribution could be used for the purposes of converting an information reconciliation protocol to a channel code. The difficulty is then to exploit this freedom without requiring a substantially new proof. 

Fortunately, there is a simple way to deal with this problem by making use of the randomness extractors described in Section~\ref{sec:pa}, though here the privacy properties will only be relevant to the case of private channel communication. Alice can use the extractor in reverse as a \emph{distribution shaper} to simulate a random variable $Z$ with arbitrary distribution $P_Z$ using a uniformly-distributed random variable $U$. To do so, Alice chooses an extractor output $u$ at random and then maps it to a possible preimage $\bz$ using the conditional distribution $P_{Z|U=u}$. This requires an additional source of randomness, as the extractor function is not one-to-one. 

When $Z$ is destined to be the input to the communication channel, we can instead think of $U$ as the input to the ``superchannel'' composed of the shaper and the original channel. This is depicted in Figure~\ref{fig:channelcoding}, taken from~\citeme{renes_noisy_2010}. Note that for this step we must rely on the recently-established one-shot results on information reconciliation, as mentioned in Section~\ref{subsec:oneshot}, because the joint state shared by Alice and Bob which describes the input and output is generally not i.i.d.
However, in the one-shot framework all the previous results linking information reconciliation to channel coding can be applied to the superchannel. Alice encodes messages into the outputs $\mathbf{u}$ of the extractor and then sends these first through the shaper and then through the communication channel to Bob. Information reconciliation of $U$ with $B$ enables Bob to recover the original message.

\def\boxsep{2.25}
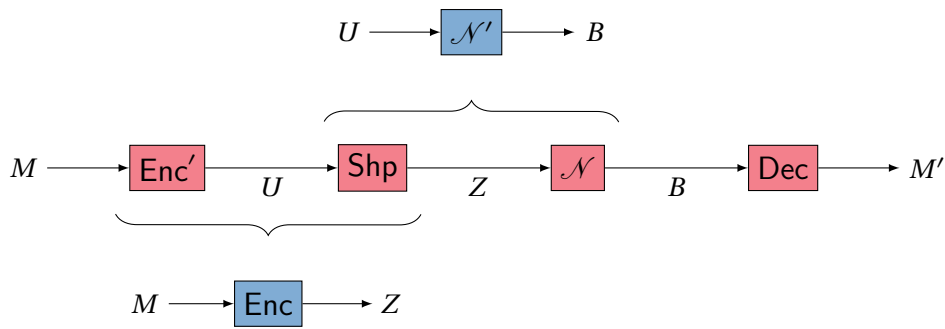
\begin{figure}[t]
\begin{center}
\begin{tikzpicture}[scale=1.2,box/.style={
rectangle,
minimum size=6mm,
draw,
fill=tud9b!50,
font=\itshape
},box2/.style={rectangle, minimum size=6mm, draw, 
fill=tud1b!50,
font=\itshape
}
]
\node (in) at (-0.7*\boxsep,0) {$M$};
\node[box] (encp) at (0,0) {\large $\mathsf{Enc}'$}
  edge [latex-] (in);
\node[box] (shp) at (\boxsep,0) {\large $\mathsf{Shp}$}
  edge [latex-] node[below] {$U$} (encp);
\node[box] (chan) at (2*\boxsep,0) {\large $\mathcal{N}$}
  edge [latex-] node[below] {$Z$} (shp);
\node[box] (dec) at (3*\boxsep,0) {\large $\mathsf{Dec}$}
  edge [latex-] node[below] {$B$} (chan);
\node (out) at (3*\boxsep+0.7*\boxsep,0) {$M'$}
  edge [latex-] (dec);
\draw[decoration={brace,amplitude=8},decorate] ($(shp.south east) + (.15,-.25)$) -- ($(encp.south west) + (-.15,-.25)$);
\draw[decoration={brace,amplitude=8},decorate] ($(shp.north west) + (-.15,.25)$) -- ($(chan.north east) + (.15,.25)$);
\node[box2] (chanp) at ($0.5*(shp.west) + 0.5*(chan.east) + (0,1.5)$) {\large $\mathcal{N}'$};
\node (intop) at ($(chanp)-(0.6*\boxsep,0)$) {$U$}
  edge [-latex] (chanp);
\node (outtop) at ($(chanp)+(0.6*\boxsep,0)$) {$B$}
  edge [latex-] (chanp);
\node[box2] (enc) at ($0.5*(shp.east) + 0.5*(encp.west) + (0,-1.5)$) {\large $\mathsf{Enc}$};
\node (inbot) at ($(enc)-(0.6*\boxsep,0)$) {$M$}
  edge [-latex] (enc);
\node (outbot) at ($(enc)+(0.6*\boxsep,0)$) {$Z$}
  edge [latex-] (enc);
\end{tikzpicture}
\end{center}
\caption{\label{fig:channelcoding}Schematic of using randomness extraction and information reconciliation to perform noisy channel communication. 
Messages $m\in M$ are input to the encoder $\mathsf{Enc}'$ and subsequently to the shaper $\mathsf{Shp}$, which is a randomness extractor run in reverse. Then they are then transmitted over the channel ${\mathcal{N}}$ to the receiver, who uses the decoder $\mathsf{Dec}$ to construct a guess $m'\in M'$ of the original input. Concatenating the shaper and channel gives a new effective channel ${\mathcal{N}}'$, for which an encoder/decoder pair $(\mathsf{Enc}',\mathsf{Dec})$ can be constructed by repurposing an information reconciliation scheme that operates on the joint input-output $UB$ of the channel. Ultimately, the shaper can instead be regarded as part of the encoder $\mathsf{Enc}$, which is formed by concatenating $\mathsf{Enc}'$ and $\mathsf{Shp}$.
}
\end{figure}

Using the smooth entropy results on structureless resources we can determine the (logarithm of the) raw number of messages Alice can reliably send to Bob, instead of the rate as appropriate to the i.i.d.\ setting. The details of the derivation are given in~\citeme{renes_noisy_2010}, and the result is that Alice can reliably transmit $N$ bits to Bob, for 
\begin{align}
\label{eq:classcap}
N_{\rm class}\approx \max_{P_Z}\left[H_{\rm min}^{\epsilon}(Z)_\psi-H_{\rm max}^\epsilon(Z|B)_\psi-O(\log\tfrac1\epsilon)\right]. 
\end{align}
Here $\epsilon$ characterizes the worst-case error probability of the coding scheme, and this expression 
agrees with a result found for classical channels found by Renner\etalsp\cite{renner_single-serving_2006}.\footnote{Wang and Renner have recently derived a one-shot result for classical communication over quantum channels via a different method~\cite{wang_one-shot_2010}.} This result applies to completely arbitrary channels, but when Alice and Bob would like to communicate using $n$ uses of a memoryless channel we can appeal to the asymptotic equipartition property (AEP) of the smooth min- and max-entropies, proven by Tomamichel\etalsp\cite{tomamichel_fully_2009}. Roughly speaking, it states that $H_{\rm min}^\epsilon(Z|B)_{\psi^{\otimes n}}\approx nH(Z|B)_\psi$ and similarly for the max-entropy. We then recover the rate given by the HSW theorem; for channels with purely classical outputs, i.e.\ quantum states which all pairwise commute, we recover Shannon's noisy channel coding theorem~\cite{shannon_mathematical_1948}. 

Besides an appealing modular proof of the noisy channel coding problem based on the simpler primitives of randomness extraction and information reconciliation, another appeal of this approach is that by using privacy amplification instead of just randomness extraction for the distribution shaper, we automatically obtain a construction suitable for private communication of classical information over a noisy quantum channel. In that case we find that the (logarithm of the) number of private messages which can be reliably sent is given by 
\begin{align}
\label{eq:privcap}
N_{\rm priv}\approx  \max_{P_Z}\left[H_{\rm min}^{\epsilon}(Z|E)_\psi-H_{\rm max}^\epsilon(Z|B)_\psi-O(\log\tfrac1\epsilon)\right], 
\end{align}
where system $E$ is the ``other half'' of the channel output. That is, upon input of $z$ the channel produces the pure state $\ket{\varphi_z}^{BE}$ shared between Bob and the environment or eavesdropper. As before, an application of the AEP recovers the rate relevant in the asymptotic i.i.d.\ setting, namely $\max_{P_Z}\left[H(Z|E)_\psi-H(Z|B)_\psi\right]$. This agrees with the findings of Devetak~\cite{devetak_private_2005} for quantum channels, and those of Wyner~\cite{wyner_wire-tap_1975}, Ahlswede and Csiszar~\cite{ahlswede_common_1993}, and Maurer and Wolf~\cite{maurer_information-theoretic_2000} for classical channels. 

Finally, we note that combining this proof technique with the duality between information reconciliation and privacy amplification it is possible to prove that reliable communication is possible by ensuring that not too much information leaks to the environment. This decoupling approach was heretofore unknown to work for channel coding of classical information, and in fact this was the one major protocol not known to be amenable to a decoupling analysis. The encoding and decoding procedure is precisely the same as before, using a distribution shaper and information reconciliation to create an encoder and decoder. But instead of relying on constructions of information reconciliation protocols, we use privacy amplification and duality. Thus, the size of the code is fixed by how much privacy amplification is needed for the observable conjugate to the uniform input $U$, and is therefore given by a smooth min-entropy. Using the uncertainty principle for smooth entropies formulated in~\cite{tomamichel_uncertainty_2011} we can relate this to the smooth max entropy of $U$ conditioned on $B$, and obtain again Equations~(\ref{eq:classcap}) and (\ref{eq:privcap}).

\chapter{Security of Quantum Key Distribution}
\label{chap:qkd}
Quantum key distribution is one of the major current applications of quantum information processing, requiring only minimal ability to coherently manipulate quantum information. Devices implementing QKD protocols such as BB84 are even currently available commercially. But where does the security of QKD come from? That is to say, how can we prove that a given protocol is truly secure and no would-be eavesdropper has any information about the key? 

There have been three main approaches to answering this question, each with its own advantages and disadvantages, which we briefly describe in the first of three sections in this chapter. In the second section we follow one of these methods, treating QKD as a means for virtual creation of entanglement as described in Section~\ref{sec:skd}, and recount the results of~\citeme{renes_generalized_2006} showing that it applies to a wide class of protocols, not just the original BB84 scheme. 

From Section~\ref{sec:secretkey} we know that entanglement is not strictly necessary for generating secret keys, and that in general private states suffice. In the third section of this chapter we describe how alterations to the BB84 protocol which improve the maximum tolerable error rates can be understood as part of a virtual private state distillation scheme, and that combining this additional step with similar enhancements to quantum error-correction lead to still better tolerable error rates. This work was first reported in~\citeme{renes_noisy_2007, smith_structured_2008,kern_improved_2008}. 

\section{Notions of Security}
The first proofs of \emph{unconditional} security of the BB84 protocol---that is, security of the protocol under arbitrary attacks on the public quantum channel by the eavesdropper Eve---were given by Biham\etalsp\cite{biham_proof_2000,biham_proof_2006} and Mayers~\cite{mayers_quantum_1996,mayers_unconditional_2001}. 
Their methods are similar, and essentially rest on an implicit use of the uncertainty principle to bound Eve's information about Alice's key by Bob's information about the conjugate basis to the key.\footnote{Both of their formal statements make use of a related result by Yao~\cite{yao_security_1995}.} Biham\etal characterized the security as due to an information-disturbance tradeoff, the fact that  eavesdropper cannot acquire information about Alice's signals without disturbing them. Such a tradeoff follows immediately from Equation~(\ref{eq:jcbjmr}), as to be able to gain information about e.g.\ the phase without disturbing the amplitude information would imply a violation of the entropic bound. 

At the same time, efforts to base the security of QKD on virtual entanglement distribution as described in Section~\ref{sec:qkd} were underway, culminating in Shor and Preskill's proof for BB84 shortly after the two mentioned above. Their proof was a good deal simpler than the earlier versions, and achieved a higher error 
threshold, the maximum error rate at which the protocol can still safely generate secret keys (albeit at vanishingly small rates). The new proof established a threshold of 11\%, the previous proofs 7.56\%. The simplicity also enabled the method to be extended to other protocols. Lo~\cite{lo_proof_2001} established the unconditional security of the six-state protocol proposed by Bruss~\cite{bru_optimal_1998} which uses the eigenstates of the $XZ$ operator as signals in addition to those of $X$ and $Z$. Tamaki, Koashi, and Imoto~\cite{tamaki_unconditionally_2003} extended the method to a proof of Bennett's two-state protocol (B92)~\cite{bennett_quantum_1992-1}, while Gottesman and Lo showed that it could also treat information reconciliation steps involving two-way communication~\cite{gottesman_proof_2003}, greatly increasing the error rate tolerable by BB84 to 18.9\%. Boileau\etal (including the present author)~\cite{boileau_unconditional_2005} proved the security of a B92-like protocol involving three states which was originally proposed by Phoenix\etalsp\cite{phoenix_three-state_2000}.

The original approach of Biham\etal and Mayers has its own advantages within the realm of the BB84 protocol, however, as it is not actually concerned with the details of Bob's 
measurement apparatus, only Alice's preparation device. This can be anticipated from the implicit use of the uncertainty principle: From Equation~(\ref{eq:jcbjmr}) it is clear that to bound Eve's knowledge of the key it suffices to have a bound on Bob's knowledge of the conjugate observable to the key. It is not necessary to have an accurate physical description of how he comes by such knowledge, which greatly extends the practicality of the proof. Koashi and Preskill combined techniques from both  methods to treat the problem of an uncharacterized source~\cite{koashi_secure_2003} (but characterized detector), and later Koashi gave an even simpler proof which was the first to quantitatively appeal to the uncertainty principle~\cite{koashi_unconditional_2006,koashi_simple_2009}. Although the proof itself is constructed via other means, Koashi used the Maassen and Uffink relation, Equation~(\ref{eq:maassen}), as a guide to determine the size of the secret key. Very recently, Tomamichel\etalsp\cite{tomamichel_tight_2011} have directly used the smooth entropy uncertainty relation of~\cite{tomamichel_uncertainty_2011} to give a simple security proof of BB84 with uncharacterized detectors.

Meanwhile, a third general approach focused on showing that privacy amplification produces secure keys even when the adversary holds quantum instead of classical information. 
 To make use of privacy amplification one then needs to characterize the quantum states held by the eavesdropper, or at least give a bound on the size of their overall support. Ben-Or showed that a result from quantum communication complexity implies the efficacy of privacy amplification and that the knowledge gained by Alice and Bob in the BB84 protocol can be used to bound the effective size of Eve's system~\cite{ben-or_security_2002}. K\"onig\etal demonstrated that privacy amplification works against quantum adversaries generally~\cite{koenig_power_2005}, and Christandl\etal developed this into a generic security proof which replicated the one-way results above, even improving the threshold for the B92 protocol~\cite{christandl_generic_2004}. Kraus, Gisin, and Renner~\cite{kraus_lower_2005,renner_information-theoretic_2005} extended this to establish that many protocols are not only unconditional secure, but also safely \emph{composable} with other cryptographic primitives to create larger cryptographic schemes which are themselves secure, following composability results by Renner and K\"onig~\cite{renner_universally_2005} and Ben-Or\etalsp\cite{ben-or_universal_2005}. Renner provided another method also suitable for two-way protocols in his thesis~\cite{renner_security_2005}.

It should be noted that the task of key \emph{distribution} is considerably more involved than the task of key \emph{distillation} as discussed in Section~\ref{sec:skd}, and the security issue all the more complex. There the input state shared by Alice and Bob is known in advance, and moreover it is assumed to consist of $n$ copies of some state $\psi$. Neither of these statements hold in general in the present context, for although Alice sends $n$ quantum systems to Bob, these travel over an insecure communication channel which could in principle be under the control of the would-be eavesdropper Eve. The difficulty lies in the fact that the eavesdropper could in principle attack all the signals jointly, what is termed a \emph{coherent attack}. If Eve attacked each signal separately, a \emph{collective attack}, then Alice's and Bob's state would have the aforementioned i.i.d.\ form, and could be handled by those methods. 

Unsurprisingly, then, one widely-used method of handling coherent attacks is to reduce them in some way to collective attacks. Originally this was done on a more \emph{ad hoc} basis for particular protocols, but has been made more systematic by Renner~\cite{renner_security_2005,renner_symmetry_2007}, culminating in a very general statement by Christandl\etalsp\cite{christandl_postselection_2009}. This states that as long as the key distribution protocol is unconcerned with the order in which Alice transmits the signals, which can be enforced by arbitrarily permuting them, then security against collective attacks implies security against coherent attacks.  

\section{Entanglement in Prepare and Measure QKD}
\label{sec:entqkd}
Quantum key distribution can be formulated as a virtual entanglement distribution scheme for a wide class of protocols and the Shor-Preskill approach used to prove the their security. In this section we briefly sketch out how this can be done, following~\citeme{renes_generalized_2006} and simplifying some issues in light of intervening research advances. The main conceptual difficulty in considering protocols other than BB84 in the Shor-Preskill framework is that it appears as if the CSS structure of information reconciliation and privacy amplification are directly related to the use of amplitude and phase eigenstates as the signals and for measurement. However, this is not actually the case, and in fact these two parts of the protocol have nothing to do with each other. 
This was already noted in the proofs by Tamaki\etalsp\cite{tamaki_unconditionally_2003} and Boileau\etalsp\cite{boileau_unconditional_2005}, but~\citeme{renes_generalized_2006} show how it can be made to work more generally.

First let us settle on the general framework of prepare and measure protocol. A generic protocol consists of five main stages. First Alice prepares quantum states and transmits them over the insecure quantum channel to Bob, who measures them; this is the only step in which quantum operations are actually needed. Second, they transform their classical transmission and measurement records to a prospective \emph{raw key}. This step is usually called sifting, after the specific mapping used in BB84, and usually the transformation is chosen so that the raw key would be a truly secret key if the quantum channel were noiseless. 

As real channels are inevitably noisy, Alice and Bob need to distill a truly shared, secret key from the raw key. In stage three, parameter estimation, they compare some random subset of the raw key to determine the likely number of errors. This serves two purposes. In the fourth stage, information reconciliation, they use the knowledge from parameter estimation to agree on an identical refined key. Usually this involves Bob reconciling his raw key to Alice's, hence the name. Finally, they also use this knowledge to perform privacy amplification and thereby generate the final secure key.  

The trick to applying the Shor-Preskill framework more generally is to first formulate the prepare and measure process coherently, i.e.\ in quantum-mechanical language, and then regard Alice's and Bob's systems in this setting as being composed of two virtual subsystems. One subsystem (quantumly) records  the key value, while the other (quantumly) records the sifting information. The sifting stage can then be seen as a measurement of the latter subsystems, plus postselection by public communication to select appropriately matching sifting outcomes. The virtual key subsystems remain, and it is their entanglement which is at issue in the Shor-Preskill framework. The amount of entanglement, and thus secret key, which can be distilled may be estimated by making use of the symmetries of the signal states and measurement. 

We can illustrate this most easily using the BB84 protocol itself and then describe how it can be made to work more generally. As discussed in Section~\ref{sec:qkd}, the BB84 protocol can be described coherently by pretending that Alice first creates EPR pairs and then sends one subsystem of each pair to Bob. Here, however, it is more appropriate to describe each signal sent by Alice as her preparation of the state
\begin{align}
\label{eq:qkdstart}
\ket{\psi_0}=\tfrac 12\sum_{j,k} \ket{j}^{A_K}\ket{k}^{A_S}\ket{\xi_{jk}}^{B},
\end{align} 
and transmission of the $B$ subsystem to Bob. The indices $j$ and $k$ specify the eigenvalue and observable, respectively, of the state $\ket{\xi_{jk}}$ transmitted by Alice; $k=0$ denotes amplitude $Z$ and $k=1$ phase, while the eigenvalue is given by $(-1)^j$. Bob makes a random measurement of the two observables, which can be described by the isometry $U_{\mathcal{M}}^{B\rightarrow B_KB_S}=\tfrac{1}{\sqrt{2}}\sum_{jk}\ket{j}^{B_K}\ket{k}^{B_S}\bra{\eta_{jk}}^B$, where here $\ket{\eta_{jk}}=\ket{\xi_{jk}}$ but the distinction will be useful later. For a noiseless channel, his measurement process results in the state
\begin{align}
\ket{\psi_1}^{A_KA_SB_KB_S}=\tfrac{1}{\sqrt{8}}\sum_{jj'kk'} \ket{j}^{A_K}\ket{k}^{A_S} \ket{j'}^{B_K}\ket{k'}^{B_S}\braket{\eta_{j'k'}|\xi_{jk}}.
\end{align}

From the form of the inner products $\braket{\eta_{j'k'}|\xi_{jk}}$ one can easily work out that if Alice and Bob each measure their $S$-labeled subsystems and obtain the same result, the remaining $K$-labeled systems are in the state $\ket{\Phi}^{A_KB_K}$ and thus measurement produces a secret key. This mimics the sifting process of the actual protocol, as Alice and Bob perform the measurements separately and compare their results by public discussion. 
Also crucial is the fact that the overall probability distribution for signals and measurement outcomes found here is precisely the same as in the prepare and measure scheme. Thus, this state has the form claimed above: It provides a coherent description of the real protocol in which Alice and Bob each have key $K$ and sifting $S$ subsystems, and sifting is accomplished by local measurement of the latter subsystems and postselection. 

Noisy channels require the additional steps of parameter estimation, information reconciliation, and privacy amplification, but change the above picture only slightly. Describing the channel resulting from Eve's attack by its decomposition into Kraus operators, and assuming the attack is collective, the state $\ket{\psi_1}$ is altered by the noise to
\begin{align}
\label{eq:psi1p}
\ket{\psi_1'}^{A_KA_SB_KB_SE}=\tfrac{1}{\sqrt{8}}\sum_{jj'kk'\ell} \ket{j}^{A_K}\ket{k}^{A_S} \ket{j'}^{B_K}\ket{k'}^{B_S}\ket{\ell}^E\braket{\eta_{j'k'}|E_\ell|\xi_{jk}}.
\end{align}

In the sifting stage, Alice and Bob keep only the cases in which $k=k'$ and subsequently discard the information specifying which value of $k$ they observed. We can model this process as keeping only the $k=k'$ terms in (\ref{eq:psi1p}) and then giving the $A_S$ and $B_S$ systems to Eve. Alice and Bob keep only the raw key, and the state becomes (slightly redefining $E$)
\begin{align}
\label{eq:postsift}
\ket{\psi_2'}^{A_KB_KE}\propto \sum_{jj'k\ell} c_{jj'}^{k\ell}\ket{j}^{A_K}\ket{j'}^{B_K}\ket{k,\ell}^E,\qquad c_{jj'}^{k\ell}=\braket{\eta_{j'k}|E_\ell|\xi_{jk}}.
\end{align}

Following the Shor-Preskill idea, as generalized in Section~\ref{sec:skd}, Alice and Bob can construct the information reconciliation and privacy amplification protocols necessary to turn the raw key into a secret key once they are able to estimate $p_{\rm guess}(Z^{A_K}|B_K)_{\psi_2'}$ and $p_{\rm guess}(X^{A_K}|C_ZB_K)_{\psi_2'}$. A bound on the former is given directly by parameter estimation, but the latter is not so straightforward. The joint state of the key systems is determined via the coefficients $c_{jj'}^{k\ell}$, creating a connection between the two guessing probabilities, albeit in general a not at all straightforward one. The structure of the sifting and of the signals and measurements greatly simplifies the connection, and makes it possible to find useful bounds on the latter guessing probability as a function of the former. This enables Alice and Bob to construct the remainder of the protocol to be provably secure. 

For BB84, one finds by direct calculation that $p_{\rm guess}(Z^{A_K}|Z^{B_K})=p_{\rm guess}(X^{A_K}|X^{B_K})$ regardless of the value of $\ell$. That is, the correlation in the amplitude basis (which gives the key itself) is precisely the same as the correlation in the phase basis (conjugate to the key). This was to be expected from the original coherent description of BB84 which explicitly uses EPR pairs from the beginning, since half the time the key comes from the original amplitude basis, and half the time from the phase basis, so the correlations ought to be the same. Using this relationship in the formula for the rate of secret key distillation, Equation~(\ref{eq:k1}) (ignoring $\mathcal{Q}$ and $T$), we recover the rate $r_{\rm BB84}=1-2h_2(\delta)$, for $\delta$ the observed error rate in the raw key and $h_2(\delta)=-\delta\log_2\delta-(1-\delta)\log_2(1-\delta)$ the binary entropy, which leads to the threshold of 11\%. Security against general coherent attacks is then ensured by the result of Christandl\etalsp\cite{christandl_postselection_2009}.

A great advantage of the above approach is the modularity of the security proof. The  details of the signals, measurements, and sifting are logically completely separate from the details of information reconciliation and privacy amplification. The former enter only into the coefficients $c_{jj'}^{k\ell}$, which are used to select a CSS code for the latter. This approach is developed in \citeme{renes_generalized_2006} as a generalization of that used by Tamaki\etalsp\cite{tamaki_unconditionally_2003} and Boileau\etalsp\cite{boileau_unconditional_2005}, and it is shown that it applies to a wide class of protocols, 
particularly those based on so-called \emph{equiangular spherical codes}. These are are constellations of pure states $\ket{\xi_j}$ whose pairwise overlaps are all identical, as in the three-state protocol of Phoenix~\cite{phoenix_three-state_2000} mentioned above, and were adapted for use in QKD generally by the author~\cite{renes_frames_2004,renes_spherical-code_2004,renes_equiangular_2005}. The other main contribution of \citeme{renes_generalized_2006} is the development of a method of exploiting the symmetries of the sifting scheme and the signal and measurement states to simplify this task, relying on results from group representation theory.

To see how this works, consider the protocol in which Alice's signals are four qubit states for which $|\braket{\xi_j|\xi_k}|^2=\frac{1}{3}$, as described in~\cite{renes_spherical-code_2004}. These form a regular tetrahedron in the Bloch-sphere representation of a qubit, and Bob's measurement is comprised of appropriately-normalized projectors onto the states $\ket{\eta_k}$ for which $\braket{\eta_k|\xi_k}=0$, i.e.\ the inverse tetrahedron in the Bloch-sphere. Due to symmetry, Bob's measurement would randomly reveal one state which Alice did not send if the channel were noiseless, and so the information exchanged by Alice in the sifting stage consists of a random choice of two states she did not send. 

In one-third of cases these two pieces of information specify which state she did send, and Bob publicly announces that he has successfully decoded the transmission. From this they generate one secret bit corresponding to which of the two signals Alice did send, given the public exclusion of two of the initial possibilities. There are 12 possible announcements by Alice, since she must also specify how the two possible signal states are to be decoded into the raw key, and we may label the signal states by the combination of sifting announcement and raw key value. In this way each signal is counted six times, but this presents no difficulty as each is counted the same number of times.  Much the same holds for Bob, and so the state in Equation~(\ref{eq:qkdstart}) can be used to describe the protocol coherently. 

The remaining task is to use the $c_{jj'}^{k\ell}$ to bound $p_{\rm guess}(X^{A_K}|C_ZB_K)$ in terms of $p_{\rm guess}(Z^{A_K}|Z^{B_K})$. By exploiting symmetries of the QKD protocol as in~\citeme{renes_generalized_2006}, we can greatly simplify this task. Suppose that the sifting step of the protocol is such that there exist unitaries $U_k$ and $V_k$ for which $\ket{\xi_{jk}}=U_k\ket{\xi_{j0}}$ and $\ket{\eta_{jk}}=V_k\ket{\eta_{j0}}$. Then the $c_{jj'}^{k\ell}$ become $
c_{jj'}^{k\ell}= \braket{\eta_{j'0}|V^\dagger_k E_\ell U_k|\xi_{j0}}$. Now let us focus on a particular Kraus operator $E_\ell$ by fixing the value of $\ell$, but average over the value of $k$, which corresponds to Alice and Bob throwing away the information specifying which particular sifting map they applied. Their shared state given the value of $\ell$ has the form
\begin{align}
\psi_\ell^{A_KB_K}\propto\sum_{ii'jj'}\ket{ii'}\bra{jj'}^{A_KB_K}\sum_k \braket{\eta_{i'0}|V^\dagger_k E_\ell U_k|\xi_{i0}}\braket{\xi_{j'0}|U^\dagger_k E_\ell^\dagger V_k|\eta_{j0}}.
\end{align}

Examining the form of the matrix elements, we see that the sifting symmetries $U_k$ and $V_k$ create an effective channel having Kraus operators $V^\dagger_k E_\ell U_k$. Moreover, the group nature of these operators enables us to compute the action of the channel by appealing to representation theory.  In the particular case of the tetrahedral protocol, one finds that the 
effective channel is just a depolarizing channel, irrespective of the value $\ell$. The depolarizing rate can be determined by the noise rate observed in the parameter estimation phase.
 Computing the state after the sifting step reveals that Alice and Bob can describe their shared key state by a Bell-diagonal state $\psi^{A_KB_K}=\sum_{jk}p_{jk}\ket{\beta_{jk}}\bra{\beta_{jk}}^{A_KB_K}$, as in Equation~(\ref{eq:belldiagonal}), with the $p_{jk}$ satisfying $p_{01}=p_{11}=2p_{10}$. 

This implies $p_{\rm guess}(Z^{A_K}|Z^{B_K})=\delta$ and $p_{\rm guess}(X^{A_K}|X^{B_K},Z^{A_K}=Z^{B_K})=\frac 13$ while $p_{\rm guess}(X^{A_K}|X^{B_K},Z^{A_K}\neq Z^{B_K})=1-2\delta/3(1-\delta)$. The latter guessing probabilities are directly related to $p_{\rm guess}(X^{A_K}|C_ZB)$ since Bob's knowledge of $Z^{A_K}$ stored in $C_Z$ can be equivalently thought of as the information as to whether or not an amplitude error $Z^{A_K}\neq Z^{B_K}$ occurred or not. Using these guessing probabilities in Equation~(\ref{eq:k1}), we obtain the rate $r_{\rm tetra}=1-h_2(\delta)-\delta h_2(\frac 13)-(1-\delta)h_2(2\delta/3(1-\delta))$, which has a threshold of 11.56\%.  In~\citeme{renes_generalized_2006} the method is applied to several other spherical code protocols with signal and measurement states having Hilbert space dimension three. 

\section{Private States in Quantum Key Distribution}
\label{sec:psqkd}
By clever if perhaps unintuitive choice of \emph{preprocessing} operations $\mathcal{Q}$ in Equations~(\ref{eq:k1}) and (\ref{eq:regK}) the error thresholds of QKD can be pushed higher than those found by the Shor-Preskill method alone. Understanding how this can be the case requires interpreting QKD as a virtual means of private state distillation rather than just entanglement distillation, as first shown in~\citeme{renes_noisy_2007}. Furthermore, the private state distillation approach suggests that it would be beneficial to combine two types of preprocessing operations previously studied, and this was indeed shown to be the case for the BB84 protocol in~\citeme{smith_structured_2008}. Further improvements and an extension of the method to the six-state protocol were reported in~\citeme{kern_improved_2008}, and we describe both of these results here. 

That private state distillation is actually needed to give a fully quantum-mechanical description of QKD was necessitated by the work of Kraus, Gisin, and Renner~\cite{kraus_lower_2005,renner_information-theoretic_2005}. They established the seemingly-paradoxical result that the noise threshold of BB84 can be improved if Alice randomly flips some of her raw key bits before performing the final three steps of the protocol, and they reported a threshold improvement from 11\% to 12.4\%. From the viewpoint of QKD as a virtual scheme for entanglement distillation this additional step would seem to be counterproductive, as noise inflicted by Alice behaves the same as noise inflicted by Eve. However, we saw in Chapter~\ref{chap:char} that entanglement is not actually necessary for secret key creation, private states are. This raises the question of whether or not one can view the noisy preprocessing step as part of a virtual scheme for private state distillation, which \citeme{renes_noisy_2007} answers in the affirmative. 

The crux of understanding such noisy preprocessing in a private state picture is to include the system Alice uses to impart the noise to her raw key and observe that it functions as a shield system. The overhead in the protocol of additional information reconciliation needed due to the noisy preprocessing is then more than made up for by a reduction in the required amount of privacy amplification. The particular guessing probabilities found in the previous section imply that the state of Alice's and Bob's raw keys immediately after the sifting stage takes a Bell-diagonal form in which the probabilities of amplitude and phase error are independent and equal. That is, in the state $\psi^{A_KB_K}=\sum_{jk}p_{jk}\ket{\beta_{jk}}\bra{\beta_{jk}}^{A_KB_K}$, one has $p_{00}=(1-\delta)^2$, $p_{10}=p_{01}=\delta(1-\delta)$, and $p_{11}=\delta^2$ for $\delta$ the probability of amplitude (or phase) error. 

Now suppose that Alice randomly flips each raw key bit independently with some probability $q$. This process may be modelled as a \textsc{cnot} gate whose control is an ancillary system $A'$ prepared by Alice in the state $\ket{\varphi}=\sqrt{1-q}\ket{0}+\sqrt{q}\ket{1}$ and whose target is her raw key $A_K$. The error rate in Alice's and Bob's keys has jumped to $\delta'=\delta(1-q)+q(1-\delta)$, but the crucial difference from the entanglement distillation scenario is that for security it is relevant how well $A'$ and $B_K$ together can predict $X^{A_K}$, not merely how well $B_K$ could alone. 

The resulting state of $A_KA'B_K$ can be used to compute $H(X^{A_K}|A'B)$ for use in Equation~(\ref{eq:k1}); observe that we do not need to make use of the $C_Z$ system here because knowing if there is an amplitude error tells Bob nothing about the likelihood of a phase error. Using Equation~(\ref{eq:k1}) and optimizing over the choice of $q$ we recover the threshold of 12.4\%. A similar calculation (now requiring the use of $C_Z$) recovers the six-state threshold of 14.1\%. Actually, \citeme{renes_noisy_2007} follows a different approach than what we have outlined here, directly constructing the twisting operator, but this can be seen as a particular case of the general results on private states and secret key distillation presented in Sections~\ref{sec:secretkey} and~\ref{sec:skd}. 

In his security proof of the six-state protocol, Lo observed~\cite{lo_proof_2001} that the threshold can be improved from the nominal 12.6\% one would find following the Shor-Preskill method to 12.7\% by employing so-called \emph{degenerate} quantum error-correcting codes first discussed by DiVincenzo, Shor, and Smolin~\cite{divincenzo_quantum-channel_1998}. This code consists of a concatenation of an amplitude repetition code with a random CSS code and in the present context corresponds to a preprocessing operation $\mathcal{Q}$ on blocks of inputs, as in Equation~(\ref{eq:regK}). 

The degeneracy of the code refers to the fact that several different errors can share the same recovery operation and the syndrome need only reveal which recovery operation is required, a phenomenon which is not possible for classical error-correcting codes. For example, in the amplitude repetition code of Section~\ref{subsec:compqecc}, the three possible phase errors acting on single qubits all have the same effect on the encoded quantum information, namely as a phase flip. Thus, if we concatenate the repetition code with another code, we need not determine the precise location of a phase error on the physical qubits. Reducing the number of stabilizers needed to enable correction of phase errors implies a reduction in the necessary amount of privacy amplification in the context of QKD, and thus the threshold increases. 

Shor's nine-qubit code described in Section~\ref{subsec:shor9} provides a simple example. There we considered the effect of a single phase flip error on the fourth qubit and found that it would be detected by measuring certain stabilizer operators. But it is clear from the argument there that the same result is obtained for a phase error on either qubit five or six. This is reflected in the fact that associated with the code are are six amplitude stabilizers and only two phase stabilizers. The former determine the precise location of an amplitude error, but the latter only fix the location of the phase error up to the position in the block. This is all that is necessary. 

It is possible to combine the noisy preprocessing discussed above with degenerate codes to improve the threshold of BB84 still further, as described in~\citeme{smith_structured_2008}. The original protocol is modified as follows. After the raw key is created in the sifting phase, Alice performs a noisy preprocessing step in which she independently flips each raw key bit with some probability $q$. Then she computes the syndromes of an amplitude repetition code encoding one qubit into $m$ qubits, i.e.\ $z_1\oplus z_2, z_1\oplus z_3,\dots,z_1\oplus z_m$ and transmits these publicly to Bob. The first bit of each block she saves for further use as the key. Bob then computes the syndromes of his block, and attempts to correct his key bit so that the syndromes match Alice's, exactly as was done in the entanglement distillation protocol discussed in Section~\ref{sec:ed1}. Alice and Bob then repeat this process for many blocks, collecting one key bit per block. On these refined key bits they then perform information reconciliation and privacy amplification as needed. 

To determine the threshold, for which the main difficulty is, as usual, to determine the amount of privacy amplification needed, it is simpler to focus on Eve's states and compute $H(\overline{A}_K|ES)$, where $S$ denotes the syndrome information and $\overline{A}_K$ the key bit encoded in the repetition code. Again the symmetries of the problem enable the use of group representation theory to make the calculation numerically tractable, allowing thresholds for blocklengths in the hundreds to be determined. The best threshold found in~\citeme{smith_structured_2008} is 12.9\%, corresponding to $q\approx 0.32$ and $m=400$. A more elaborate analysis is required for the six-state protocol, and this is carried out in~\citeme{kern_improved_2008}, with the result that the threshold is at least 14.59\%. Additionally, the effects of iterating the entire noisy preprocessing plus repetition code procedure are investigated therein, and this is found to offer substantial increases in the key distribution rate of the protocol at high error rates, though the overall threshold is not as large.  

As mentioned previously, the use of repetition codes is a type of blockwise preprocessing, in contrast to the noisy preprocessing which is applied to single key bits. As  blockwise preprocessing is more complicated, and the expression for the optimal rate for secret key distillation, Equation~(\ref{eq:regK}), essentially impossible to evaluate, the question arises whether blockwise preprocessing, i.e.\ regularization are truly necessary. Unfortunately, the answer is yes, as observed in~\citeme{smith_structured_2008}. One can show that the threshold found by noisy preprocessing, 12.4\%, is the optimal threshold using single-bit, or single-letter preprocessing. Since the combination of noisy processing and repetition codes leads to a higher threshold, regularization must in general be necessary. This result then applies to the private capacity of a channel as well, since one way to communicate privately is to first generate secret keys and then encrypt the actual message to be sent. 

Thus, neither the secret key distillation rate nor the private capacity are \emph{single-letterizeable} quantities. This reveals a large distinction between classical and quantum information theory, as single-letter quantities are usual in the former, reflecting the fact that the random coding arguments of Shannon are optimal in a wide variety of situations. In quantum information theory this is no longer true. The degenerate codes found by DiVincenzo, Shor, and Smolin~\cite{divincenzo_quantum-channel_1998} show that the quantum capacity is also not single-letterizeable, while Hastings has recently established that the classical capacity of a quantum channel is not single-letterizeable either~\cite{hastings_superadditivity_2009}. Despite the apparent similarities with classical information theory, a full understanding of quantum information theory will require the development of tools beyond the usual random coding methods.

\chapter{Summary and Outlook}
The preceding six chapters demonstrate that, far from being just an abstract mathematical study, the study of quantum information theory is quite closely connected to core physical concepts, namely complementarity and the uncertainty principle. Indeed, although we have presented the topics of this thesis in a logical order, it was actually research into secret key distillation in~\citeme{renes_physical_2008} that led to the conjecture of the entropic uncertainty principle of Equation~(\ref{eq:jcbjmr}) in~\citeme{renes_conjectured_2009} and its eventual proof in~\citeme{berta_uncertainty_2010}. 

The results described in this thesis spring from trying to make sense of what it means to have ``quantum'' information, working within the formalism of quantum theory itself. Using the conditional entropy $H(Z^A|B)_\psi$ we can describe the information held by $B$ about the amplitude measurement $Z$ on system $A$, when $A$ and $B$ are jointly in the quantum state $\psi^{AB}$. Having quantum information then refers to the situation in which $B$ implicitly contains information about two complementary observables $X$ and $Z$, and the uncertainty principle in the form $H(X^A|B)+H(Z^A|C)\geq \frac{1}{c}$ 
constrains the extent to which information about both can be simultaneously explicitly realized. Quantum information processing protocols can then be constructed by mimicking related classical information processing protocols for the two complementary pieces of classical information, taking care not to violate the uncertainty principle. 

Although complementarity is at the heart of the results presented herein, to complete the proofs we have also relied heavily on certain algebraic properties both of the observables $X$ and $Z$ as defined in Equation~(\ref{eq:wh}) and of the attendant CSS stabilizer codes. In particular, the algebraic properties of the amplitude and phase observables play important roles in Theorems~\ref{thm:entdec}, \ref{thm:2xdecoupling}, \ref{thm:entropic}, and \ref{thm:compps}, while the algebraic structure of CSS codes is used extensively throughout Chapters~\ref{chap:proc}, \ref{chap:duality}, and \ref{chap:qkd}. 
Removing the algebraic requirement on the observables is precisely the difference between the uncertainty principle results of~\citeme{renes_conjectured_2009} and~\citeme{berta_uncertainty_2010}, and a major goal of future work is to remove this requirement from the aforementioned results as well. The situation is akin to difference between the heuristic use of the uncertainty principle in the early proofs of QKD, where the uncertainty principle provided guidance for the actual algebraic arguments, and the recently formulated BB84 security proof of Tomamichel\etalsp\cite{tomamichel_uncertainty_2011,tomamichel_tight_2011} based directly on the uncertainty principle formulated in terms of smooth-entropy.

This goal is likely to be fairly straightforward for the results of Chapter~\ref{chap:char}, but the use of CSS codes in the protocols of the subsequent chapters appears much more central to those results. The difficulty lies in the need to combine classical protocols for complementary observables in such a way that all the important quantities can actually simultaneously exist, i.e.\ the corresponding operators all commute. In the entanglement distillation scheme of Chapter~\ref{chap:proc} for instance, the use of CSS codes ensures that the syndrome information needed to establish strong phase correlations does not interfere with either the amplitude syndromes nor the final encoded amplitude. 

Another goal of future work will be to extend all the results beyond the realm of asymptotic i.i.d.\ resources and into the one-shot domain of structureless resources briefly described in Section~\ref{subsec:oneshot}. Here we have presented optimal protocols in the former scenario, but it is not clear whether this will be possible in the more general setting. One cause for hope is that the uncertainty principle already plays a fundamental role in the one-shot setting. Tomamichel\etalsp\cite{tomamichel_duality_2010} have shown that the smooth min- and max-entropies are not independent: One may be defined in terms of the other using a purification system. The smooth entropy uncertainty relation then follows 
from this duality~\cite{tomamichel_uncertainty_2011}. 

Finally, a much more ambitious goal is to extend the notion of quantum information as complementary classical information past the simple two-party communication scenarios studied here. Can this point of view shed some light into how quantum computers work?

\cleardoublepage
\backmatter

\bibliographystyle{habi.bst}
\bibliography{habi}
   

\cleardoublepage

\bibliographystyleme{alpha}
\bibliographyme{habi-me}
\cleardoublepage

\end{document}